\newcommand{\cid}{\overset{d}{\to}}
\newcommand{\indep}{\raisebox{0.05em}{\rotatebox[origin=c]{90}{$\models$}}}
\def\argmax{\mathop{\rm arg\, max}}
\def\R{\mathbb{R}}
\def\E{\mathbb{E}}
\def\P{\mathbb{P}}
\def\eps{\epsilon}
\def\Sig{\Sigma}
\def\sig{\sigma}
\def\lam{\lambda}
\def\gam{\gamma}
\newtheorem{condition}{Condition}[section]
\newtheorem{theorem}{Theorem}[section]
\newtheorem{corollary}{Corollary}[section]
\newtheorem{proposition}{Proposition}[section]
\newtheorem{lemma}{Lemma}[section]
\newtheorem{remark}{Remark}[section]
\def\calV{\mathcal{V}}
\def \dg{\triangledown g}
\def \hg{\triangle g}
\def \hprob{1-n^{-c}}
\begin{document}

\begin{titlepage}

  \title{\bf Causal Inference for Nonlinear Outcome Models with Possibly Invalid Instrumental Variables}
  \author[1]{Sai Li}
  \author[2,*] {Zijian Guo}
 
 \affil[1]{Institute of Statistics and Big Data, Renmin University of China, China.}
\affil[2]{Department of Statistics, Rutgers University, Piscataway, NJ 08854. }
\affil[*]{Corresponding author: Zijian Guo, zijguo@stat.rutgers.edu}
  \maketitle

\bigskip

\begin{abstract}
Instrumental variable methods are widely used for inferring the causal effect in the presence of unmeasured confounders. Existing instrumental variable methods for nonlinear outcome models require stringent identifiability conditions. This paper considers a flexible semi-parametric potential outcome model that allows for possibly invalid instruments. We propose new identifiability conditions to identify the causal parameters when the majority of the instrumental variables are valid. We devise a novel inference procedure for a new average structural function and the conditional average treatment effect. We establish the asymptotic normality of the proposed estimators and construct confidence intervals for the causal estimands by bootstrap. The proposed method is demonstrated in large-scale simulation studies and is applied to infer the effect of income on house ownership. 
\end{abstract}
\noindent 
{\it Keywords:}  unmeasured confounders; binary outcome; semi-parametric model; endogeneity; partial mean

\noindent
{\it JEL classification code}:  C36
\vfill
\end{titlepage}
\newpage

\section{Introduction}
\label{sec1}
Unmeasured confounders are significant concerns for inferring causal effects from observational studies. The instrumental variable (IV) approach is the state-of-the-art method for estimating the causal effects in the presence of unmeasured confounders \citep[e.g.]{wooldridge2010econometric}. 
The success of IV-based methods requires the IVs to
satisfy three core conditions: conditioning on the observed covariates,\\
 (A1) the IVs are associated with the exposure;\\
 (A2) the IVs have no direct effects on the outcome;\\
 (A3) the IVs are independent of unmeasured confounders. 
 
A significant challenge of applying IV-based methods is to identify valid instruments simultaneously satisfying (A1)-(A3) \citep[e.g.]{murray2006avoiding,conley2012plausibly}. Since (A2) and (A3) cannot be tested in a data-dependent way, it requires strong domain knowledge to identify valid IVs. In many applications, the relationship between the outcome and treatment is nonlinear. For example, when the outcome is binary, the treatment affects the outcome in a nonlinear way. Inference for the causal effect under the nonlinear outcome models is more challenging than that for the linear model. There is a pressing need to develop accurate causal inference methods in nonlinear outcome models with possibly invalid instruments.

\subsection{Existing works}

Semi-parametric approaches are widely used for causal inference with nonlinear outcome models. \citet{BP04} and \citet{Rothe09} considered the double-index models for binary outcomes by assuming a valid control function and made inferences for the model parameters. In practice, it may be more interesting to make inferences for the conditional average treatment effect (CATE) and the average structural function (ASF) \citep{blundell2003endogeneity}, which are measured on the same scale as the outcome variable. \citet{hahn2013asymptotic} studied the influence function of partial means estimators with semi-parametric outcome models and generated covariates. A typical example for generated covariates is a valid control function. By assuming a valid control function, the ASF can be estimated via the partial means method \citep{Newey94} and its asymptotic properties have been studied in \citet{mammen2012nonparametric}. However, the violations of the valid control function assumption are barely studied in the semi-parametric outcome models.

For causal inference with binary responses, parametric models such as probit and logistic outcome models \citep{Rivers88, Vans11} were studied with valid control functions and IVs. In recent work, \citet{carlson2021relaxing} considered causal inference in a probit outcome model and a parametric treatment model. They relaxed the valid control function condition in \citet{Rivers88} and allowed the unobserved confounders to depend on the IVs through a specific form. However, these parametric models required specific distributions of the unmeasured confounders, which can be misspecified in practical applications. The mixed-logistic model, stated in the following equation (\ref{eq: mixed logit}), is commonly used in observational studies \citep{CW12}. However, even assuming a valid IV, the two-stage method is known to be biased for the mixed-logistic model \citep{CD11} .

Under linear outcome models, some recent progress has been made in inferring the causal effects with possibly invalid IVs. With continuous outcome and exposure models, \citet{kolesar2015identification} and \citet{Bowden15} proposed methods in the setting that all candidate IVs can be invalid, but the IV strength and the direct effect on the outcome are nearly orthogonal. \citet{Bowden16}, \citet{Kang16}, and \citet{Wind19} proposed consistent estimators of causal effects assuming at least 50\% of the IVs are valid. \citet{Bowden17} and \citet{TSTH} considered linear outcome models assuming that the most common causal effect estimate is a consistent estimate of the true causal effect. Under this assumption, \citet{TSTH} constructed confidence intervals for the treatment effect, and \citet{Wind19b} further developed the inference procedure built on \citet{TSTH}. However, these methods are developed under the linear models and do not apply to nonlinear outcome models. 

In the GMM framework, \citet{liao2013adaptive,cheng2015select,caner2018adaptive} assumed the prior knowledge of a set of valid moments and considered selecting valid moments among another set of moments. \citet{ditraglia2016using} leveraged invalid moments to reduce the mean squared error with the prior knowledge of a set of valid moments. In contrast, the current paper does not assume any prior knowledge of the validity of any given instrument.

\subsection{Our results and contributions}
We propose a robust causal inference method for semi-parametric outcome models with possibly invalid IVs. We relax the valid control function condition (Condition \ref{cond: valid IC} in Section \ref{sec2-1}) by allowing for a violation in a semi-parametric form. It generalizes the linear violation forms considered in linear outcome models \citep[e.g.]{Kang16, TSTH}.

We impose the dimension reduction condition (Condition \ref{cond: dim red} in Section \ref{sec: identify}) and the majority rule (Condition \ref{cond: majority} in Section \ref{sec: identify}) as new identification conditions for semi-parametric outcome models with possibly invalid IVs. These new identifiability conditions significantly weaken the commonly used Condition \ref{cond: valid IC}. We show that the new identifiability conditions are sufficient to identify a new ASF and CATE. 

We propose a three-step inference procedure for CATE in \underline{\textbf{S}}emi-\underline{\textbf{p}}arametric \underline{\textbf{o}}u\underline{\textbf{t}}come models with possibly invalid \underline{\textbf{IV}}s, termed as SpotIV. First, we estimate the reduced-form parameters based on the existing dimension reduction methods. Second, we apply the median estimator to estimate the model parameters by leveraging that more than $50\%$ of candidate IVs are valid. Third, we develop a partial mean estimator for the causal estimand of interest. We further develop a self-checking method to partially test whether the majority rule is satisfied.

We establish  the asymptotic normality of our proposed estimator and construct confidence intervals for the causal estimands by bootstrap. We demonstrate our proposed method in simulations and apply the method to infer the causal effects of household income on whether a family owns a house based on the China Family Panel Studies (CFPS).

\subsection{Organization of the rest of the paper}

The rest of this paper is organized as follows. 
In Section \ref{sec2}, we introduce the model and identifiability conditions. In Section \ref{sec3}, the SpotIV estimator is proposed to make inference for ASF and CATE. Section \ref{sec4} provides theoretical guarantees for the proposed method. In Section \ref{sec-simu}, we investigate the empirical performance of the SpotIV estimator. In Section \ref{sec-data}, the SpotIV estimator is applied to infer the income's causal effects on house ownership.

\section{Nonlinear outcome models and identifiability conditions}
\label{sec2}
For the $i$-th subject, let $y_i\in \R$ denote the observed outcome, $d_i\in \R$ denote the exposure or the treatment, $z_i\in\R^{p_z}$ denote a set of candidate IVs, and $x_i\in\R^{p_x}$ denote baseline covariates.  Define $p:=p_z+p_x$ and  we use $w_i:=(z_i^{\intercal}, x_i^{\intercal})^{\intercal} \in \R^{p}$ to denote all the measured covariates, including  candidate IVs and baseline covariates. We assume that the data $\{y_{i},d_{i},w_i\}_{1\leq i\leq n}$ are generated in \textit{i.i.d.} fashions. Let $u_i$ denote the unmeasured confounder which can be associated with both exposure and outcome variables.

We define causal effects using the potential outcome framework \citep{Neyman23, Rubin74}.
Let $y_i^{(d)}\in\R$  be the potential outcome if the $i$-th individual were to have exposure $d$.  We consider the following nonlinear potential outcome model
\begin{equation}
\label{eq: potential}
     \E[y^{(d)}_i|w_i=w, u_i=u]:=q\left(d\beta+w^{\intercal}\kappa, u\right),
\end{equation}
where $q: \R^2 \rightarrow \R$ is a link function, $\beta\in\R$ is the coefficient corresponding to the exposure, and $\kappa=(\kappa_z^{\intercal},\kappa_x^{\intercal})^{\intercal}\in\R^{p}$ is the coefficient vector corresponding to both the IVs and baseline covariates. The function $q$ can be either known or unknown. 
Model \eqref{eq: potential} includes a broad class of nonlinear potential outcome models for both continuous and binary outcomes. For binary outcomes, if $q(a,b)=1/(1+\exp(-a-b))$, then \eqref{eq: potential} is the mixed-logistic model \citep{ Vans11}; if $q(a,b)=\mathbbm{1}(a+b>0)$ and $u_i$ is normal with mean zero, then \eqref{eq: potential} is the probit model \citep{Rivers88}. 

We assume that $y_i^{(d)} \; \indep \; d_i \mid (w_i^{\intercal}, u_i)$. This condition is mild as we can always identify an unmeasured variable $u_i$ such that $y_i^{(d)}$ and $d_i$ are conditionally independent. This is much weaker than the (strong) ignorability condition $y_i^{(d)} \; \indep \; d_i \mid w_i$ \citep{RB83}.
Under the condition $y_i^{(d)} \; \indep \; d_i \mid (w_i^{\intercal}, u_i)$ and the consistency assumption \citep[e.g.]{imbens2015causal}, we connect the conditional outcome model and the potential outcome model \eqref{eq: potential} as
\begin{equation}
\E[y_i|d_i=d,w_i=w,u_i=u]= \E[y^{(d)}_i|d_i=d,w_i=w,u_i=u]=\E[y^{(d)}_i|w_i=w,u_i=u].
\label{eq: connection}
\end{equation}
Consequently, the potential outcome model \eqref{eq: potential} leads to the following outcome model, \begin{equation}
  \E[y_i|d_i=d,w_i=w,u_i=u]=
q\left(d \beta+w^{\intercal}\kappa,u\right).
\label{eq-y}
\end{equation}
For the exposure $d_i$, we consider a linear working model
\begin{equation}
\label{eq-d}
d_i=w_i^{\intercal}\gam+v_i~~\text{with}~~\gam=(\E[w_iw_i^{\intercal}])^{-1}\E[w_id_i]~~\text{and}~~\E[w_iv_i]=0.
\end{equation}

When there exist unmeasured confounders, $u_i$ and $v_i$ are correlated, and the exposure is endogenous; that is, $d_i$ is associated with the unmeasured confounder $u_i$ even after conditioning on the measured variables $w_i$; see Figure \ref{fig: robust IV} for an illustration.


\subsection{Review of the control function approach with valid IVs}
\label{sec2-1}
We first review the control function approach with valid IVs, which is widely adopted for causal inference with nonlinear outcome models \citep{Rivers88, BP04, Rothe09, petrin2010control, CD11, wooldridge2015control, guo2016control}. The key idea is to treat the residual $v_i$ of the exposure model \eqref{eq-d} as a proxy for the unmeasured confounders. Then $v_i$ is included in the outcome model as an adjustment for the unmeasured confounder. The success of the control function method relies on the following identifiability condition \citep{BP04, Rothe09}.

 \begin{condition} [Control function with valid IVs]
 \label{cond: valid IC}
\textit{ The IV strength $\gam_z$ in \eqref{eq-d} satisfies $\|\gam_{z}\|_2\geq \tau_0>0$ for a positive constant $\tau_0$. The direct effect $\kappa_z$ in \eqref{eq-y} is equal to zero. The conditional density $f_u(u_i|w_i,v_i)$ satisfies 
\begin{equation}
\label{cond0-u}
f_u(u_i|w_i,v_i)=f_u(u_i|v_i).
\end{equation}}
\end{condition}
 The condition $\|\gam_{z}\|_2\geq \tau_0>0$ assumes strong associations between the IVs and the exposure variable, which corresponds to the IV assumption (A1). 
The condition $\kappa_{z}=0$ assumes that the IVs do not directly affect the outcome, which corresponds to (A2). Equation (\ref{cond0-u}) assumes that conditioning on the control variable $v_i$, the unmeasured confounder $u_i$ is independent of the measured covariates $w_i$. Note that \eqref{cond0-u} automatically holds if the errors $(u_i,v_i)$ are independent of $w_i.$ The assumption in \eqref{cond0-u} can be viewed as a version of (A3) for nonlinear outcome models. If $v_i$ is independent of $w_i$, then condition \eqref{cond0-u} implies (A3). However, such a connection may not hold in general. Under Condition \ref{cond: valid IC}, the outcome model (\ref{eq-y}) implies
\begin{equation}
\label{Valid-IC}
\E[y_i|d_i,w_i,v_i]=\int q(d_i\beta+w_i^{\intercal}\kappa,u_i)f_u(u_i|v_i)du_i:=g_0\left(d_i\beta+x_{i}^{\intercal}\kappa_x,v_i\right),
\end{equation}
where the function $g_0: \R^2\rightarrow \R$ is induced by the functions $q(\cdot,\cdot)$ and $f_u(\cdot)$. Inference for parameters $\beta$ and $\kappa_x$ in (\ref{Valid-IC}) has been studied in \citet{BP04} and \citet{Rothe09}. For nonlinear outcome models, \citet{blundell2003endogeneity} proposed the average structural function (ASF) as the targeted causal estimand, defined as
\begin{equation}
\label{eq-asf}
\textrm{ASF}(d,w)=\E_{u_i}[q(d\beta+w^{\intercal}\kappa,u_i)].
 \end{equation}
 Under Condition \ref{cond: valid IC}, it holds that
 \begin{equation}
 \label{eq-asf2}
 \textrm{ASF}(d,w)=\E[\E[q(d\beta+w^{\intercal}\kappa,u_i)|v_i]]=\E[g_0\left(d\beta+x^{\intercal}\kappa_x,v_i\right)],
 \end{equation}
 where the last step follows from \eqref{Valid-IC}.
 \citet{mammen2012nonparametric,hahn2013asymptotic} proposed partial mean estimators of ASF$(d,w)$ by leveraging the last expression in (\ref{eq-asf2}).

We shall highlight that Condition \ref{cond: valid IC} can be easily violated in practical applications. First, the assumption (\ref{cond0-u}) is unlikely to hold when $u_i$ involves omitted variables, which may be associated with measured covariates $w_i$.
As pointed out in \citet{BP04}, a control function satisfying Condition \ref{cond: valid IC} largely relies on including all the suspicious confounders in the model, which may be a strong assumption for practical applications. Second, \citet{carlson2021relaxing} pointed out that the assumption (\ref{cond0-u}) does not allow heteroskedasticity with respect to the distribution of $u_i$. For example, the model of $u_i|w_i, v_i$ specified in the following equation (\ref{carlson-cond}) violates Condition \ref{cond: valid IC}. Finally, the no direct effect assumption ($\kappa_z=0$) might be violated in practice. Indeed, both $\kappa_z=0$ and the assumption \eqref{cond0-u} cannot be tested in a data-dependent way. The above discussions strongly motivate us to propose much weaker identification conditions in the following subsection.

\subsection{New identifiability conditions}
\label{sec: identify}
We introduce new identifiability conditions, which weakens Condition \ref{cond: valid IC}. \begin{condition} [Dimension reduction]
 \label{cond: dim red}
 The direct effect $\kappa_z$ in (\ref{eq-y}) can be non-zero.
There exists some $\eta\in\R^{p\times p_{\eta}}$ for $0\leq p_{\eta}< p$ such that the conditional density $f_u(u_i|w_i,v_i)$ satisfies
\begin{equation}
\label{cond1-u}
f_u(u_i|w_i,v_i)=f_u(u_i|w_i^{\intercal}\eta, v_i).
\end{equation}
 \end{condition}
Without loss of generality, we assume the columns of $\eta$ are linearly independent and $p_{\eta}< p$.
In contrast to (\ref{cond0-u}), the model (\ref{cond1-u}) allows the unmeasured confounder $u_i$ to depend on the measured covariates $w_i$ through the linear transformations $w_i^{\intercal}\eta$, after conditioning on $v_i$. Condition \ref{cond: dim red} essentially requires a dimension reduction property of the conditional density $f_u(u_i|w_i,v_i)$. 
As illustrated in Figure \ref{fig: robust IV}, Condition \ref{cond: dim red} relaxes Condition \ref{cond: valid IC} by allowing $\kappa_z\neq \bm{0}$ and $\eta\neq \bm{0}$. 

\tikzstyle{format} = [draw, thin,minimum height=1.35em, minimum width=1.35cm]
\tikzstyle{format1} = [draw, thin,minimum height=1.35em, minimum width=2.35cm]
\tikzstyle{medium} = [draw, thin,minimum height=1.35em]

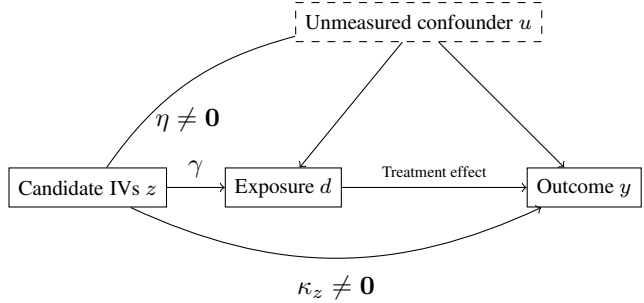
\begin{figure}[htb!]
\centering
\begin{tikzpicture}
  \node[format] at (9.5, 5) (treatment2) {\scriptsize Exposure $d$};
  \node[format, right of=treatment2, node distance=4cm] (outcome2) {\scriptsize Outcome $y$};
  \draw[->]  (treatment2) -- node[above] {\tiny Treatment effect} (outcome2);
\node[format, above of=outcome2, dashed, left of=outcome2, node distance=2.2cm](unmeasured2){\scriptsize Unmeasured confounder $u$};
\draw[->] (unmeasured2) edge[] node {} (treatment2);
\draw[->] (unmeasured2) edge[] node{} (outcome2);
\node[format,left of=treatment2, node distance=2.6cm] (IV) {\scriptsize Candidate IVs $z$};
\draw[->] (IV) edge node[swap,above]{\small $\gamma$}(treatment2);
\draw[->] (IV) edge[bend right=25] node [swap, below=1mm]{\small $\kappa_z \neq \bm{0}$ }(outcome2);
\draw[-] (IV) edge[bend left=20] node [swap, below=2mm]{\small $\eta \neq \bm{0}$}(unmeasured2);
\end{tikzpicture}
\caption{Relaxation of assumptions $\kappa_z=0$ and \eqref{cond0-u} in Condition \ref{cond: valid IC}.} 
\label{fig: robust IV}
\end{figure}

In view of (\ref{cond1-u}), the conditional mean of the outcome can be written as
 \begin{equation}
 \label{eq2-2}
    \E[y_i|d_i,w_i,v_i]=\int q(d_i\beta+w_i^{\intercal}\kappa,u_i)f_u(u_i|w_i^{\intercal}\eta,v_i)du_i:=g^*\left(d_i \beta+w_i^{\intercal}\kappa, w_i^{\intercal}\eta,v_i\right)
 \end{equation}
for $g^*:\R^{p_{\eta}+2}\rightarrow \R$. 
In comparison to \eqref{Valid-IC}, expression (\ref{eq2-2}) allows the direct effects $\kappa_z\neq \bm{0}$ and additional indices $w_i^{\intercal}\eta$, which are induced by the dependence of $u_i$ and $w_i^{\intercal}\eta$ as in (\ref{cond1-u}).

We introduce an extra identifiability condition, which requires a majority of the candidate IVs to be valid. 
We use $\mathcal{S}$ to denote the set of relevant IVs and $\mathcal{V}$ to denote the set of valid IVs, i.e.,
$$\mathcal{S}=\{1\leq j\leq p_z: \gam_j\neq 0\} \quad \text{and}\quad \calV=\{j\in\mathcal{S}:\kappa_j=0,\eta_{j,.}=\bm{0}\}.$$ 
The set $\mathcal{S}$ contains all candidate IVs strongly associated with the exposure. The set $\mathcal{V}$ is a subset of $\mathcal{S}$. Notice that the IVs in set $\mathcal{V}$ have no direct effects on the outcome and are independent of the unmeasured confounder $u_i$ conditioning on $(z_{i,\calV^c}^{\intercal},x_i^{\intercal},v_i)$. Hence, we refer to $\mathcal{V}$ as the set of valid IVs and $\mathcal{V}^c$ as the set of invalid IVs.

When the candidate IVs are possibly invalid, the main challenge is that the set $\mathcal{V}$ is unknown a priori in data analysis. The following identifiability condition is needed to identify the causal effect without prior knowledge of the set of valid IVs.
 \begin{condition} [Majority rule]
 \label{cond: majority}
 More than half of the relevant IVs are valid:
$$\left|\calV\right|>\left|\mathcal{S}\cap\calV^c\right|.$$
 \end{condition}
The majority rule assumes that more than half of the relevant IVs are valid but does not require prior knowledge of the set $\mathcal{V}$. The majority rule has been proposed in linear outcome models with invalid IVs \citep{Bowden16,Kang16,TSTH,Wind19}. In this work, we generalize the majority rule into a semi-parametric format.  

To summarize, Conditions \ref{cond: dim red} and \ref{cond: majority} are new identifiability conditions to identify causal effects in the semi-parametric outcome model \eqref{eq: potential} with possibly invalid IVs. These two conditions weaken Condition \ref{cond: valid IC} and better accommodate for practical applications. Below we review two existing relaxations of Condition \ref{cond: valid IC} considered in the literature.

\begin{remark}[Relaxation in probit models]
\label{remark-carlson}
Along the line of probit outcome models, \citet{Rivers88} assumed 
$
u_i|v_i,w_i\sim N(\rho_1v_i,\rho_2^2)
$
 in a latent variable model for binary outcomes, where $\rho_1,\rho_2$ are unknown parameters. Here $v_i$ is a valid control function according to Condition \ref{cond: valid IC} and $f_u(\cdot)$ is known. \citet{carlson2021relaxing} also considered a probit outcome model but, using our notations, relaxed the model in \citet{Rivers88} to
\begin{align}
\label{carlson-cond}
u_i|w_i,v_i\sim N(\gam_0^{\intercal}g_1(v_i,w_i),\exp(2\delta_0^{\intercal}g_2(v_i,w_i))),
\end{align}
where $g_1(\cdot)$ and $g_2(\cdot)$ are known vector valued functions and $\gam_0,\delta_0$ are unknown parameters. That is, $u_i$ can depend on $w_i$ given $v_i$ through a known parametric form. Under (\ref{carlson-cond}) and some other technical conditions, they demonstrated the identifiability of model parameters and the ASF$(d,w)$ in (\ref{eq-asf}). In contrast, our work allows the conditional distribution $f(u_i|w_i,v_i)$ to be an unknown function as in (\ref{cond1-u}) which includes the normal distribution (\ref{carlson-cond}) as a special case.
\end{remark}
\begin{remark}[``CF-LI'' relaxation]
\label{remark-cfli}
In semi-parametric outcome models, the so-called ``CF-LI'' assumption was considered in \citet{Rothe09} and discussed in \citet{carlson2021relaxing}. Formally, ``CF-LI'' assumes, in the current context,
\[
 u_i|v_i,w_i\sim u_i|v_i, d_i\beta+x_i^{\intercal}\kappa_x.
\]
Under this assumption, $\E[y_i|d_i,w_i,v_i]=g_0(d_i\beta+x_i^{\intercal}\kappa_x,v_i)$ has the same format as the one under Condition \ref{cond: valid IC}. By setting $p_{\eta}=1$, $\eta_j=\beta\gam_j$ for $j=1,\dots,p_z$, and $\eta_j=\beta\gam_j+\kappa_j$ for $j=p_z+1,\dots,p$ in (\ref{cond1-u}), the ``CF-LI'' is recovered by Condition \ref{cond: dim red}. Hence, the ``CF-LI'' assumption is much more stringent than Condition \ref{cond: valid IC}. 
\end{remark}

 \subsection{Causal effects identification }
\label{sec: identification}

Under Condition \ref{cond: dim red}, we generalize the original ASF$(d,w)$ in (\ref{eq-asf}) and define
 \begin{align}
\label{eq-phi}
\phi^*(d,w)=\E[\E[q(d\beta+w^{\intercal}\kappa,u_i)|w_i=w,v_i]]=\E[g^*(d\beta+w^{\intercal}\kappa,w^{\intercal}\eta,v_i)],
\end{align}
where $g^*$ is defined in \eqref{eq2-2}.
The definition of $\phi^*(d,w)$ directly generalizes the last expression in (\ref{eq-asf2}). Notice that $\phi^*(d,w)=\textup{ASF}(d,w)$ under Condition \ref{cond: valid IC} but they can be different under Condition \ref{cond: dim red}.
The quantity $\phi^*(d,w)$ is of interest for two reasons. First,  Condition \ref{cond: dim red} allows part of $u_i$ to be explained by the measured variables $w_i$ through the form $w_i^{\intercal}\eta$. In (\ref{eq-phi}), we fix the observed variables at the given values $(d,w^{\intercal})$ and only average out the truly unobserved parts. Second, the original ASF$(d,w)$ has identifiability issues when Condition \ref{cond: valid IC} is violated; see Corollary 4.1 in \citet{carlson2021relaxing}.
In contrast, $\phi^*(d,w)$ can be identified via a partial mean, as described in the following.

Importantly, if $w_i$ and $v_i$ are independent, then the conditional average causal effect(CATE) relates to $\phi^*(d,w)$  through the following expression,
\begin{align}
\label{eq: relation}
 \textup{CATE}(d,d'|w):=\E[y_i^{(d)}|w_i=w]-\E[y_i^{(d')}|w_i=w]=\phi^*(d,w)-\phi^*(d',w).
\end{align}

We now describe how to identify $\phi^*(d,w)$ defined in (\ref{eq-phi}) and will present the data-dependent algorithm in Section \ref{sec3}. We rewrite  the conditional mean function (\ref{eq2-2}) as
\begin{equation}
\label{eq-cond}
\E[y_i|d_i,w_i,v_i]=g^*((d_i,w_i^{\intercal})B^*,v_i) \quad \text{with}\quad B^*=\begin{pmatrix}
\beta & \bm{0} \\
\kappa & \eta 
\end{pmatrix}\in\R^{(p+1)\times (p_{\eta}+1)}.
\end{equation}
Due to the collinearity among $d_i, w_i$, and $v_i$, we cannot directly identify $B^*$ in (\ref{eq-cond}). Instead, we apply $\E[y_i|w_i,v_i]=\E[y_i|d_i,w_i,v_i]$ and derive the following reduced-form representation by combining \eqref{eq-d} and \eqref{eq-cond}, which is
\begin{equation}
\label{eq-Theta}
    \E[y_i|w_i,v_i]=\E[y_i|w_i^{\intercal}\Theta^*, v_i]\quad \text{with}\quad\Theta^*=(\gam, {\rm I}_p)B^*\in\R^{p\times (p_{\eta}+1)}
\end{equation}
and ${\rm I}_p$ being the $p\times p$ identity matrix.

In the rest of this section, we assume $\Theta^*,\gam$ and $\mathcal{S},$ can be accurately estimated and describe how to identify the model parameters $B^*$ and the functional $\phi^*(d,w)$ with $\Theta^*$.   In Section \ref{sec3-1}, we will construct an estimator of $\Theta^*$, which is closely related to the estimation of the central subspace or central mean space in the semi-parametric literature \citep[e.g.]{Cook02,cook98}. We provide data-dependent estimators of $\gam$ and $\mathcal{S}$ in the following equations \eqref{est1} and \eqref{eq-Shat}, respectively. We apply the majority rule (Condition \ref{cond: majority}) and identify the matrix $B^*$ by the expression $\Theta^*=(\gam, {\rm I}_p)B^*.$ 
With $\Theta^*,\gam,$ and $\mathcal{S},$ we define $$b^*_{m}= {\rm Median}(\{\Theta^*_{j,m}/\gam_j\}_{j\in\mathcal{S}}) \quad \text{for}\quad m=1,\dots,p_{\eta}+1.$$  We further identify $B^*$ as
\begin{equation}
\label{eq-B-ident}
 B^*=\begin{pmatrix}
 b^*_{1} &\dots&b^*_{p_{\eta}+1} \\
 \Theta^*_{.,1}-b_{1}\gam &\dots&\Theta^*_{.,p_{\eta}+1}-b^*_{p_{\eta}+1}\gam
 \end{pmatrix},
 \end{equation}
where $\Theta^*_{.,j}$ denotes the $j$-th column of $\Theta^*$. The rationale for \eqref{eq-B-ident} is the same as the application of majority rule in linear outcomes models: each candidate IV can produce an estimate of the causal effect $\beta$ based on the ratio of the reduced-form parameter and the IV strength $\gam$; the median of these ratios will be $\beta$ if more than half of the relevant IVs are assumed to be valid. The definition of $B^*$ in \eqref{eq-B-ident} generalizes the median idea \citep{Bowden16,Kang16,Wind19} to semi-parametric outcome models.

To identify the functional $\phi^*(d,w)$ in (\ref{eq-phi}), we  derive its integrand based on (\ref{eq: connection}) and (\ref{eq-cond}), which gives
 \begin{equation}
\label{eq2-7}
\E\left[y_i^{(d)}|w_i=w, v_i=v\right]=g^*\left((d,w^{\intercal})B^*,v\right).
\end{equation}
Then the quantity $\phi^*(d,w)$  can be identified by taking an integration of $g^*((d,w^{\intercal})B^*,v_i)$ with respect to the density of $v_i$. The CATE can be identified via its linear relationship with $\phi^*$ function as in \eqref{eq: relation}.

\section{Methodology: SpotIV}
\label{sec3}
In this section, we propose inference methods for the causal functional $\phi^*(d,w)$ in \eqref{eq-phi} and $\textup{CATE}(d,d'|w)$ in \eqref{eq: relation}. We detail our proposed procedure binary outcomes in Section \ref{sec3-2}, \ref{sec3-1}, and \ref{sec3-3}. Verification of the majority rule is considered in Section \ref{sec3-5}.
The proposed method has three steps. 

\subsection{Step 1: Estimation of reduced-form parameters} 
\label{sec3-2}
We first fit the first-stage model (\ref{eq-d}) based on least squares,
\begin{equation}
\label{est1}
\hat{\gamma}=(W^{\intercal}W)^{-1}W^{\intercal}d\quad\text{and}\quad \hat{v}=d-W\widehat{\gam}.
\end{equation}
In the following, we detail a specific estimator of the reduced form parameter $\Theta^*$ in \eqref{eq-Theta}. The procedure is derived from the sliced-inverse regression (SIR) approach proposed in \citet{Li91}. To facilitate the discussion, we restrict our attention to the binary outcome model and will discuss the extension to general outcome in the following Remark \ref{rem: general outcome}.

Let $\Sig=\E[w_iw_i^{\intercal}]\in\R^{p\times p}$ denote the covariant matrix of observed covariates and $\widehat{\Sig}=\sum_{i=1}^nw_iw_i^{\intercal}/n$ denote the empirical estimate of $\Sig$. 
Define 
\begin{equation}
\Omega=\text{Cov}(\alpha(y_i)) \in \R^{p\times p} \quad \text{with} \quad \alpha(y_i)= \E[\Sig^{-1/2}w_i|y_i]\in \R^{p}
\label{eq: cov def}
\end{equation}
where $\alpha(y_i)$ denotes the inverse regression function and $\Omega$ denotes its covariance matrix with the expression 
$\Omega={\P}(y_i=1){\P}(y_i=0)\{{\alpha}(1)-{\alpha}(0)\}\{{\alpha}(1)-{\alpha}(0)\}^{\intercal}.$

For $k=0,1$, we estimate $\alpha(k)$ by
$$
  \hat{\alpha}(k)=\frac{1}{\sum_{i=1}^n \mathbbm{1}(y_i=k)}\sum_{i=1}^n\mathbbm{1}(y_i=k)\widehat{\Sig}^{-1/2}w_i
$$ 
and estimate $\Omega$ by
$
 \widehat{\Omega}=\widehat{\P}(y_i=1)\widehat{\P}(y_i=0)\{\hat{\alpha}(1)-\hat{\alpha}(0)\}\{\hat{\alpha}(1)-\hat{\alpha}(0)\}^{\intercal}
$
with $\widehat{\P}(y_i=1)=\sum_{i=1}^n\mathbbm{1}(y_i=1)/n$ and $\widehat{\P}(y_i=0)=1-\widehat{\P}(y_i=1)$.
Let $\hat{\lam}_1\geq \dots\geq \hat{\lam}_{p}$ denote the eigenvalues of $\widehat{\Omega}$ and $(\widehat{\phi}_1,\dots,\widehat{\phi}_p)\in\R^{p\times p}$ denote the matrix of the eigenvectors of $\widehat{\Omega}$ corresponding to $\hat{\lam}_1,\dots,\hat{\lam}_{p}$. We estimate $\Theta^*$ using the eigenvectors corresponding to the nonzero eigenvalues of $\widehat{\Omega}$. Let $M$ denote the rank of $\Omega$ and $\widehat{M}$ be an estimate of $M$. Define the estimator of $\Theta^*$ as 
\begin{align}
\label{eq-hTheta}
 \widehat{\Theta}=(\widehat{\phi}_{1},\dots\widehat{\phi}_{\widehat{M}}).
\end{align}
For estimating $M$, a BIC-type procedure in \citet{Zhu06} can be applied. Specifically, we consider
{\small
\begin{equation}
\label{eq-hM}
  \widehat{M}=\argmax_{1\leq m\leq p} C(m)~~ \text{with}\; C(m)=\frac{n}{2}\sum_{i=m+1}^{p}\{\log (\hat{\lam}_i+1)-\hat{\lam}_i\}\mathbbm{1}(\hat{\lam}_i>0)-\frac{n^{c_0} \cdot m(2p-m+1)}{2},
\end{equation}
}
where $n^{c_0}$, with $0<c_0<1$, is a penalty constant and $m(2p-m+1)/2$ is the degree of freedom. The consistency of $\widehat{M}$ follows from Theorem 2 in \citet{Zhu06}. 

Since the SIR approach estimates a basis of the linear space of $\Theta^*$, the probabilistic limit of $\widehat{\Theta}$ is indeed a linear transformation of $\Theta^*$. As we will formally prove in the next section (Lemma \ref{lem2-semi}), the proposed method is invariant to linear transformations. Consistency and asymptotic normality of the proposed estimator can be established under any fixed linear transformation.

\subsection{Step 2: estimation of $B^*$ based on SIR}
\label{sec3-1}

To apply the majority rule, we first select the set of relevant IVs by
\begin{equation}
\widehat{\mathcal{S}}=\left\{1\leq j\leq p_z: {\left|\widehat{\gam}_j\right|} \geq \hat{\sig}_v\sqrt{2\{\widehat{\Sig}^{-1}\}_{j,j} \log n/n } \right\},
\label{eq-Shat}
\end{equation}
where $\hat{\sig}^2_v=\sum_{i=1}^n\hat{v}_i^2/n$ with $\hat{v}_i$ defined in (\ref{est1}). 
 The term $\log n$ is the adjustment for the multiplicity of the selection procedure in \eqref{eq-Shat}. Under mild conditions, $\widehat{\mathcal{S}}$ can be shown to be a consistent estimator of $\mathcal{S}$. Within $\widehat{\mathcal{S}}$, we apply the median rule to estimate $B^*$ according to (\ref{eq-B-ident}). Specifically, for $m=1,\dots,\widehat{M}$ we define $\hat{b}_m={\rm Median}\left(\left\{\widehat{\Theta}_{j,m}/\widehat{\gam}_j\right\}_{j\in\widehat{\mathcal{S}}}\right)$ for $m=1,\dots,\widehat{M}$ and 
  \begin{align}
 & \widehat{B}=\begin{pmatrix}\hat{b}_1 &\dots&\hat{b}_{\widehat{M}} \\
 \widehat{\Theta}_{.,1}- \hat{b}_1\widehat{\gam} &\dots & \widehat{\Theta}_{.,\widehat{M}}- \hat{b}_{\widehat{M}} \widehat{\gam} 
 \end{pmatrix}.
 \label{eq-Bhat}
  \end{align}

\subsection{Step 3: inference for the causal estimands}
\label{sec3-3}
We introduce inference procedures for $\phi^*(d,w)$ defined in \eqref{eq-phi}. In view of (\ref{eq2-7}), after identifying the parameter matrix $B^*$, we consider estimating the unknown function $g^*(\cdot)$. With $\widehat{B}$ defined in \eqref{eq-Bhat}, we estimate $g^*$ by a kernel estimator $\widehat{g}$.
Denote the estimated indices as $\widehat{s}_i=((d,w^{\intercal})\widehat{B},\widehat{v}_i)^{\intercal}\in\R^{\widehat{M}+1}$ and $\widehat{t}_i=((d_i,w_i^{\intercal})\widehat{B},\widehat{v}_i)^{\intercal}\in\R^{\widehat{M}+1}$, for $1\leq i\leq n$. Define the kernel $K_{H}(a, b)$ for $a, b \in \R^{\widehat{M}+1}$ as 
$K_{H}(a, b)=\prod_{l=1}^{\widehat{M}+1} \tfrac{1}{h_{l}}k\left(\frac{a_{l}-b_{l}}{h_l}\right)
$
where $h_l$ is the bandwidth for the $l$-th argument and $k(x)={\bf 1}\left(|x|\leq 1/2\right).$ For the sake of illustration, we take $K_{H}$ in the form of product kernel and $k(x)$ as the box kernel and set $h_l=h$ for $1\leq l\leq \widehat{M}+1$. Our proposed method can be extended to allow for a more general form of kernel function.

We construct the following kernel estimators of $\{g(s_i)\}_{1\leq i\leq n}$, 
$$\widehat{g}(\widehat{s}_i)=\frac{\frac{1}{n}\sum_{j=1}^{n}y_j K_{H}(\widehat{s}_i, \widehat{t}_j)}{\frac{1}{n}\sum_{j=1}^{n} K_{H}(\widehat{s}_i, \widehat{t}_j)} \quad \text{for}\quad 1\leq i\leq n.$$
We apply the partial mean methods and further estimate $\phi^*(d,w)=\int g^*(s_i) f_v(v_i)d v_i$ in (\ref{eq-phi}) by 
\[
\widehat{\phi}(d,w)=\frac{1}{n} \sum_{i=1}^{n} \widehat{g}(\widehat{s}_i).
\] 

We estimate $\phi^*(d',w)$ analogously and then estimate ${\rm CATE}(d,d'|w)$ as
\begin{equation} 
\label{eq-cate-est}
\widehat{\rm CATE}(d,d'|w)=\widehat{\phi}(d,w)-\widehat{\phi}(d',w).
\end{equation}
In Section \ref{sec-inf}, we establish the asymptotic normality of $\widehat{\rm CATE}(d,d'|w)$. We approximate its variance by bootstrap and construct the confidence interval for ${\rm CATE}(d,d'|w)$ as 
\begin{equation}
\label{eq-ci-ate}
\left( \widehat{\rm CATE}(d,d'|w)- z_{1-{\alpha}/{2}}\widehat{\sig}^*,\quad \widehat{\rm CATE}(d,d'|w)+ z_{1-{\alpha}/{2}}\widehat{\sig}^*\right),
\end{equation}
where $z_{1-\alpha/2}$ is the $1-\alpha/2$ quantile of standard normal and $\widehat{\sig}^*$ is the standard deviation estimated by $N$ bootstrap samples. 

\begin{remark}[Extension to continuous nonlinear outcome models]
\rm The SpotIV procedure for binary outcomes detailed above can be extended to deal with continuous nonlinear outcome models. The main change is to use a different estimator of the covariance matrix $\Omega=\text{Cov}(\alpha(y_i))$. Specifically, $\Omega$ can be estimated based on SIR \citep{Li91} or kernel-based method \citep{ZF96}. With such an estimate of $\Omega$, we can apply the same procedure as in Sections \ref{sec3-1} to \ref{sec3-3} and make inference for {\rm CATE}. We examine the numerical performance of our proposal for continuous nonlinear outcome models in the online supplementary materials \citep{supp2}. 
\label{rem: general outcome}
\end{remark}
\subsection{Testing the majority rule}
\label{sec3-5}
The majority rule (Condition \ref{cond: majority}) allows less than half of the relevant IVs to be invalid. It is crucial to know whether the majority rule is plausible in applications. Although the new identifiability conditions cannot be thoroughly tested in a data-dependent way, we describe in the following how to test the majority rule partially. The method is based on a ``voting'' idea derived from \citet{TSTH}.

We illustrate the idea using the true parameters. 
Define $b^{(j)}=\Theta^*_{j,1}/\gam_j$ for $j\in \mathcal{S}$. 
If $j\in\calV$ and $k\in \calV$, we have $\Theta^*_{k,1}-b^{(j)}\gam_k=0$. That is, if both the $j$-th and $k$-th IVs are valid, they will vote for each other to be valid. Let $C_k=|\{j\in\mathcal{S}: \Theta^*_{k,1}-b^{(j)}\gam_k=0\}|$ denote the number of votes received by the $k$-th IV.
If the majority rule (Condition \ref{cond: majority}) holds, then all valid IVs will receive more than $|\mathcal{S}|/2$ votes, that is, 
$$|\{k\in\mathcal{S}: C_k>|\mathcal{S}|/2\}|\leq |\mathcal{S}|/2.$$ However, if $|\{k\in\mathcal{S}: C_k>|\mathcal{S}|/2\}|\leq |\mathcal{S}|/2$, then the majority rule fails. 
We comment that only the first column of $\Theta^*$ is used in the definition of $C_k$ since $\Theta^*_{.,1}$ is the most significant direction.  
To account for the uncertainty in the data, we estimate $C_k$ by \[
 \widehat{C}_k=|\{j\in \widehat{\mathcal{S}}: |\widehat{\Theta}_{k,1}-\widehat{b}^{(j)}\widehat{\gam}_k|\leq \eps_n^{(j,k)}\}|,
\]
where $\eps_n^{(j,k)}$ goes to zero as $n\rightarrow \infty$. We set $\eps_n^{(j,k)}$ as in (7) of \citet{TSTH}, which is based on the asymptotic covariance of $\widehat{\Theta}_{.,1}$ and $\widehat{\gam}$. For $\eps_n^{(j,k)}$, we replace  (7) of \citet{TSTH} by its binary outcome counterpart. Specifically,
\begin{align}
\label{eq-epsn}
\eps_n^{(j,k)}=2.01\frac{\|W(\widehat{U}_{.k}-\frac{\hat{\gam}_k}{\hat{\gam}_j}\widehat{U}_{.j})\|_2}{\sqrt{n}}\sqrt{\frac{\log \max\{p_z,n\}}{n}},
\end{align}
where $\widehat{U}=\{\frac{1}{n}\sum_{i=1}^nw_iw_i^{\intercal}\hat{p}_i(1-\hat{p}_i)\}^{-1}$ and $\hat{p}_i=\hat{g}(w_i^{\intercal}\widehat{\Theta}+\hat{v}_i)$ is the estimate of $\E[y_i|w_i,v_i]$. Here, $\hat{g}$ can be obtained based on the kernel estimates as above. In practice, one can also approximate $\E[y_i|w_i,v_i]$ by fitting logistic or probit models in this thresholding step.
We demonstrate the performance of this partial check of the majority rule in Section \ref{ap-simu-3}.
\section{Theoretical justifications}
\label{sec4}
We provide theoretical justifications of our proposed method for binary outcome models. In Section \ref{sec4-1}, we present the estimation accuracy of the model parameter matrix $\widehat{B}$. In Section \ref{sec-inf}, we establish the asymptotic normality of the proposed SpotIV estimator under proper conditions. 
\subsection{Estimation accuracy of the model parameter matrix} 
\label{sec4-1}
We now introduce the required regularity conditions.
\begin{condition}
\label{cond-semi}
\textup{(Data distribution)} The observed data $(y_i, d_i, w_i^{\intercal})$, $i=1,\dots,n$, are \textit{i.i.d.} generated with $y_i\in\{0,1\}$ and $\E[w_iw_i^{\intercal}]$ being positive definite. The dimension of $\eta$ defined in Condition \ref{cond: dim red} satisfies $p_{\eta}=1.$ The covariates $\{w_{i,j}\}_{1\leq j\leq p}$ and the exposure $d_i$ are sub-Gaussian random variables.  
\end{condition}

In Condition \ref{cond-semi}, we assume binary outcomes, sub-Gaussian exposure, and sub-Gaussian covariates. To simplify the technical proofs for the nonparametric estimation, we focus on $p_{\eta}=1$ in (\ref{cond1-u}), that is, the invalid effect through unmeasured confounder has rank one.

\begin{condition}[Regularity conditions for SIR]
\label{cond-semi1}
The conditional mean $\E[w_i|w_i^{\intercal}\Theta^*,v_i]$ is linear in $w_i^{\intercal}\Theta^*$ and $v_i$. For the covariance matrix $\Omega$ defined in \eqref{eq: cov def}, its rank $M$ satisfies $M=p_{\eta}+1$, and the nonzero eigenvalues are simple.
\end{condition}

Condition \ref{cond-semi1} assumes a linearity assumption, which is standard for SIR methods \citep{Li91, CL99, CCL02}. 
A sufficient condition for the linearity assumption is that $w_i$ is normal and is independent of $v_i$. \citet{hall1993almost} shows that the linearity condition always offers an excellent approximation to the reality when $p$ diverges to infinity while the dimension of central space remains fixed. 
Invoking that $\Theta^*$ has $p_{\eta}+1$ columns, we assume $M=p_{\eta}+1$ to exclude some degenerated cases. 
The simple nonzero eigenvalues guarantee the uniqueness of its eigenvector matrix. Similar assumptions have been imposed in \citet{ZF96}.

The next lemma establishes the convergence rate of $\widehat{B}$. The probabilistic limit of $\widehat{B}$ can be expressed as follows. Let $\Theta\in\R^{p\times M}$ denote the eigenvectors of $\Omega=\text{Cov}(\alpha(y_i))$ corresponding to all the nonzero eigenvalues of $\Omega$. 
Define \begin{equation}
\label{eq-B-ap}
 B=\begin{pmatrix}
 b_{1}  &\dots &b_{M} \\
 \Theta_{.,1}-b_{1}\gam &\dots&\Theta_{.,M}-b_{M}\gam 
 \end{pmatrix}
 \end{equation}
with $b_{m}= {\rm Median}(\{\Theta_{j,m}/\gam_j\}_{j\in\mathcal{S}})$, $ m=1,\dots,M$. 
 \begin{lemma}
\label{lem2-semi}
Assume Conditions \ref{cond: dim red}, \ref{cond: majority}, \ref{cond-semi}, and \ref{cond-semi1} hold. Then for some positive constants $c_1, c_2>0$, it holds that
\begin{align}
\P\left(\|\widehat{B}-B\|_2\geq c_1\sqrt{{t}/{n}}\right)\leq \exp(-c_2t)+\P(E_1^c)
\label{eq: B-accuracy}
\end{align}
where $\P(E_1^c)\rightarrow 0$ as $n\rightarrow \infty$,
Moreover, $\E[y_i|d_i,w_i^{\intercal},v_i]=\E[y_i|(d_i,w_i^{\intercal})B,v_i]$. 
\end{lemma}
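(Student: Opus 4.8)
The plan is to split the argument into a \emph{selection} part, governing the discrete quantities $\widehat{M}$ and $\widehat{\mathcal{S}}$, and a \emph{continuous} part, governing the fluctuation of the ratios that build $\widehat{B}$. Accordingly I would take $E_1$ to be the event on which the rank and support are correctly recovered, $\{\widehat{M}=M\}\cap\{\widehat{\mathcal{S}}=\mathcal{S}\}$, together with the event that the signs of the estimated eigenvectors $\widehat{\phi}_1,\dots,\widehat{\phi}_M$ are aligned with the fixed population representatives defining $\Theta$. The consistency $\widehat{M}=M$ with probability tending to one is quoted from Theorem~2 of \citet{Zhu06}; the support recovery $\widehat{\mathcal{S}}=\mathcal{S}$ follows from the thresholding rule \eqref{eq-Shat} combined with a beta-min type lower bound on the nonzero $\gam_j$ (guaranteed by the IV-strength condition) and sub-Gaussian concentration of $\widehat{\gam}$; and sign alignment holds with high probability because the nonzero eigenvalues of $\Omega$ are simple (Condition~\ref{cond-semi1}), so the estimated eigenvectors concentrate near $\pm\phi_m$ with the correct sign selectable. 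Thus $\P(E_1^c)\to 0$, and it remains to establish the $\exp(-c_2 t)$ tail on $E_1$.

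On $E_1$ I would first record two sub-Gaussian concentration inequalities of the form $\P(\|\widehat{\gam}-\gam\|_2\ge c\sqrt{t/n})\le \exp(-c't)$ and $\P(\|\widehat{\Theta}-\Theta\|_2\ge c\sqrt{t/n})\le \exp(-c't)$. The first is the standard least-squares bound, using $\widehat{\gam}=(W^\intercal W)^{-1}W^\intercal d$, the positive definiteness of $\E[w_iw_i^\intercal]$, and sub-Gaussianity of $w_i$ and $v_i$ (Condition~\ref{cond-semi}). The second is obtained by first controlling $\|\widehat{\Omega}-\Omega\|_2$ --- the estimator $\widehat{\Omega}$ is built from the sliced means $\hat{\alpha}(0),\hat{\alpha}(1)$, each a sample average of $\widehat{\Sig}^{-1/2}w_i$ over a fixed-probability slice, hence concentrating at rate $\sqrt{t/n}$ --- and then invoking a Davis--Kahan perturbation bound, where the eigengap supplied by the simple nonzero eigenvalues converts $\|\widehat{\Omega}-\Omega\|_2$ into the same rate for $\|\widehat{\Theta}-\Theta\|_2$. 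Given these, for each $j\in\mathcal{S}=\widehat{\mathcal{S}}$ I would bound
\[
\left|\frac{\widehat{\Theta}_{j,m}}{\widehat{\gam}_j}-\frac{\Theta_{j,m}}{\gam_j}\right|
\le \frac{|\widehat{\Theta}_{j,m}-\Theta_{j,m}|}{|\widehat{\gam}_j|}
+|\Theta_{j,m}|\,\frac{|\widehat{\gam}_j-\gam_j|}{|\widehat{\gam}_j\,\gam_j|}
\lesssim \|\widehat{\Theta}-\Theta\|_2+\|\widehat{\gam}-\gam\|_2,
\]
where the denominators are bounded away from zero on $E_1$ by the IV-strength lower bound on $|\gam_j|$. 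Since the median is $1$-Lipschitz in the $\ell_\infty$ norm of its arguments, $|\hat{b}_m-b_m|$ is at most the maximum of these deviations over $j\in\mathcal{S}$; feeding this into the block structure of $\widehat{B}$ and $B$ (first row $\hat b_m$ versus $b_m$, remaining rows $\widehat{\Theta}_{.,m}-\hat b_m\widehat{\gam}$ versus $\Theta_{.,m}-b_m\gam$) yields $\|\widehat{B}-B\|_2\lesssim \|\widehat{\Theta}-\Theta\|_2+\|\widehat{\gam}-\gam\|_2$, and hence the stated tail bound with a relabelled constant.

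For the final identity, I would exploit that SIR recovers the column space of $\Theta^*$: since Condition~\ref{cond-semi1} gives $\mathrm{rank}(\Omega)=M=p_\eta+1$ and $\Theta^*$ has $p_\eta+1$ columns, $\Theta^*$ has full column rank and $\Theta=\Theta^*A$ for some invertible $A\in\R^{M\times M}$. For a valid instrument $j\in\calV$ one has $\Theta^*_{j,1}=\beta\gam_j$ and $\Theta^*_{j,2}=0$, so $\Theta_{j,m}/\gam_j=\beta A_{1,m}$ is the common value shared by the majority of relevant IVs; the majority rule (Condition~\ref{cond: majority}) then forces $b_m=\beta A_{1,m}$, i.e.\ the first row of $B$ equals the first row $b^*$ of $B^*$ post-multiplied by $A$. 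A short computation using $\Theta=(\gam\,b^*+K^*)A$, where $K^*$ collects the lower rows $(\kappa,\eta)$ of $B^*$, gives $\Theta_{.,m}-b_m\gam=(K^*A)_{.,m}$, whence $B=B^*A$. Because $A$ is invertible, the index $(d_i,w_i^\intercal)B=[(d_i,w_i^\intercal)B^*]A$ generates the same $\sigma$-field as $(d_i,w_i^\intercal)B^*$ once $v_i$ is adjoined; combined with $\E[y_i\mid d_i,w_i,v_i]=g^*((d_i,w_i^\intercal)B^*,v_i)$ from \eqref{eq2-2}--\eqref{eq-cond} and the tower property, this yields $\E[y_i\mid d_i,w_i^\intercal,v_i]=\E[y_i\mid (d_i,w_i^\intercal)B,v_i]$.

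The main obstacle I anticipate is the SIR step: turning the concentration of $\widehat{\Omega}$ into a sharp $\sqrt{t/n}$ rate for $\widehat{\Theta}$ while cleanly handling the rotation and sign indeterminacy of eigenvectors and the degenerate slice probabilities. The least-squares bound, the ratio decomposition, and the Lipschitz property of the median are routine by comparison, and the identity $B=B^*A$ is purely algebraic once the column-space equality is in hand.
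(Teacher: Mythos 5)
Your proposal is correct and follows essentially the same route as the paper's proof: the same event $E_1=\{\widehat{M}=M,\widehat{\mathcal{S}}=\mathcal{S}\}$ handled via Theorem 2 of \citet{Zhu06} and the thresholding argument, operator-norm concentration of $\widehat{\Omega}$ converted into eigenvector accuracy through an eigengap perturbation bound (the paper invokes Theorem 5 of \citet{Karoui08} where you invoke Davis--Kahan, which is the same tool), the Lipschitz property of the median applied to the ratios $\widehat{\Theta}_{j,m}/\widehat{\gamma}_j$, and the algebraic identity $B=B^*T$ derived from the majority rule exactly as in your computation. The one step you compress is the assertion that ``SIR recovers the column space of $\Theta^*$'': since $y_i$ depends on the generated regressor $v_i$ in addition to $w_i^{\intercal}\Theta^*$ while $\Omega$ is built from the inverse regression of $w_i$ alone, standard SIR theory does not apply off the shelf, and the paper's Proposition \ref{prop-ap} supplies the missing argument by augmenting to $(w_i^{\intercal},v_i)$, applying Li's Theorem 3.1 to the augmented inverse-regression covariance $\bar{\Omega}$, and then identifying $\mathcal{L}(\Omega)=\mathcal{L}(\Theta^*)$ using $\E[w_iv_i]=0$ and the rank condition $M=p_\eta+1$.
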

This lemma implies that $\widehat{B}$ converges to $B$ at rate $n^{-1/2}$ and $(d_i,w_i^{\intercal})B,v_i$ also provides a sufficient summary for the conditional mean of the outcome. We see that $B$ in (\ref{eq-B-ap}) has the same format as $B^*$ except that $\Theta^*$ is replaced with $\Theta$. In the proof, we show that $B$ is a linear transformation of $B^*$, where the transformation corresponds to the rotation of $\Theta^*$ to $\Theta$. 
The high probabilistic event $E_1$ is defined as $\{\widehat{\mathcal{S}}=\mathcal{S},\widehat{M}=M\}$. 
As a remark, the result in Lemma \ref{lem2-semi} still holds if the estimator $\widehat{\Theta}$ is replaced with other $\sqrt{n}$-consistent estimators of $\Theta$. 

\subsection{Asymptotic normality}
\label{sec-inf}
In the following, we establish the asymptotic normality of our proposed SpotIV estimator. Let $t_i=((d_i,w_i^{\intercal}){B},v_i)^{\intercal}\in \R^3$ and $s_i=((d,w^{\intercal})B,v_i)^{\intercal}\in \R^3$. The dimension three comes from the assumption that $M=p_{\eta}+1=2$. Define \begin{equation}
\mathcal{N}_{h}(s)=\left\{t\in \R^{3}: \|t-s\|_{\infty}\leq h \right\}, 
\label{eq: neighborhood}
\end{equation}
where $\|\cdot\|_{\infty}$ denotes the vector maximum norm.

\begin{condition}[Smoothness conditions]
\label{cond-semi2}
\begin{enumerate}
\item[(a)] The density function $f_t$ of $t_i=((d_i,w_i^{\intercal}){B},v_i)^{\intercal}$ has a convex support $\mathcal{T}\subset \R^3$ and satisfies {$c_0 \leq f_t(s_i)\leq C_0$ for all $1\leq i\leq n$}, $\int_{t\in \mathcal{T}^{\rm int}} f_t(t)dt=1$ and {$\max_{1\leq i\leq n}\sup_{t\in \mathcal{N}_{h}(s_i)\cap \mathcal{T}}\|\triangledown f_t(t)\|_{\infty}\leq C$}, where $\mathcal{T}^{\rm int}$ is the interior of $\mathcal{T}$, $\mathcal{N}_{h}(s)$ is defined in \eqref{eq: neighborhood}, $\triangledown f_t$ is the gradient of $f_t$ and $C_0>c_0>0$ and $C>0$ are positive constants. The density $f_v$ of $v_i$ is bounded from above and has a convex support $\mathcal{T}_{v}.$  
\item[(b)] The function $g$ defined in \eqref{eq-cond} is twicely differentiable. For any $1\leq i\leq n$, $g(s_i)$ is bounded away from zero and one. The function $g$ satisfies $\max_{1\leq i\leq n}\sup_{t\in \mathcal{N}_{h}(s_i)\cap \mathcal{T}}\\
\|\triangledown g(t)\|_{2}\leq C$ and $\max_{1\leq i\leq n}\sup_{t\in \mathcal{N}_{h}(s_i)\cap \mathcal{T}}\lambda_{\max}(\triangle g(t))\leq C$, where $\mathcal{N}_{h}(s)$ is defined in \eqref{eq: neighborhood}, $\|\triangledown g(t)\|_2$ and $\lambda_{\max}(\triangle g(t))$ respectively denote the $\ell_2$ norm of the gradient vector and the largest eigenvalue of the hessian matrix of $g$ evaluated at $t$ and $C>0$ is a positive constant.
\item[(c)] For any $v\in \mathcal{T}_{v}$, then the evaluation point $(d,w^{\intercal})^{\intercal}$ satisfies $((d, w^{\intercal})B+\Delta^{\intercal},v)^{\intercal}\in \mathcal{T}$ for any $\Delta\in \R^2$ and $\|\Delta\|_{\infty}\leq h$. 
\end{enumerate}
\end{condition}

Condition \ref{cond-semi2}(a) and \ref{cond-semi2}(b) are mainly imposed for the regularities of the density function $f_t$, $f_v$, and the conditional mean function $g$ at $s_i=((d,w^{\intercal})B,v_i)^{\intercal}$ or its neighborhood $\mathcal{N}_{h}(s_i)$. Here the randomness of $s_i$ only depends on $v_i$ for the pre-specified evaluation point $(d,w^{\intercal})^{\intercal}$. Condition \ref{cond-semi2}(c) essentially assumes that the evaluation point $(d,w^{\intercal})$ is not at the tail of the joint distribution of $(d_i,w_i^{\intercal}).$ In the online supplement \citep{supp2},  we verify Condition \ref{cond-semi2} in some generic examples. Specifically, we will verify Condition \ref{cond-semi2} (a) under the regularity conditions on the density function of $t_i^*$. Condition \ref{cond-semi2} (b) is guaranteed by the regularity conditions on the potential outcome model $q(\cdot)$ defined in \eqref{eq: potential} and the  conditional density $f_u(u_i|w_i^{\intercal}\eta,v_i)$. If $q(\cdot)$ is continuous, it suffices to require that $q(\cdot)$ has bounded second derivatives and the density $f_u(u_i|w_i^{\intercal}\eta,v_i)$ belongs to a location-scale family with smooth mean and variance functions. If $q(\cdot)$ is an indicator function, then $g$ becomes the conditional density of $u_i$ given $w_i^{\intercal}\eta$ and $v_i$ and it suffices to require this conditional density function to satisfy Condition \ref{cond-semi2} (b). Examples of $q$ functions satisfying Condition \ref{cond-semi2} (b) include logistic or probit models with uniformly bounded $v_i$.

The following theorem establishes the asymptotic normality of the proposed estimator of $\phi^*(d,w)$. 
\begin{theorem} 
Suppose that Condition \ref{cond-semi2} holds, and the bandwidth satisfies $h=n^{-\mu}$ for $0<\mu<1/4$. For any estimator $\widehat{B}$ satisfying \eqref{eq: B-accuracy}, there exists positive constants $c>0$ and $C>0$ such that, with probability larger than $1-n^{-c}-\P(E_1^c)$,
$
\left|\widehat{\phi}(d,w)-\phi^*(d,w)\right|\leq C\left(\frac{1}{\sqrt{nh^2}}+ h^2\right),
$
where $\P(E_1^c)\rightarrow 0$ as $n\rightarrow \infty$.
Taking $h=n^{-\mu}$ for any $\mu\in(0,1/6)$, we have 
$$
\frac{n}{\sqrt{\rm V}}\left(\widehat{\phi}(d,w)-\phi^*(d,w)\right)\cid N(0,1) \quad \text{with}\quad {\rm V}=\sqrt{\sum_{j=1}^{n} a_j^2 g({t}_j)(1-g({t}_j))}
$$
where $a_j=\frac{1}{n} \sum_{i=1}^{n} \frac{ K_{H}({s}_i, {t}_j)}{\frac{1}{n}\sum_{j=1}^{n} K_{H}({s}_i, {t}_j)} \; \text{for}\; 1\leq j\leq n$ and $\cid$ denotes the convergence in distribution. 
There exist some positive constants $C_0\geq c_0>0$ and $c>0$ such that the asymptotic standard error satisfies
$
\P\left(c_0/\sqrt{nh^2}\leq \sqrt{{\rm V}}/{n}\leq C_0/\sqrt{nh^2}\right)\geq 1-n^{-c}.
$
\label{thm: asf}
\end{theorem}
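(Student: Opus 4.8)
The plan is to decompose the estimation error $\widehat{\phi}(d,w)-\phi^*(d,w)$ into a bias term and a stochastic term, treating the estimated quantities $\widehat{B}$, $\widehat{v}_i$, and the kernel weights as perturbations of their population counterparts. First I would condition on the high-probability event $E_1=\{\widehat{\mathcal{S}}=\mathcal{S},\widehat{M}=M\}$ so that the dimension $\widehat{M}+1=3$ is correct, and then work on the further event where $\|\widehat{B}-B\|_2\lesssim\sqrt{t/n}$ from Lemma \ref{lem2-semi}. On this event the estimated indices $\widehat{s}_i,\widehat{t}_i$ differ from the population indices $s_i,t_i$ by $O_p(n^{-1/2})$ uniformly, which is of smaller order than the bandwidth $h=n^{-\mu}$ for $\mu<1/4$; I would quantify how this index perturbation propagates into the box-kernel weights (the delicate point, since the box kernel is discontinuous, so I would control the measure of the boundary region $\{t:|t_l-s_l|\approx h\}$ via the density bound $f_t\leq C_0$ from Condition \ref{cond-semi2}(a)).

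Next I would analyze the infeasible estimator $\widetilde\phi$ built from the true indices $s_i,t_i$. Writing $\widehat{g}(\widehat{s}_i)$ as a Nadaraya--Watson ratio, I would apply the standard bias--variance decomposition for local-constant regression: a Taylor expansion of $g$ in the neighborhood $\mathcal{N}_h(s_i)$, using the bounds on $\triangledown g$ and $\lambda_{\max}(\triangle g)$ from Condition \ref{cond-semi2}(b), gives a bias of order $h^2$, while the variance contribution is of order $1/(nh^2)$ (two effective smoothing dimensions, since averaging over $v_i$ in the partial mean integrates out the third argument and does not inflate the variance rate). Combining with the index-perturbation bound yields the deterministic rate $|\widehat\phi(d,w)-\phi^*(d,w)|\leq C(1/\sqrt{nh^2}+h^2)$ with probability $1-n^{-c}-\P(E_1^c)$, which is the first claim.

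For the asymptotic normality I would linearize $\widehat\phi(d,w)=\tfrac1n\sum_i\widehat g(\widehat s_i)$ into a weighted sum of the outcomes $y_j$. Exchanging the order of summation, $\widehat\phi(d,w)\approx\sum_{j=1}^n a_j\,y_j$ with the weights $a_j=\tfrac1n\sum_i K_H(s_i,t_j)/\{\tfrac1n\sum_j K_H(s_i,t_j)\}$ exactly as stated. Since $y_j\mid t_j$ is Bernoulli with mean $g(t_j)$, the centered sum $\sum_j a_j(y_j-g(t_j))$ has conditional variance $\mathrm{V}^2=\sum_j a_j^2\,g(t_j)(1-g(t_j))$, matching the stated $\mathrm V$. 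I would then verify a Lindeberg condition: the weights satisfy $a_j=O(1/(nh^2))$ uniformly with roughly $nh^2$ of them nonzero, so no single summand dominates, and a triangular-array CLT for the independent (given the indices) Bernoulli terms delivers the $N(0,1)$ limit after normalizing by $n/\sqrt{\mathrm V}$. The two-sided bound on $\sqrt{\mathrm V}/n$ of order $1/\sqrt{nh^2}$ follows from $g(1-g)$ being bounded away from $0$ and $1$ (Condition \ref{cond-semi2}(b)) together with the density bounds controlling $\sum_j a_j^2\asymp 1/(nh^2)$.

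The main obstacle I anticipate is controlling the feasible-versus-infeasible gap induced by plugging in $\widehat B$ and $\widehat v_i$ in combination with the discontinuous box kernel. Unlike a smooth kernel, an $O(n^{-1/2})$ shift in the indices can flip individual indicator weights $k((\widehat s_{i,l}-\widehat t_{j,l})/h)$, so I cannot simply Taylor-expand the kernel in $\widehat B-B$. The key is a uniform empirical-process argument bounding the number of pairs $(i,j)$ whose kernel membership changes; since each boundary slab has width $O(n^{-1/2})$ and $f_t$ is bounded, the expected count of affected pairs is $O(n^{-1/2}/h)$ relative to the $O(1/h^2)$-sized effective neighborhoods, which is asymptotically negligible under $\mu<1/6$. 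Making this negligibility rigorous and uniform over the $n$ evaluation points $s_i$ is where the bulk of the technical work will lie.
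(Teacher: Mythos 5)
Your overall architecture is essentially the paper's: condition on $E_1$ and the $\sqrt{t/n}$-accuracy of $\widehat B$ to get a uniform $O(\log n/\sqrt n)$ perturbation of the indices, reduce the feasible estimator to the infeasible one built from $s_i,t_j$ by bounding how many box-kernel memberships can flip (the paper does this in Lemma \ref{lem: limiting approx}, e.g. $\frac{1}{n}\sum_j|K_H(s_i,t_j)-K_H(\widehat s_i,\widehat t_j)|\lesssim \log n/(\sqrt n h)$), then run a Nadaraya--Watson bias/variance analysis and a triangular-array CLT for the weighted Bernoulli sum with exactly the weights $a_j$ (the paper's Lemmas \ref{lem: bias} and \ref{lem: limiting distribution}). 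So the skeleton is sound and matches.

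The genuine gap is in your bias step. You claim that Taylor expansion ``using the bounds on $\triangledown g$ and $\lambda_{\max}(\triangle g)$'' from Condition \ref{cond-semi2}(b) yields a bias of order $h^2$. Those bounds only control the \emph{second}-order term; the first-order term
$\frac{1}{n}\sum_i \sum_{j}[\triangledown g(s_i)]^{\intercal}(t_j-s_i)K_H(s_i,t_j)\big/\sum_j K_H(s_i,t_j)$
is, under your stated bounds, only $O(h)$, since all one knows on the kernel support is $\|t_j-s_i\|_{\infty}\le h$. But $h\gg 1/\sqrt{nh^2}$ for every $\mu<1/4$ (equivalently $nh^4\to\infty$), so an $O(h)$ bias would dominate the stochastic term and invalidate both the stated rate and the CLT. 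Reducing this term to $h^2+\sqrt{\log n/(nh)}$ requires a cancellation argument that your sketch omits: by symmetry of the box kernel and the gradient bound on the \emph{density} $f_t$ (Condition \ref{cond-semi2}(a), not (b)), the kernel-weighted mean of $t_j-s_i$ has expectation $O(h^2)$, with a concentration bound for its fluctuation; moreover the diagonal term $j=i$ must be split off and bounded separately (the paper gets $1/(nh^2)$ in \eqref{eq: higher order 1}) because $t_i$ and $s_i$ share $v_i$, so the independence underlying the expectation computation fails there. This is exactly the content of the paper's Lemma \ref{lem: bias}, and it is where the real work in the bias control lies. A second, smaller slip: your weight scaling is internally inconsistent --- since $\sum_j a_j=n$, the nonzero $a_j$ are of order $h^{-2}$ (not $1/(nh^2)$), there are $\asymp nh^2$ of them, hence $\sum_j a_j^2\asymp n/h^2$ (not $1/(nh^2)$); only with these orders does $\sqrt{\rm V}/n\asymp 1/\sqrt{nh^2}$ come out as claimed.
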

A few remarks are in order for this main theorem. Firstly, the rate of convergence for $\widehat{\phi}(d,w)$ is the same as the optimal rate of estimating a twice-differentiable function in two dimensions \citep{tsybakov2008introduction}. 
Though the unknown target function $\phi^*(d,w)$ can be viewed as a two-dimension function on linear combinations of $d$ and $w$, it cannot be directly estimated using the classical nonparametric methods. In contrast, we have first to estimate the unknown function $g$ in three dimensions and then further estimate the target $\phi(d,w)$. After a careful analysis, we establish that, even though $\phi(d,w)$ involves estimating the three-dimension function $g$, the final convergence rate can be reduced to the same rate as estimating two-dimensional twice-differentiable smooth functions. This type of result has been established in \citet{Newey94} and \citet{Linton95} under the name ``partial mean". However, our proof is distinguished from the standard partial mean problem in the sense that we do not have access to direct observations of $s_i$ and $t_i$ but only have their estimators $\widehat{s}_i$ and $\widehat{t}_i$ for $1\leq i\leq n.$ 

Secondly, beyond Condition \ref{cond-semi2}, the above theorem requires a suitable bandwidth condition $h=n^{-\mu}$ with $0<\mu<1/6$ for establishing the asymptotic normality, which is standard in nonparametric regression in two dimensions \citep{wasserman2006all}. This bandwidth condition requires the variance component to dominate its bias, that is, $(nh^2)^{-1/2}\gg h^2.$ Thirdly, the asymptotic normality holds for a large class of initial estimators $\widehat{B}$ as long as they satisfy \eqref{eq: B-accuracy}. By Lemma \ref{lem2-semi}, our proposed estimator $\widehat{B}$ belongs to this class of initial estimators with a high probability.

Similar to the definition of $s_i$, we define ${r}_i=((d', w^{\intercal})B, v_i)$ as the corresponding multiple indices by fixing $(d_i, w_i^{\intercal})$ at the given level $(d',w^{\intercal})$. The following corollary establishes the asymptotic normality of the proposed estimator $\widehat{\rm CATE}(d,d'|w)$ defined in \eqref{eq-cate-est}.
\begin{corollary}
Suppose that Condition \ref{cond-semi2} holds for both $\{s_i\}_{1\leq i\leq n}$ and replacing $\{s_i\}_{1\leq i\leq n}$ and $d$ by $\{{r}_i\}_{1\leq i\leq n}$ and $d'$, respectively. Suppose that, $v_i$ is independent of $w_i$, the bandwidth satisfies $h=n^{-\mu}$ for $\mu\in(0,1/6)$, and $\left|d-d'\right|\cdot \max\{|B_{11}|,|B_{21}|\}\geq h.$ For any estimator $\widehat{B}$ satisfying \eqref{eq: B-accuracy}, then  
$$
\frac{n}{\sqrt{{\rm V}_{\rm CATE}}}\left(\widehat{\rm CATE}(d,d'|w)-{\rm CATE}(d,d'|w)\right)\cid N(0,1) ,
$$
where ${\rm V}_{\rm CATE}=\sqrt{\sum_{j=1}^{n} c_j^2 g({t}_j)(1-g({t}_j))}
$ for $c_j=\frac{1}{n} \sum_{i=1}^{n} \left(\frac{ K_{H}({s}_i, {t}_j)}{\frac{1}{n}\sum_{j=1}^{n} K_{H}({r}_i, {t}_j)}- \frac{ K_{H}({r}_i, {t}_j)}{\frac{1}{n}\sum_{j=1}^{n} K_{H}({r}_i, {t}_j)}\right)$, $1\leq j\leq n.$ 
There exist some positive constants $C_0\geq c_0>0$ and $c>0$ such that 
$
\P\left(c_0/\sqrt{nh^2}\leq \sqrt{{\rm V}_{\rm CATE}}/{n}\leq C_0/\sqrt{nh^2}\right)\geq 1-n^{-c}.$
\label{cor: cate}
\end{corollary}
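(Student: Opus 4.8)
The plan is to derive Corollary \ref{cor: cate} from Theorem \ref{thm: asf} applied twice and then to exploit the separation hypothesis to decouple the two partial means. Since the hypotheses assume Condition \ref{cond-semi2} for the index set $\{s_i\}_{1\le i\le n}$ and, separately, for $\{r_i\}_{1\le i\le n}$ with $d$ replaced by $d'$, Theorem \ref{thm: asf} applies to each of $\widehat\phi(d,w)$ and $\widehat\phi(d',w)$ and yields, on a common high-probability event, the asymptotically linear representations
\[
\widehat{\phi}(d,w)-\phi^*(d,w)=\tfrac1n\sum_{j=1}^{n}a_j\{y_j-g(t_j)\}+R_n,\qquad \widehat{\phi}(d',w)-\phi^*(d',w)=\tfrac1n\sum_{j=1}^{n}b_j\{y_j-g(t_j)\}+R_n',
\]
where $b_j$ is the analogue of $a_j$ with $s_i$ replaced by $r_i$, and $R_n,R_n'=O(h^2)$ are the bias-type remainders already controlled in the proof of that theorem (including the error from plugging in $\widehat B$ and $\widehat v_i$). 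Because $v_i\indep w_i$ is assumed, the population identity \eqref{eq: relation} gives ${\rm CATE}(d,d'|w)=\phi^*(d,w)-\phi^*(d',w)$, so subtracting the two displays reduces the problem to the single weighted sum $\tfrac1n\sum_j c_j\{y_j-g(t_j)\}+(R_n-R_n')$ with combined weights $c_j=a_j-b_j$.

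The crux is the separation hypothesis $|d-d'|\cdot\max\{|B_{11}|,|B_{21}|\}\ge h$. Since $d,d',w$ are fixed, every $s_i$ shares the same index coordinates and differs only in the $v_i$ slot, and likewise every $r_i$; the two common index vectors differ by a fixed shift proportional to $(d-d')$ whose sup-norm is forced to be at least $h$ in an index coordinate by the hypothesis. I would show that this makes the box kernels $K_H(s_i,\cdot)$ and $K_H(r_{i'},\cdot)$ have disjoint supports, so no observation index $t_j$ can lie within $h/2$ (in $\|\cdot\|_\infty$) of both an $s_i$ and an $r_{i'}$. Consequently, for each $j$ at most one of $a_j,b_j$ is nonzero, whence $c_j^2=a_j^2+b_j^2$ and the cross term in the variance vanishes. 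This makes $\widehat\phi(d,w)$ and $\widehat\phi(d',w)$ asymptotically uncorrelated and produces the stated functional $\sum_j c_j^2 g(t_j)\{1-g(t_j)\}$ equal to ${\rm V}_{\rm CATE}$, whose two-sided bound $c_0/\sqrt{nh^2}\le\sqrt{{\rm V}_{\rm CATE}}/n\le C_0/\sqrt{nh^2}$ I would inherit by summing the corresponding bounds from Theorem \ref{thm: asf} for the two index sets, which also secures non-degeneracy.

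With this decoupled representation in hand, I would establish the $N(0,1)$ limit by a Lyapunov central limit theorem for the weighted sum of the conditionally independent, bounded, mean-zero Bernoulli residuals $y_j-g(t_j)$. The Lyapunov ratio is of order $n^{\mu-1/2}\to 0$ using $\max_j|c_j|\asymp h^{-2}$, the count $\asymp nh^2$ of active indices, and ${\rm V}_{\rm CATE}\asymp n/h^2$, so the condition holds throughout the admissible bandwidth range. It then remains to show the bias difference $R_n-R_n'=O(h^2)$ is negligible relative to the standard deviation $\sqrt{{\rm V}_{\rm CATE}}/n\asymp(nh^2)^{-1/2}$; this is exactly the ``variance dominates bias'' requirement enforced by the bandwidth restriction $h=n^{-\mu}$, after which $\cid$ follows by Slutsky's theorem.

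The step I expect to be the main obstacle is turning the disjoint-support geometry into an exact cancellation of the covariance while both estimators are built from the \emph{same} responses $\{y_j\}$ and the \emph{same} plug-in quantities $\widehat B$ and $\{\widehat v_i\}$. One must verify the support separation survives substituting $\widehat s_i,\widehat r_i,\widehat t_j$ for their population versions (the perturbation is $O_p(n^{-1/2})$, hence negligible against the fixed gap $\ge h$ when $h=n^{-\mu}$ with $\mu<1/2$), and that the two plug-in errors, coupled through $\widehat B$, do not reintroduce a covariance. The clean route is to work on the intersection of the event $E_1=\{\widehat{\mathcal S}=\mathcal S,\widehat M=M\}$ from Lemma \ref{lem2-semi} and the accuracy event of \eqref{eq: B-accuracy}, reduce to the true indices at a cost absorbed into $R_n,R_n'$ through the smoothness bounds of Condition \ref{cond-semi2}, and only then apply the Lyapunov argument above.
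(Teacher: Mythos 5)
Your proposal follows essentially the same route as the paper: the paper proves Corollary \ref{cor: cate} by rerunning the decomposition from the proof of Theorem \ref{thm: asf} for both $\widehat{\phi}(d,w)$ and $\widehat{\phi}(d',w)$, isolating as the only new ingredient a CLT and a two-sided bound for the variance built from the combined weights $c_j=a_j-b_j$ (Lemma \ref{lem: limiting distribution cate}), with the separation condition $|d-d'|\cdot\max\{|B_{11}|,|B_{21}|\}\geq h$ used precisely to secure the lower bound on $\sqrt{{\rm V}_{\rm CATE}}/n$ --- which is exactly the role your disjoint-support/decoupling argument plays. Your Lyapunov computation, variance-level bounds, and handling of the plug-in errors on the event $E_1$ intersected with the accuracy event of \eqref{eq: B-accuracy} mirror the paper's argument.
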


Corollary \ref{cor: cate} is closely related to Theorem \ref{thm: asf}. The asymptotic normality of $\phi^*(d',w)$ can be established with a similar argument to Theorem \ref{thm: asf} with replacing $s_i$ by $r_i$. When $v_i$ is independent of the measured covariates $w_i$, we apply \eqref{eq: relation} to compute ${\rm CATE}(d,d'|w)$ by taking the difference of $\widehat{\phi}(d,w)$ and $\widehat{\phi}(d',w)$. An extra step is to show that the asymptotic normal component of $\widehat{\phi}(d,w)-\widehat{\phi}(d',w)$ dominates its bias component. To ensure this, an extra assumption on the difference between $d$ and $d'$, $\left|d-d'\right|\cdot \max\{|B_{11}|,|B_{21}|\}\geq h$, is needed to guarantee the lower bound for $\sqrt{{\rm V}_{\rm CATE}}/{n}$. 

\section{Numerical studies}
\label{sec-simu}
In this section, we assess the empirical performance of the proposed method for both binary and continuous outcome models. 
For implementation, we estimate $\widehat{\Theta}$ using the SIR method in the R package \texttt{np} \citep{np}. The bandwidth is set by rule of thumb $h_k=0.9\min\{\hat{\sig}_k,\textup{IQR}_k/1.34\}n^{-1/(5+\widehat{M})}$ where $\hat{\sig}_k$ is the standard deviation of the $k$-th index and $\textup{IQR}_k$ is the interquartile range of the $k$-th index  for $k=1,\dots,\widehat{M}+1$.
To construct confidence intervals for CATE, we use the standard deviation of $N=50$ bootstrap realizations to estimate its standard error.

We consider two simulation scenarios with no measured covariates $x_i$, i.e., $w_i=z_i$, and set $p=p_z=7$. Setting (i) generates binary outcomes and setting(ii) generates continuous nonlinear outcomes The estimand $\phi^*$ and the CATE functions are nonlinear in these scenarios. We plot their corresponding $\phi^*(d,w)$ (as a function of $d$) in Figure \ref{fig1}. The R code and further simulation results are available at \url{https://github.com/saili0103/SpotIV}.

\begin{figure}[H]
\centering
\includegraphics[width=0.35\textwidth, height=4cm]{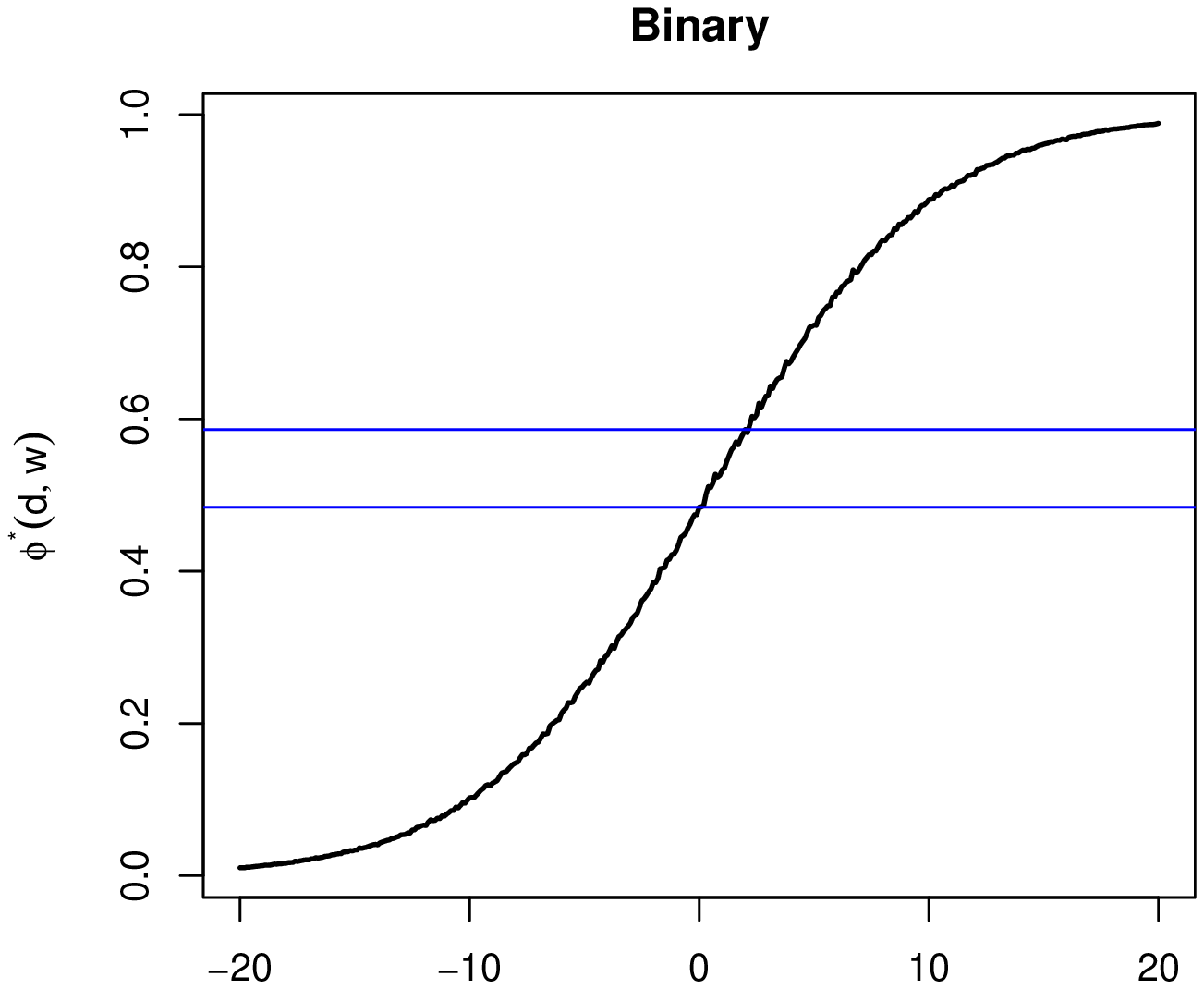}
\includegraphics[width=0.35\textwidth, height=4cm]{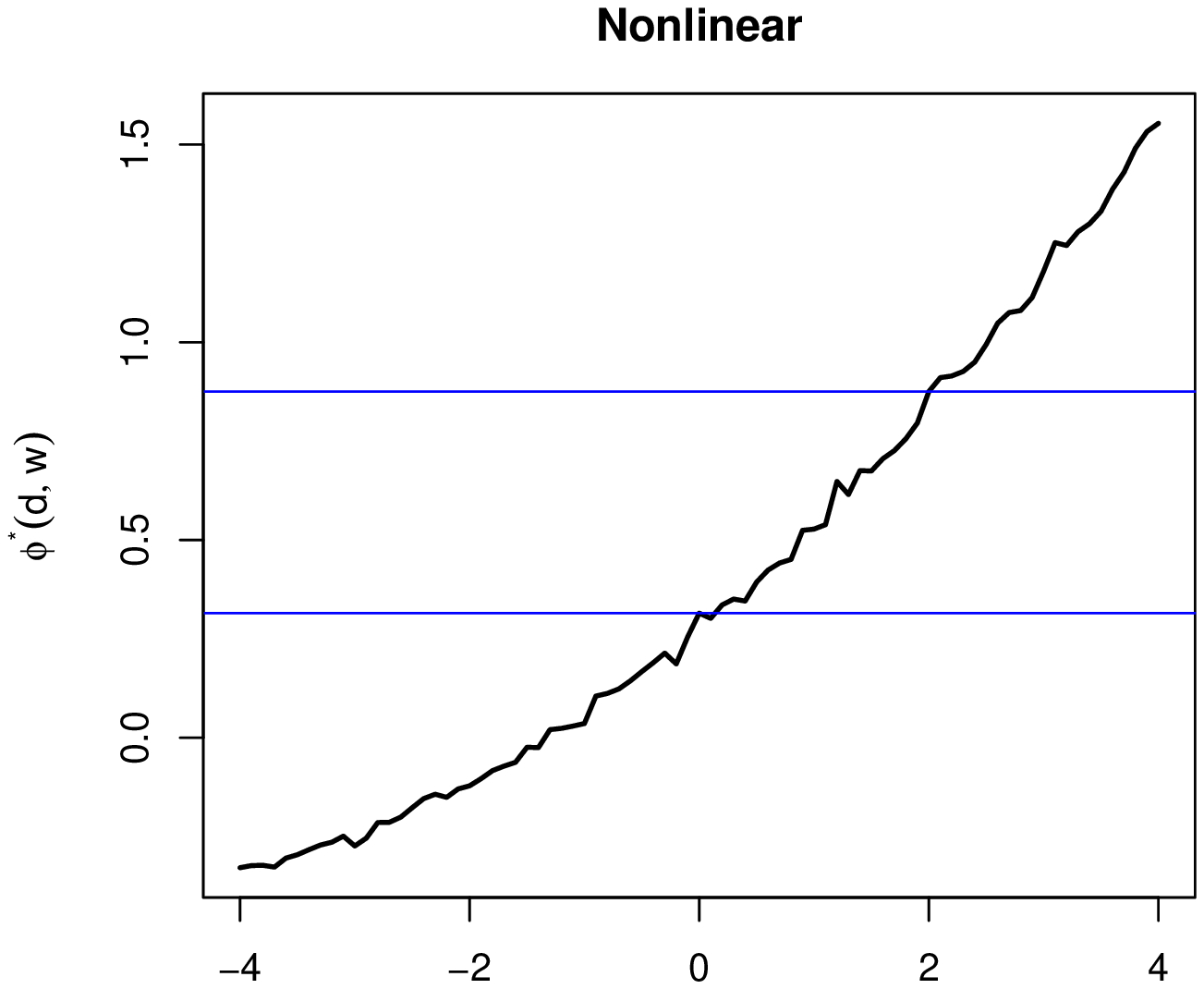}
\caption{The curves correspond to the functions $\phi^*(d,w)$ in scenarios (i) and (ii) considered in this section for $w=(0,\dots, 0,0.1)^{\intercal}\in \R^{7}$. The blue lines correspond to the true values for $d=-1$ and $d=2$ in each scenario and the difference between two blue lines is ${\rm CATE}(-1,2|w)$. }
\label{fig1} 
\end{figure}

\subsection{Binary outcome models}
\label{sec-simu-1}
For $1\leq i\leq n$, the exposure $d_i$ is generated as $d_i=z_i^{\intercal}\gam+v_i$ with $\gam = c_{\gam}\cdot (1,1,1,-1,-1,-1,-1)^{\intercal}$ and $v_i\sim N(0,1).$ We vary the IV strength such that $c_{\gam}\in\{0.4,0.6,0.8\}$. 
We generate two distributions of the $z_i$: (1) $\{z_i\}_{1\leq i\leq n}$ are \textit{i.i.d.} $N(0,{\rm I}_{p})$; (2) $\{z_i\}_{1\leq i\leq n}$ are \textit{i.i.d.} uniformly distributed in $[-1.73,1.73].$ For the model (i), we generate $\{y_i\}_{1\leq i\leq n}$ via the mixed-logistic model 
\begin{equation}
\label{eq: mixed logit}
\P(y_i=1 \mid d_i, w_i, u_i)=\text{logit}\left(d_i\beta+w_i^{\intercal}\kappa+u_i\right)
\end{equation} 
with $\beta=0.25$, $\kappa=\eta= (0,0,0,0,0,0.4,0.2)^{\intercal}.$ The unmeasured confounder $u_i$ is generated as $u_i=0.25v_i+w_i^{\intercal}\eta+\xi_i$ with $\xi_i\sim N(0,1)$.

After integrating out $u_i$ conditioning on $v_i,w_i$, the conditional distribution $y_i$ given $d_i, w_i$ is in general not logistic. Conditioning on $w_i$, the unmeasured confounder $u_i$ is correlated with $v_i$ and $d_i$. The majority rule is satisfied: the first five IVs are valid and the last two are invalid. 
We construct 95\% confidence intervals for ${\rm CATE}(-1,2|w)$ and compare our proposed SpotIV estimator with two other methods. The first one is the semi-parametric MLE assuming valid control function and valid IVs \citep{Rothe09}, shorthanded as Valid-CF. 
Through this comparison, we can understand how invalid IVs affect the accuracy of the causal inference approaches by assuming valid IVs. The second one is the ``Oracle" method, which is constructed with the prior knowledge of $\mathcal{V}$. It applies the Valid-CF by using the true valid IVs and treating the invalid IVs as known confounders. The oracle estimator is included as the benchmark. All the simulation results are calculated based on 500 replications.

In Table \ref{table1-ap}, we report the inference results for ${\rm CATE}(-1,2|w)$ for $w=(0,\dots, 0,0.1)^{\intercal}\in \R^{7}$ in the binary outcome model (i). The proposed SpotIV achieves the desired 95\% confidence level for Gaussian (Norm) and Uniform (Unif) $w_i$. The estimation errors get smaller with larger IV strengths and sample sizes. In contrast, the Valid-CF method has larger estimation errors, mainly due to the bias of using invalid IVs. The empirical coverage of the Valid-CF is lower than the nominal level across all settings. In terms of the empirical coverage, our proposed SpotIV is similar to the Oracle method, while SpotIV tends to have larger standard errors than the Oracle method. This happens since the SpotIV method identifies valid IVs in a data-dependent way. The ``MT'' column reports the proportion of simulations passing the majority rule test, detailed in Section \ref{sec3-5}. The majority rule is not rejected across most simulations. Our proposal is robust no matter whether the IVs are normal or uniform distributed.
\begin{table}[htbp]
\small
\begin{center}
\begin{tabular}{|c|c|c|cccc|ccc|ccc|}
\hline
&&&\multicolumn{4}{c|}{SpotIV}&\multicolumn{3}{c|}{Valid-CF}& \multicolumn{3}{c|}{ Oracle} \\
\hline
& $n$& $c_{\gam}$ & MAE & COV & SE &MT& MAE & COV & SE& MAE & COV & SE \\
 \hline
\multirow{6}{*}{Norm} &500 & 0.4  & 0.115&0.922& 0.14 &1& 0.205& 0.552& 0.11& 0.077& 0.914& 0.12 \\
 &500 & 0.6  & 0.082&0.936& 0.11&0.98& 0.142 & 0.590 & 0.08 & 0.067& 0.918& 0.09\\
& 500 & 0.8 & 0.070& 0.968& 0.10 &1& 0.130 & 0.592 & 0.07 & 0.055& 0.932& 0.08\\
\cline{2-13}
&1000 & 0.4 & 0.073& 0.918& 0.10&1& 0.204& 0.380& 0.09 & 0.055& 0.928& 0.08 \\
&1000 & 0.6 & 0.056& 0.920& 0.08&1& 0.160& 0.298& 0.06& 0.044& 0.944& 0.07\\
&1000 & 0.8 & 0.046& 0.960& 0.08&0.99& 0.131& 0.330& 0.05 & 0.040& 0.940& 0.06 \\
 \hline
\multirow{6}{*}{Unif} &500 & 0.4 & 0.112& 0.940& 0.14&1& 0.195& 0.638& 0.12 &  0.064& 0.966& 0.11 \\
&500 & 0.6  & 0.079& 0.964 &0.12&1& 0.165& 0.652& 0.10 & 0.054&0.968& 0.10  \\
&500 & 0.8 & 0.067& 0.968&0.10&1& 0.125& 0.752& 0.11 & 0.054 &0.984& 0.09\\
\cline{2-13}
&1000 & 0.4 & 0.082& 0.918& 0.11&1& 0.199 & 0.442 & 0.09 & 0.046& 0.958& 0.08 \\
&1000 & 0.6 & 0.052& 0.952& 0.09&0.99& 0.164 & 0.458 & 0.08& 0.042& 0.962& 0.07\\
&1000 & 0.8 & 0.052& 0.972& 0.08&0.99& 0.126& 0.616 & 0.08& 0.040& 0.986& 0.07 \\
 \hline
\end{tabular}
\end{center}
\caption{Inference for ${\rm CATE}(-1,2|w)$ in the binary outcome model (i). The ``MAE", ``COV" and ``SE" columns report the median absolute errors of $\widehat{\rm{CATE}}(-1,2|w)$, the empirical coverages of the confidence intervals and the average of estimated standard errors of the point estimators, respectively. The ``MT'' column reports the proportion of passing the majority rule testing in 500 replications.
The columns indexed with ``SpotIV" and ``Valid-CF" correspond to the proposed method and the method assuming valid IVs, respectively. The columns indexed with ``Oracle" correspond to the method which knows $\mathcal{V}$ as a priori.}
\label{table1-ap}
\end{table}


The identifiability condition considered in \citet{kolesar2015identification} and \citet{Bowden15} requires the IV strength vector $\gam$ and the invalidity form $\kappa+\eta$ to be nearly orthogonal in linear outcome models. 
Our configuration of $\gam$, $\kappa$, and $\eta$ in setting (i) corresponds to the case where this orthogonality assumption fails to hold, and our proposal is reliable. In the online supplement, we explore settings where the orthogonality assumption is satisfied. Our proposal is reliable regardless of whether this assumption holds or not, which matches our theoretical results.

\subsection{General nonlinear outcome models}
\label{sec-simu-2}
We consider a nonlinear continuous outcome model as follows. 
\begin{itemize}
\item[(ii)] Generate $\{y_i\}_{1\leq i\leq n}$ via
$
 y_i=d_i\beta+z_i^{\intercal}\kappa+u_i+(d_i\beta+z_i^{\intercal}\kappa+u_i)^2/3.
$ 

\end{itemize}
The true parameters and the distribution of $u_i$ in (ii) are set to be the same as in (i) in Section \ref{sec-simu-1}.

We compare the SpotIV estimator with the two-stage hard-thresholding (TSHT) method \citep{TSTH}, which is proposed to deal with possibly invalid IVs in linear outcome models. This comparison aims to understand the effect of mis-specifying a nonlinear model as linear.
As reported in Table \ref{table3-ap}, the proposed SpotIV method has coverage probabilities close to 95\% in model (ii). In comparison, the TSHT does not guarantee the 95\% coverage and has larger estimation errors, mainly due to the fact that the TSHT method is only developed for linear outcome models. 
\begin{table}[htp!]
\small
\begin{center}
\begin{tabular}{|c|c|c|cccc|ccc|ccc|}
\hline
&&&\multicolumn{4}{c|}{SpotIV}&\multicolumn{3}{c|}{TSHT}& \multicolumn{3}{c|}{ Oracle} \\
\hline
& $n$& $c_{\gam}$ & MAE & COV & SE &MT& MAE & COV & SE& MAE & COV & SE \\
 \hline
\multirow{6}{*}{Norm} &500 & 0.4  & 0.084&0.932& 0.12 &1& 1.693& 0.040& 0.27 & 0.058& 0.960& 0.09 \\
 &500 & 0.6  & 0.070&0.916& 0.07&0.98& 1.169& 0.093& 0.18 & 0.045& 0.978& 0.07\\
& 500 & 0.8 &0.055&0.928& 0.08&1& 0.865& 0.110& 0.13 & 0.046& 0.972& 0.07\\
\cline{2-13}
&1000 & 0.4 & 0.059& 0.910& 0.09&1& 1.535& 0.107& 0.21 & 0.045& 0.920& 0.06 \\
&1000 & 0.6 & 0.046& 0.938& 0.07&1& 0.399& 0.363& 0.13& 0.036& 0.952& 0.05\\
&1000 & 0.8 & 0.037& 0.950& 0.06&1& 0.268& 0.383& 0.10 & 0.030& 0.966& 0.05 \\
 \hline
\multirow{6}{*}{Unif} &500 & 0.4 & 0.289& 0.910& 0.44&1& 0.847& 0.093& 0.13 &  0.200& 0.949& 0.36 \\
&500 & 0.6  & 0.243& 0.918 &0.35&1& 1.129& 0.067& 0.18 & 0.170&0.974& 0.30 \\
&500 & 0.8 & 0.187& 0.936&0.30&1& 0.847& 0.093& 0.13 & 0.147 &0.952& 0.28\\
\cline{2-13}
&1000 & 0.4 & 0.199& 0.892& 0.30&1& 0.583 & 0.140 & 0.20 & 0.138& 0.944& 0.25 \\
&1000 & 0.6 & 0.147& 0.948& 0.23&1& 0.323 & 0.133 & 0.05& 0.113& 0.956& 0.21\\
&1000 & 0.8 & 0.130& 0.938& 0.21&1& 0.267& 0.423 & 0.10& 0.106& 0.956& 0.18\\
 \hline
\end{tabular}
\end{center}
\caption{Inference for ${\rm CATE}(-1,2|w)$ in the nonlinear outcome model (ii). The ``MAE", ``COV" and ``SE" columns report the median absolute errors of $\widehat{\rm{CATE}}(-1,2|w)$, the empirical coverages of the confidence intervals and the average of estimated standard errors of the point estimators, respectively. The ``MT'' column reports the proportion of passing the majority rule testing in 500 replications. The columns indexed with ``SpotIV" and ``TSHT" correspond to the proposed method and the method in \citet{TSTH}, respectively. The columns indexed with ``Oracle" correspond to the method which knows $\mathcal{V}$ as a priori.}
\label{table3-ap}
\end{table}

\subsection{Violation of majority rule}
\label{ap-simu-3}
In this subsection, we examine the performance of our proposal when the majority rule is violated. Specifically, we consider the outcome model (i) with the two ways of violating the majority rule: (a) $\kappa=\eta=(0.4,0.4,0.4,0,0.4,0.4,0.4)^{\intercal}$; (b) $\kappa_j=\eta_j=\tilde{\xi}_j\cdot \gam_j$ for $1\leq j\leq 7$ with $\tilde{\xi}_j\sim U[-1,1]$. 
 The parameter $\gam$ is the same as in Section \ref{sec-simu-1}. In (a), $\mathcal{V}=\{4\}$ and 
$\kappa_j/\gam_j=\eta_j/\gam_j=0.4$ for $j=1,2,3$ and $\kappa_j/\gam_j=\eta_j/\gam_j=-0.4$ for $j=5,6,7$.
Hence, no IV should get four or more votes and the violation setting (a) is likely to be detected by the voting method detailed in Section \ref{sec3-5}. In (b), $\mathcal{V}=\emptyset$ and $ \kappa_j/\gam_j \sim U[-1,1]$ and hence no IV should get four or more votes. In the setting (b), $\kappa_j/\gam_j$ is not as spread out as in (a) and hence our proposed voting method may be less powerful in comparison to the setting (a).
 In Table \ref{table1-violate}, we present the inference results under the configurations (a) and (b) with uniformly distributed IV measurements. 
In both settings, the SpotIV and Valid-CF have low coverages, which are as expected since the majority rule is violated. In setting (a), the violation of the majority rule is detected in most simulations. In setting (b), we cannot detect the violation of the majority rule in many cases. However, the detection gets easier as the IVs get stronger. 

\begin{table}[!htbp]
\small
\begin{center}
\begin{tabular}{|c|c|c|cccc|ccc|}
\hline
&&&\multicolumn{4}{c|}{SpotIV}&\multicolumn{3}{c|}{Valid-CF} \\
\hline
& $n$& $c_{\gam}$ & MAE & COV & SE &MT& MAE & COV & SE\\
 \hline
\multirow{3}{*}{(a)} &1000 & 0.4 &0.097 & 0.910 & 0.13 & 0.02 & 0.140 & 0.657 & 0.09\\
&1000 & 0.6 & 0.071 & 0.937 &0.09 &0.02 & 0.116 & 0.573 &0.06\\
&1000 & 0.8 & 0.062 & 0.897 & 0.07 & 0.02 & 0.105& 0.543 & 0.05\\
 \hline
 \multirow{3}{*}{(b)} &1000 & 0.4 &0.136 & 0.743 & 0.11 & 0.80 & 0.104 & 0.708 & 0.08\\
&1000 & 0.6 & 0.140 & 0.567 &0.09 &0.71 & 0.091& 0.644 &0.06\\
&1000 & 0.8 & 0.137 & 0.523 & 0.07 & 0.66& 0.083 & 0.610 & 0.05\\
 \hline
\end{tabular}
\end{center}
\caption{Inference for ${\rm CATE}(-1,2|w)$ in the binary outcome model (i) when the majority rule is violated. The columns indexed with ``MAE", ``COV" and ``SE" report the median absolute errors of $\widehat{\rm{CATE}}(-1,2|w)$, the empirical coverages of the confidence intervals and the average of estimated standard errors of the point estimators, respectively. The ``MT'' column reports the proportion of passing the majority rule testing in 500 replications. The columns indexed with ``SpotIV" and ``Valid-CF" correspond to the proposed method and the method assuming valid IVs, respectively.}
\label{table1-violate}
\end{table}

\section{Real data analysis}
\label{sec-data}
We apply the proposed SpotIV method to infer the causal effect of the income on the house owning through analyzing the data from China Family Panel Studies \citep{xie2012china}. The outcome $Y$ is a dichotomous variable indicating whether {a family owns a house (1) or not (0)}. 
The exposure $D$ is the log-transformed family income per person as the income variable is highly right-skewed. The baseline covariates $X$ include the {age}, {gender}, and {marriage status of the head of household}. We include seven candidate IVs: the number of books at home, the education level of the head of the household, the registered residence type of the head of the household, monthly fees on dining, transport, travel, and {health}.
In the preprocessing step, missing values are removed which gives a sample with the response ratio ``own'': ``not own''$\approx 5.88:1$. As the response ratio is highly unbalanced, we randomly remove part of the samples with positive response and arrive at a sample where ``own'': ``not own''$=2:1$. This gives a sample with 2538 observations. 
 We choose these candidate IVs following a similar rationale as in the data analysis in \citet{Rothe09}. Specifically, the first three candidate IVs are measures of human capital, which can be strongly related to income but have little effect on the housing decision. On the other hand, home ownership could be determined by the permanent components of the income stream. The expenses on different aspects (the last four candidate IVs) respond to the income \citep{campbell1991response}. Consumption in daily activities is a necessary part of life and always exists but do not decide on house investments. Therefore, the  monthly consumptions are potentially valid IVs for the current study.

Applying the SpotIV method, the number of books at home and monthly fees on health are excluded from $\widehat{\mathcal{S}}$ and the remaining five IVs are selected as relevant IVs. Among these five IVs, the SpotIV method chooses {the registered residence type of the head of the household}, {the monthly fees on transport}, and the monthly fees on travel as valid IVs. The SpotIV method chooses the monthly fees on dining and monthly fees on health as invalid IVs.

In Figure \ref{fig-wage}, we report the inference results for the conditional average treatment effect for the male and female groups, respectively. Before adjusting for the endogeneity, income and house ownership tend to be negatively associated even though the relationship is insignificant. This result is not reliable due to the existence of unmeasured confounders. 
Assuming all IVs to be valid, the estimated causal effects are close to zero. This result can also be suspicious as two IVs are detected as invalid IVs based on the SpotIV method. Our proposed estimator leads to a significant and positive causal effect of income on house ownership. Comparing the fitted curves for the male and female groups, we observe an interesting phenomenon that the causal effects for males tend to be larger than those for females among subjects with a high or low income. 
\begin{figure}[H]
\centering
\includegraphics[height=5.8cm,width=0.45\textwidth]{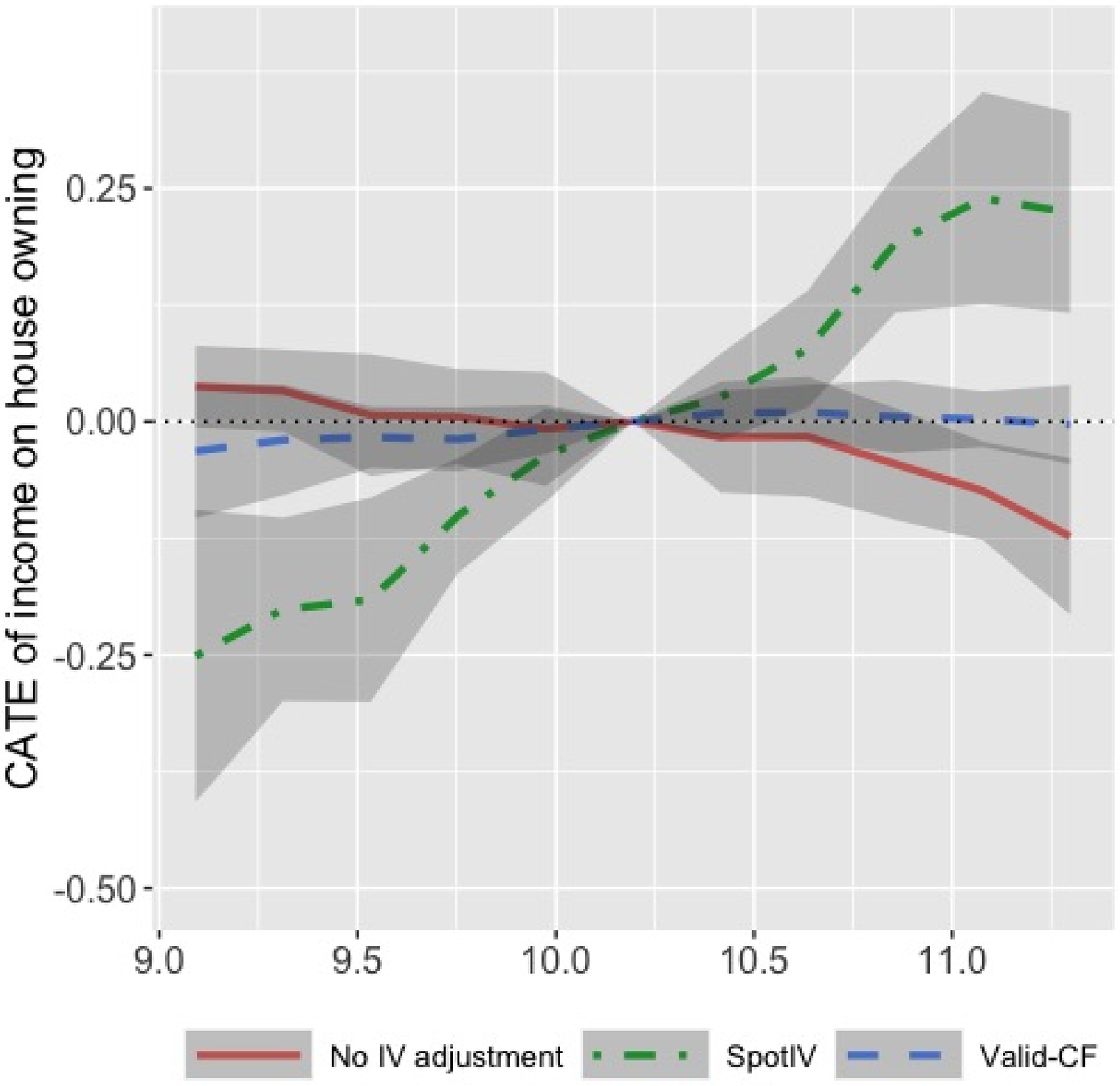}
\includegraphics[height=5.8cm,width=0.45\textwidth]{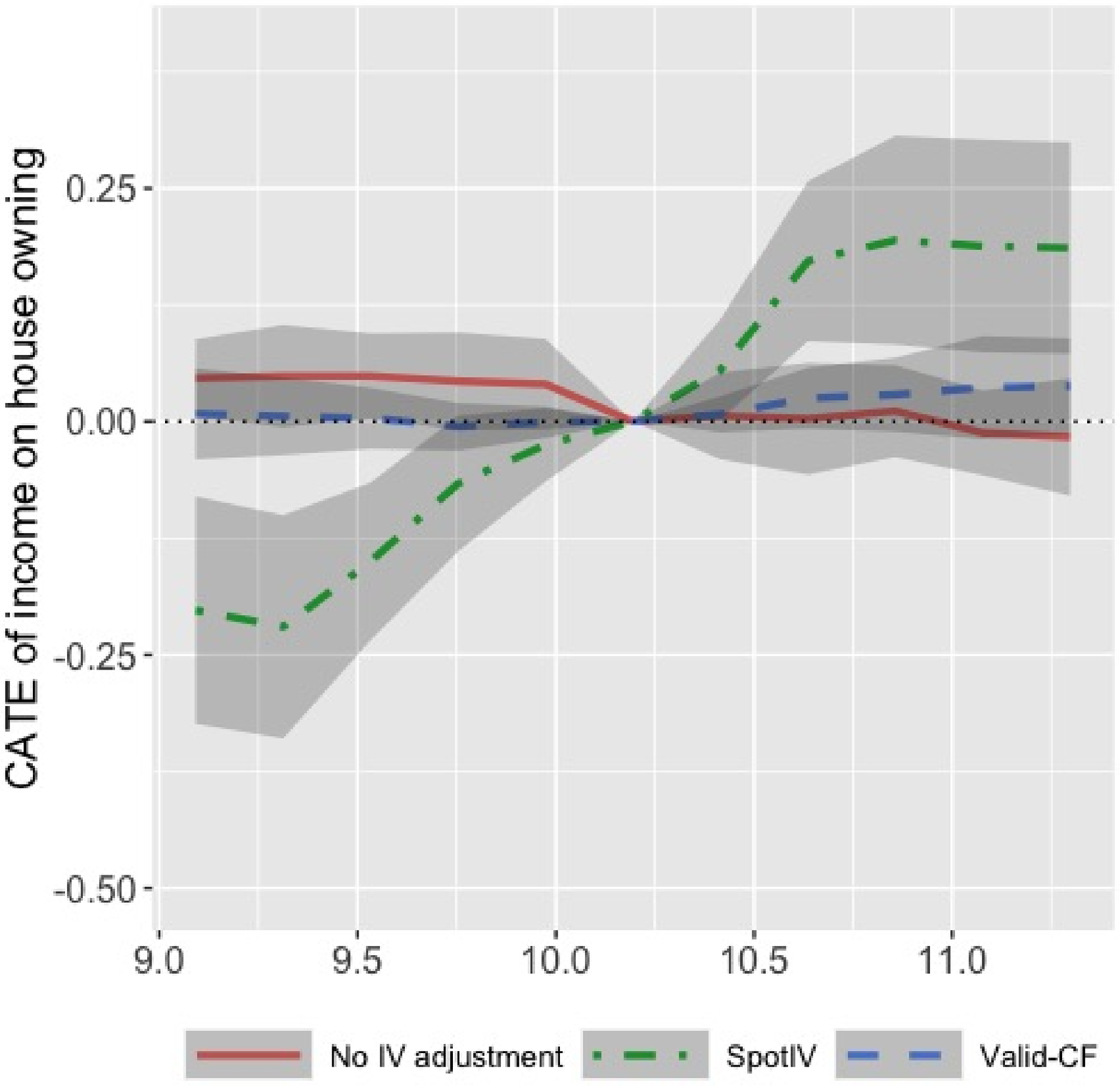}
\caption{The $x$-axis denotes log-transformed income $(d_1)$ and the $y$-axis denotes CATE($d_1,10.2|w_M$) (left) and CATE($d_1,10.2|w_F$) (right), where $w_M$ and $w_F$ are the average values of observed covariates for males and females respectively and the value 10.2 on the $x$-axis is the mean of log-transformed income in the sample. The solid lines are estimated without accounting for unmeasured confounders; the dot-dash lines are estimated based on the proposed SpotIV; the dashed lines are estimated assuming valid control functions (Condition \ref{cond: valid IC}). The shaded area corresponds to the point-wise 95\% confidence intervals for each estimate.}
\label{fig-wage}
\end{figure}

\section*{Acknowledgement}
The research of S. Li was supported by the Fundamental Research Funds for the Central Universities, and the Research Funds of Renmin University of China. 
The research of Z. Guo was supported in part by the NSF grants DMS-1811857, DMS-2015373 and NIH-1R01GM140463-01. We thank Junhui Yang for her help with cleaning the data used in the empirical study. 


\appendix
\section{Proof of Lemma \ref{lem2-semi}}
\label{ap-sec2}
\begin{proposition}
\label{prop-ap}
Under Condition \ref{cond-semi1},
$\E[y_i|w_i,v_i]=\E[y_i|w_i^{\intercal}\Theta,v_i]$.
\end{proposition}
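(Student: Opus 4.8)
The plan is to reduce the claim to an equality of subspaces and then invoke Fisher consistency of the inverse-regression (SIR) construction. The reduced-form identity \eqref{eq-Theta} already gives $\E[y_i\mid w_i,v_i]=\E[y_i\mid w_i^{\intercal}\Theta^*,v_i]$, so it suffices to show that the directions recovered by $\Omega$ carry the same information about $w_i$ as $\Theta^*$, i.e.\ that the two column spaces in $\R^{p}$ coincide. If $\mathrm{span}(\Theta)=\mathrm{span}(\Theta^*)$, then $w_i^{\intercal}\Theta$ is an invertible linear reparametrization of $w_i^{\intercal}\Theta^*$, the two conditioning sigma-fields agree, and $\E[y_i\mid w_i^{\intercal}\Theta,v_i]=\E[y_i\mid w_i^{\intercal}\Theta^*,v_i]=\E[y_i\mid w_i,v_i]$. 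Note that since $y_i$ is binary, \eqref{eq-Theta} also delivers the conditional independence $y_i\indep w_i\mid(w_i^{\intercal}\Theta^*,v_i)$, which is the structural input for the SIR step.

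The core step is to show $\mathrm{range}(\Omega)\subseteq\mathrm{span}(\Sig^{1/2}\Theta^*)$. Since $\Omega=\mathrm{Cov}(\alpha(y_i))$ with $\alpha(y_i)=\Sig^{-1/2}\E[w_i\mid y_i]$, its range is spanned by the centered inverse-regression curve $\Sig^{-1/2}\{\E[w_i\mid y_i]-\E[w_i]\}$, so I would locate this curve. Using the tower property together with $y_i\indep w_i\mid(w_i^{\intercal}\Theta^*,v_i)$, I have $\E[w_i\mid y_i]=\E\big[\,\E[w_i\mid w_i^{\intercal}\Theta^*,v_i]\mid y_i\big]$, and the linearity condition in Condition \ref{cond-semi1} lets me replace the inner conditional expectation by the best linear predictor of $w_i$ on $(w_i^{\intercal}\Theta^*,v_i)$. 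Because $\mathrm{Cov}(w_i,w_i^{\intercal}\Theta^*)=\Sig\Theta^*$ while $w_i$ and $v_i$ are uncorrelated ($\E[w_iv_i]=0$ by \eqref{eq-d}), this predictor takes values in $\mathrm{span}(\Sig\Theta^*)$; hence $\E[w_i\mid y_i]-\E[w_i]\in\mathrm{span}(\Sig\Theta^*)$ and $\Sig^{-1/2}\{\E[w_i\mid y_i]-\E[w_i]\}\in\mathrm{span}(\Sig^{1/2}\Theta^*)$, giving the claimed inclusion.

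Finally I would upgrade the inclusion to an equality and translate back to the original scale. By Condition \ref{cond-semi1} the rank of $\Omega$ equals $M=p_{\eta}+1$, which is the number of columns of $\Theta^*$; since $\Sig$ is positive definite and $\Theta^*$ has full column rank, $\mathrm{span}(\Sig^{1/2}\Theta^*)$ also has dimension $p_{\eta}+1$, forcing $\mathrm{range}(\Omega)=\mathrm{span}(\Sig^{1/2}\Theta^*)$. The eigenvectors of $\Omega$ for its nonzero eigenvalues then form a basis of this subspace, and after the standard SIR back-transformation by $\Sig^{-1/2}$ the recovered directions $\Theta$ satisfy $\mathrm{span}(\Theta)=\mathrm{span}(\Theta^*)$, closing the argument through the first paragraph. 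I expect the inverse-regression step — proving that the centered curve lies in $\mathrm{span}(\Sig\Theta^*)$ in the presence of the extra control variable $v_i$ — to be the main obstacle: the linearity condition together with the orthogonality $\E[w_iv_i]=0$ are precisely what prevent marginalizing over $v_i$ from introducing directions outside $\mathrm{span}(\Theta^*)$. A secondary point requiring care is the standardization bookkeeping ($\Sig^{1/2}$ versus $\Sig^{-1/2}$), needed so that $\Theta$ lands in $\mathrm{span}(\Theta^*)$ rather than in a rotated copy of it.
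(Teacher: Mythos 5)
Your proof is correct and reaches the right conclusion, but it takes a genuinely different route from the paper's. The paper's proof augments the covariates into $(w_i^{\intercal},v_i)^{\intercal}$, forms the augmented matrix $\bar{\Omega}=\mathrm{cov}(\bar{\alpha}(y_i))$, invokes Theorem 3.1 of \citet{Li91} as a black box to obtain $\mathcal{L}(\bar{\Omega})\subseteq\mathcal{L}\bigl(\mathrm{diag}(\Theta^*,1)\bigr)$, and then extracts the leading $p\times p$ block: $\E[w_iv_i]=0$ makes $\bar{\Sig}$ block diagonal so that $\bar{\Omega}_{1:p,1:p}=\Omega$, the factorization $\bar{\Omega}_{1:p,1:p}=\Theta^*R_{1,1}$ gives $\mathcal{L}(\Omega)\subseteq\mathcal{L}(\Theta^*)$, and the rank condition upgrades the inclusion to an equality. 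You avoid the augmentation entirely and re-derive the SIR Fisher-consistency step from first principles for $\Omega$ itself: the tower property combined with the conditional independence $y_i\indep w_i\mid(w_i^{\intercal}\Theta^*,v_i)$ (which, as you note, follows from binaryness of $y_i$), then the linearity condition to replace $\E[w_i\mid w_i^{\intercal}\Theta^*,v_i]$ by the best linear predictor, with $\E[w_iv_i]=0$ entering precisely to annihilate the $v$-coefficient of that predictor. The skeleton (conditional independence, linearity, orthogonality, rank upgrade, equality of sigma-fields) is identical; the execution differs in citation versus self-contained derivation. Your version buys two things: it makes explicit where each hypothesis is used, and it keeps the $\Sig^{1/2}$ bookkeeping honest. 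Indeed, Li's theorem applied to the augmented vector strictly yields $\mathcal{L}(\bar{\Omega})\subseteq\mathcal{L}\bigl(\bar{\Sig}^{1/2}\mathrm{diag}(\Theta^*,1)\bigr)$; the paper's proof silently drops the $\bar{\Sig}^{1/2}$ factor, whereas your derivation correctly lands on $\mathrm{range}(\Omega)=\mathrm{span}(\Sig^{1/2}\Theta^*)$ and flags that the stated identity $\E[y_i\mid w_i,v_i]=\E[y_i\mid w_i^{\intercal}\Theta,v_i]$ then requires the back-transformed directions $\Sig^{-1/2}\Theta$ (or whitened covariates, $\Sig\propto {\rm I}_p$), since $\mathrm{span}(\Sig^{1/2}\Theta^*)$ and $\mathrm{span}(\Theta^*)$ differ in general. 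On this scaling point your write-up is more careful than the paper's own argument.
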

 \begin{proof}[Proof of Proposition \ref{prop-ap}]
Let $\bar{\Sig}=\E(w_i^{\intercal},v_i)(w_i^{\intercal},v_i)^{\intercal}]$, $\bar{\alpha}(y_i)=\E[\bar{\Sig}^{-1/2}(w_i^{\intercal},v_i)|y_i]\in \R^{p+1}$, and $\bar{\Omega}=\textup{cov}(\bar{\alpha}(y_i))$.
We first show that $\mathcal{L}(\bar{\Omega})\subseteq \mathcal{L}\left(\begin{pmatrix}
\Theta^*& 0\\
0 & 1\end{pmatrix}\right)$. Next, we show \begin{align}
\label{sir-eq1}
\mathcal{L}(\bar{\Omega}_{1:p,1:p})=\mathcal{L}(\Omega)= \mathcal{L}\left(\Theta^*\right).
\end{align}

We know that for binary outcomes,
\begin{align*}
\E[y_i|w_i,v_i]=\P(y_i=1|w_i,v_i)=\P(y_i=1|w_i^{\intercal}\Theta^*,v_i).
\end{align*}
That is, $y_i\perp (w_i,v_i)|(w_i^{\intercal}\Theta^*,v_i)$. 
By Condition \ref{cond-semi1}, $\E[(w_i^{\intercal},v_i)|w_i^{\intercal}\Theta^*,v_i]$ is linear in $w_i^{\intercal}\Theta^*,v_i$. Therefore, by Theorem 3.1 in \citet{Li91}, we know that $\mathcal{L}(\bar{\Omega})\subseteq \mathcal{L}\left(\begin{pmatrix}
\Theta^*& 0\\
0 & 1\end{pmatrix}\right)$.

Next, we show (\ref{sir-eq1}). The first equality holds because $\E[w_i^{\intercal}v_i]=0$ by (\ref{eq-d}). As $\mathcal{L}(\bar{\Omega})\subseteq \mathcal{L}\left(\begin{pmatrix}
\Theta^*& 0\\
0 & 1\end{pmatrix}\right)$, we know 
\begin{align}
\label{eq-Theta-trans}
\bar{\Omega}=\begin{pmatrix}
\Theta^*& 0\\
0 & 1\end{pmatrix}\begin{pmatrix}
R_{1,1} & R_{1,2}\\
R_{2,1} & R_{2,2}\end{pmatrix}
\end{align}
for some constant matrix $R\in\R^{(p_{\eta}+2)\times (p+1)}$ and $R_{1,1}\in\R^{(p_{\eta}+1)\times p}$.
We arrive at $\bar{\Omega}_{1:p,1:p}=\Theta^* R_{1,1}$. As rank$(\bar{\Omega}_{1:p,1:p})=M=p_{\eta}+1=$rank$(\Theta^*)$, the proof of (\ref{sir-eq1}) is complete now.

  \end{proof}

 \begin{proposition}[Convergence rate of $\widehat{\Theta}$]
 \label{prop1}
 Under the conditions of Lemma \ref{lem2-semi}, we have
 \[
  \P\left( \max_{1\leq m\leq M}\|\widehat{\Theta}_{.,m}-\Theta_{.,m}\|_2\geq C_1\sqrt{\frac{t}{n}}\right)\leq \exp(-C_2t).
\]
 \end{proposition}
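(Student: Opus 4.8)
The plan is to reduce the eigenvector convergence to two ingredients: (i) an operator-norm concentration inequality for $\widehat{\Omega}$ around $\Omega$ at the parametric rate $\sqrt{t/n}$ with a sub-exponential tail, and (ii) a Davis--Kahan $\sin\Theta$ perturbation bound that converts this into control of the eigenvectors, using the spectral gap guaranteed by Condition \ref{cond-semi1}. Since that condition makes the nonzero eigenvalues of $\Omega$ simple, with exactly $M$ of them and the remaining $p-M$ equal to zero, all relevant eigengaps — between consecutive nonzero eigenvalues and between the $M$-th nonzero eigenvalue and $0$ — are bounded below by a positive constant $\delta$. Fixing the sign of each eigenvector by $\widehat{\Theta}_{.,m}^{\intercal}\Theta_{.,m}\geq 0$, Davis--Kahan yields $\|\widehat{\Theta}_{.,m}-\Theta_{.,m}\|_2\leq \sqrt{2}\,\delta^{-1}\|\widehat{\Omega}-\Omega\|_2$ for each $m\leq M$, so everything hinges on Step (i); note that this comparison is made eigenvector-by-eigenvector and does not require the estimated rank $\widehat{M}$ to equal $M$.

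For Step (i), write $a=\alpha(1)-\alpha(0)$, $\widehat{a}=\widehat{\alpha}(1)-\widehat{\alpha}(0)$, and $\bar{p}=\P(y_i=1)\P(y_i=0)$, $\widehat{p}=\widehat{\P}(y_i=1)\widehat{\P}(y_i=0)$, so that $\Omega=\bar{p}\,aa^{\intercal}$ and $\widehat{\Omega}=\widehat{p}\,\widehat{a}\widehat{a}^{\intercal}$. I would use the decomposition
\[
\widehat{\Omega}-\Omega=\widehat{p}\,(\widehat{a}\widehat{a}^{\intercal}-aa^{\intercal})+(\widehat{p}-\bar{p})\,aa^{\intercal},\qquad \widehat{a}\widehat{a}^{\intercal}-aa^{\intercal}=(\widehat{a}-a)\widehat{a}^{\intercal}+a(\widehat{a}-a)^{\intercal},
\]
together with $\|\widehat{a}\widehat{a}^{\intercal}-aa^{\intercal}\|_2\leq \|\widehat{a}-a\|_2(\|\widehat{a}\|_2+\|a\|_2)$, to reduce the problem to sub-exponential tail bounds for $\|\widehat{a}-a\|_2$ and $|\widehat{p}-\bar{p}|$. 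The scalar bound $|\widehat{p}-\bar{p}|\lesssim\sqrt{t/n}$ follows from Hoeffding's inequality for the bounded indicators $\1(y_i=1)$. For $\widehat{a}-a$, writing $\mu_k=\E[w_i\mid y_i=k]$, $\bar{w}_k=n_k^{-1}\sum_{i:y_i=k}w_i$, and using $\widehat{\alpha}(k)=\widehat{\Sig}^{-1/2}\bar{w}_k$, $\alpha(k)=\Sig^{-1/2}\mu_k$, I would split
\[
\widehat{a}-a=(\widehat{\Sig}^{-1/2}-\Sig^{-1/2})(\bar{w}_1-\bar{w}_0)+\Sig^{-1/2}\bigl[(\bar{w}_1-\bar{w}_0)-(\mu_1-\mu_0)\bigr].
\]

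Each piece is controlled on a single good event. First, since the $w_i$ are sub-Gaussian and $p$ is fixed, a matrix Bernstein / covering argument gives $\|\widehat{\Sig}-\Sig\|_2\lesssim\sqrt{t/n}$ with probability at least $1-\exp(-ct)$; on that event, smoothness of the matrix inverse square root around the well-conditioned $\Sig$ (positive definite by Condition \ref{cond-semi}) yields $\|\widehat{\Sig}^{-1/2}-\Sig^{-1/2}\|_2\lesssim\|\widehat{\Sig}-\Sig\|_2\lesssim\sqrt{t/n}$. Second, because $\P(y_i=1)$ is bounded away from $0$ and $1$ — as implied by the positivity of the nonzero eigenvalue $\bar{p}\|a\|_2^2$ of $\Omega$ in Condition \ref{cond-semi1}, which also forces $n_k\asymp n$ with high probability — the difference $\bar{w}_k-\mu_k$ is an average of $n_k$ i.i.d. sub-Gaussian vectors from the conditional law of $w_i$ given $y_i=k$, so a conditional vector Bernstein bound gives $\|\bar{w}_k-\mu_k\|_2\lesssim\sqrt{t/n}$ with a sub-exponential tail. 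Combined with $\|\bar{w}_1-\bar{w}_0\|_2=O(1)$ and $\|\Sig^{-1/2}\|_2=O(1)$, this delivers $\|\widehat{a}-a\|_2\lesssim\sqrt{t/n}$. Assembling the pieces yields $\|\widehat{\Omega}-\Omega\|_2\lesssim\sqrt{t/n}$ on an event of probability at least $1-\exp(-C_2 t)$, and feeding this into the Davis--Kahan bound proves the claim with $C_1\asymp\sqrt{2}/\delta$.

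The main obstacle is the careful bookkeeping in Step (i): the matrix inverse-square-root perturbation (which needs $\Sig$ uniformly well-conditioned and $\widehat{\Sig}$ close to it on the good event) and the verification that $n_k/n$ stays bounded away from $0$, so that the conditional averages $\bar{w}_k$ concentrate at the $n^{-1/2}$ rate; the remainder is an assembly of standard sub-Gaussian/sub-exponential inequalities over a single high-probability event. A minor point worth recording is that the sign ambiguity of eigenvectors is resolved by the convention $\widehat{\Theta}_{.,m}^{\intercal}\Theta_{.,m}\geq 0$, which is harmless because, as noted after Lemma \ref{lem2-semi}, the downstream procedure is invariant to linear (in particular, sign) transformations of $\Theta$.
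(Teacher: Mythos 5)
Your proposal follows essentially the same route as the paper's proof: both reduce the claim to an operator-norm concentration bound $\|\widehat{\Omega}-\Omega\|_2\lesssim\sqrt{t/n}$ with sub-exponential tails — obtained by exploiting the rank-one structure $\Omega=\P(y_i=1)\P(y_i=0)\{\alpha(1)-\alpha(0)\}\{\alpha(1)-\alpha(0)\}^{\intercal}$, a perturbation bound for $\widehat{\Sig}^{-1/2}-\Sig^{-1/2}$, and concentration of the conditional means $\hat{\alpha}(k)$ — and then convert it into eigenvector error via a spectral-gap perturbation inequality (the paper cites Theorem 5 of El Karoui (2008), which plays exactly the role of your Davis--Kahan step, using simplicity of the nonzero eigenvalues from Condition \ref{cond-semi1}). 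The proposal is correct and matches the paper's argument in both structure and ingredients.
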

\begin{proof}[Proof of Proposition\ref{prop1}]
Notice that
\begin{align*}
\Omega=\Sig^{-1/2}Cov(\alpha(y_i))\Sig^{-1/2}=\Sig^{-1/2}\E[\alpha(y_i)\alpha(y_i)^{\intercal}]\Sig^{-1/2}
\end{align*}
as $\E[\alpha(y_i)]=\E[w_i]=0$.

The following decomposition holds
\begin{align}
\label{decom1}
\|\widehat{\Omega}-\Omega\|_2&\leq 2\|\Sig^{-1/2}-\widehat{\Sig}^{-1/2}\|_2\|cov(\alpha(y_i))\Sig^{-1/2}\|_2\nonumber\\
&\quad +\|\Sig^{-1/2}\|^2_2\|cov(\alpha(y_i))-\frac{1}{n}\sum_{i=1}^n\hat{\alpha}(y_i)\hat{\alpha}(y_i)^{\intercal}\|_2+r_n,
\end{align}
where $r_n$ is of smaller order than the first two terms.

For the first term,
\begin{align*}
\|\Sig^{-1/2}-\widehat{\Sig}^{-1/2}\|_2
\leq\|\Sig-\widehat{\Sig}\|_2\|\Sig^{1/2}+\widehat{\Sig}^{1/2}\|^{-1}_2.
\end{align*}
Since $\widehat{\Sig}$ is an average of \textit{i.i.d.} sub-exponential variables, we have
\[
\P\left( \|\Sig-\widehat{\Sig}\|_2\geq c\sqrt{t/n}\right)\leq \exp(-ct).
\]
As $\|\Sig^{-1/2}cov(\alpha(y_i))\|_2\leq C<\infty$, for the first term in (\ref{decom1}),
\begin{align}
\label{eq0-pf1}
\P\left( 2\|\Sig^{-1/2}-\widehat{\Sig}^{-1/2}\|_2\|cov(\alpha(y_i))\Sig^{-1/2}\|_2\geq c_1\sqrt{t/n}\right)\leq \exp(-c_2t).
\end{align}

To bound the second term in (\ref{decom1}), for  binary $y_i$, it holds that
\begin{align*}
\alpha(1)=\E[w_i|y_i=1]~~\hat{\alpha}(1)=\frac{1}{\sum_{i=1}^n\mathbbm{1}(y_i=0)}\sum_{i=1}^nw_i\mathbbm{1}(y_i=1)\\
\alpha(0)=\E[w_i|y_i=0]~~\hat{\alpha}(0)=\frac{1}{\sum_{i=1}^n\mathbbm{1}(y_i=0)}\sum_{i=1}^nw_i\mathbbm{1}(y_i=0).
\end{align*}
By some simple algebra, we can show that
\begin{align*}
&cov(\alpha(y_i))=\P(y_i=1)\P(y_i=0)(\alpha(1)-\alpha(0))(\alpha(1)-\alpha(0))^{\intercal}.
\end{align*}

The following decomposition holds
\begin{align*}
&\left\|\frac{1}{n}\sum_{i=1}^n\hat{\alpha}(y_i)\hat{\alpha}(y_i)^{\intercal}-\frac{cov(\alpha(y_i))}{\P(y_i=1)\P(y_i=0)}\right\|_2\\
&\leq 2\|(\hat{\alpha}(1)-\hat{\alpha}(0)-\alpha(1)+\alpha(0))(\alpha(1)-\alpha(0))^{\intercal}\|_2+\|\alpha(1)-\alpha(0)-\hat{\alpha}(1)+\hat{\alpha}(0)\|_2^2\\
&\leq 4\|\alpha(1)-\alpha(0)\|_2\max_{k\in \{0,1\}}\|\hat{\alpha}(k)-\alpha(k)\|_2+4\max_{k\in \{0,1\}}\|\hat{\alpha}(k)-\alpha(k)\|_2^2.
\end{align*}
First notice that $\alpha(k)=\frac{\E\left[w_i\mathbbm{1}(y_i=k)\right]}{\P(y_i=k)}$. Therefore,
\begin{align*}
\|\hat{\alpha}(k)-\alpha(k)\|_2&\leq |\frac{1}{\P(y_i=k)}-\frac{n}{\sum_{i=1}^n\mathbbm{1}(y_i=k)}|\|\E\left[w^{\intercal}_i\mathbbm{1}(y_i=k)\right]\|_2\\
&+\frac{1}{\P(y_i=k)}\|\frac{1}{n}\sum_{i=1}^n w_i\mathbbm{1}(y_i=k)-\E\left[w_i\mathbbm{1}(y_i=k)\right]\|_2.\\
   |\frac{1}{n}\sum_{i=1}^n(\hat{v}_i-v_i)\mathbbm{1}(y_i=k)|&= |\frac{1}{n}\sum_{i=1}^n\mathbbm{1}(y_i=k)w_i^{\intercal}(\hat{\gam}-\gam)|\\
   &\leq  \|\frac{1}{n}\sum_{i=1}^n\mathbbm{1}(y_i=k)w_i^{\intercal}\|_2\|\hat{\gam}-\gam\|_2.
\end{align*}
Hence, it is straight forward to show that
\[
  \P\left(\left\|\frac{1}{n}\sum_{i=1}^n\hat{\alpha}(y_i)\hat{\alpha}(y_i)^{\intercal}-\frac{cov(\alpha(y_i))}{\P(y_i=1)\P(y_i=0)}\right\|_2\geq c_3\sqrt{t/n}\right)\leq \exp(-c_4t)
\]
for sufficiently large constants $c_3$ and $c_4$.

In view of (\ref{decom1}), we have shown
\begin{equation}
\label{pf-eq1}
   \P\left(\|\widehat{\Omega}-\Omega\|_2\geq c_5\sqrt{\frac{t}{n}}\right)\leq \exp(-c_6t)
\end{equation}
for sufficiently large constants $c_5$ and $c_6$.

Next, we show the the eigenvalues of $\widehat{\Omega}$ converge to the eigenvalues of $\Omega$.
In fact,
\begin{align*}
&\max_{1\leq k\leq p} \left|\hat{\lambda}_k- \hat{\lambda}_k\right|\leq \max_{\|u\|_2=1}|u^{\intercal}(\widehat{\Omega}-\Omega)u|\leq \|\widehat{\Omega}-\Omega\|_2.
\end{align*}
For the eigenvectors,  we use Theorem 5 of \citet{Karoui08}. Under Condition \ref{cond-semi1}, we have
\begin{align*}
\|\widehat{\Phi}_{.,m}-\Phi_{.,m}\|_2\leq \frac{\|\widehat{\Omega}-\Omega\|_2}{\lambda_m(\Omega)}~~\forall ~1\leq m\leq 2.
\end{align*}
In view of (\ref{pf-eq1}), we have shown
\[
  \P\left( \max_{1\leq m\leq M}\|\widehat{\Theta}_{.,m}-\Theta_{.,m}\|_2\geq C_1\sqrt{\frac{t}{n}}\right)\leq \exp(-C_2t).
\]

\end{proof}

\begin{proof}[Proof of Lemma \ref{lem2-semi}]
For $\widehat{\gam}$ computed via (\ref{est1}), under Condition \ref{cond-semi}, it is easy to show that
\begin{equation}
\label{eq0-pf2}
  \sqrt{n}(\hat{\gam}-\gam)\xrightarrow{D} N\left(0,\sig^2_v\E^{-1}[w_iw_i^{\intercal}]\right).
\end{equation}

Define an event
\begin{align}
\label{eq-E1}
E_1=\left\{\widehat{M}=M,\widehat{\mathcal{S}}=\mathcal{S} \right\}.
\end{align}
We first show that $\P(E_1)\rightarrow 1$ as $n\rightarrow \infty$.
Given the results of Proposition \ref{prop1}, by Theorem 2 of \citet{Zhu06}, we know that for $C_n=n^{c_0}$ and $c_0\in(0,1)$, $\P(\widehat{M}=M)\rightarrow 1$.

For the last statement in $E_1$, it is easy to show 
\[ 
  |\hat{\sig}_v^2-\sig_v^2|=O_P(n^{-1/2}).
\]
Let $\widehat{\omega}_j=\hat{\sig}^2_v \{\widehat{\Sig}^{-1}\}_{j,j}$. 
For $j \in \mathcal{S}$, we have
\begin{align*}
&\P\left(\left|\widehat{\gam}_j\right|\geq\sqrt{\widehat{\omega}_{j}} \sqrt{\frac{2.01 \log n }{n}}\right)\geq \P\left(|\gam_j|-\left|\widehat{\gam}_j-\gam_j\right|\geq \sqrt{\widehat{\omega}_{j}} \sqrt{\frac{2.01 \log n }{n}}\right)\\
&=\P\left(\left|\widehat{\gam}_j-\gam_j\right|\leq |\gam_j|-\sqrt{\widehat{\omega}_{j}} \sqrt{\frac{2.01 \log n }{n}}\right)\rightarrow 1,
\end{align*}
where the convergence follows from (\ref{eq0-pf2}) and $|\gam_j|\geq c_0>0$ for $j\in\mathcal{S}$.
For $j \in \mathcal{S}^{c}$, we have
\begin{align*}
&\P\left(\left|\widehat{\gamma}_j\right|>\sqrt{\widehat{\omega}_{j}}\sqrt{\frac{2.01 \log n }{n}} \right)=\P\left(\left|\widehat{\gamma}_j-\gam_j\right|>\sqrt{\widehat{\omega}_{j}}\sqrt{\frac{2.01 \log n }{n}} \right)=o(1),
\end{align*}
where the last step is due to $\|\hat{\gam}-\gam\|_2=O_P(n^{-1/2})$.
Combining above two expressions, we have establish that
\begin{equation}
\mathbf{P}\left(\widehat{\mathcal{S}}=\mathcal{S}\right)\rightarrow 1.
\label{eq: selection case 0}
\end{equation}

It suffices to prove the rest of the results conditioning on the event $E_1$. Under the majority rule, the median of $\{\widehat{\Theta}_{j,m}/\hat{\gam}_j\}_{j\in S}$ must be evaluated at a valid IV  for $m=1,\dots,M$.

Together with Proposition \ref{prop1}, we have
\begin{align}
\label{eq2-lem1}
\P\left(\max_{1\leq m\leq M}\max_{j\in\mathcal{S}}\left|\frac{\widehat{\Theta}_{j,m}}{\hat{\gam}_j}-\frac{\Theta_{j,m}}{\gam_j}\right|\geq c_1\sqrt{\frac{t}{n}}\right)\leq \exp(-c_2t)
\end{align}
for some positive constants $c_1$ and $c_2$. Notice that
\[
\max_{1\leq m\leq M}|\hat{b}_m-b_m|\leq \max_{1\leq m\leq M}\max_{j\in\mathcal{S}}\left|\frac{\widehat{\Theta}_{j,m}}{\hat{\gam}_j}-\frac{\Theta_{j,m}}{\gam_j}\right|.
\]
The proof for (\ref{eq: B-accuracy}) is complete now.

Finally, we are left to show $\E[y_i|d_i,w_i^{\intercal},v_i]=\E[y_i|(d_i,w_i^{\intercal})B,v_i]$. 
To see this, notice that by Condition \ref{cond-semi1},
\[
  \E[y_i|w_i,v_i]=\E[y_i|w_i^{\intercal}\Theta^*,v_i]=\E[y_i|w_i^{\intercal}\Theta,v_i].
\]
As $\Theta$ has the smallest possible dimension, we know $\Theta=\Theta^*T$ for some constant matrix $T\in\R^{p_{\eta}+1}\times M$.
Therefore,
\begin{align*}
\E[y_i|d_i,w_i,v_i]&=\E[y_i|w_i,v_i]=\E[y_i|w_i^{\intercal}\Theta,v_i]=\E[y_i|w_i^{\intercal}\Theta^*T,v_i]\\
&=\E[y_i|(d_i-v_i,w_i^{\intercal})B^*T,v_i]\\
&=\E[y_i|(d_i,w_i^{\intercal})B^*T,v_i]
\end{align*}

To complete the proof, we only need to show $B$ defined in (\ref{eq-B-ap}) satisfies that $B=B^*T$ given that $\Theta=\Theta^*T$ for some invertible $T\in\R^{(p_{\eta}+1)\times (p_{\eta}+1)}$ by (\ref{eq-Theta-trans}).
Notice that $\Theta_{j,.}=\Theta^*_{j,.}T$. If $j\in V$, then $\Theta_{j,.}=(\gam_j \beta, \bm{0}_{p_{\eta}}^{\intercal})T$ and $\Theta_{j,.}/\gam_j=(\beta,\bm{0}_{p_{\eta}}^{\intercal})T$.
Therefore, $b_k=Median(\{\Theta_{j,k}/\gam_j\}_{j\in\mathcal{S}})=(\beta,\bm{0}_{p_{\eta}}^{\intercal})T_{.,k}$ for $k=1,\dots, M$. Moreover,
\[
  \Theta_{.,k}-b_k\gam=(\gam \beta+\kappa, \eta)T_{.,k}-(\beta\gam,\bm{0}_{p_{\eta}}^{\intercal})T_{.,k}=(\kappa,\eta)T_{.,k},~k=1,\dots,M.
\]
Hence, we have showed that $B=B^*T$.
  \end{proof}

\section{Proof of Theorem \ref{thm: asf}}
\label{ap-sec4}

It follows from the condition $h=n^{-c}$ for $0<c<1/4$ that $nh^4\gg \log n$ and $h\log n\rightarrow 0.$
We recall the following definitions,
\begin{equation*}
t_i=((d_i,w_i^{\intercal})B,{v}_i)^{\intercal},\quad \widehat{t}_i=((d_i,w_i^{\intercal})\widehat{B},\widehat{v}_i)^{\intercal}, \quad {s}_i=((d,w^{\intercal})B,{v}_i)^{\intercal},\quad \widehat{s}_i=((d,w^{\intercal})\widehat{B},\widehat{v}_i)^{\intercal}.
\end{equation*}

In event $E_1$ (\ref{eq-E1}), $\{\widehat{M}=M=2\}$ and the kernel is defined in three dimensions. That is, for $a,b\in \R^{3},$
$$
K_{H}(a, b)=\prod_{l=1}^{3} \frac{1}{h}k\left(\frac{a_{l}-b_{l}}{h}\right),
$$
where $h$ is the bandwidth and $k(x)={\bf 1}\left(|x|\leq 1/2\right).$ 
We define the events
$$ \mathcal{A}_1=\left\{\|\widehat{B}-B\|_2\leq C \sqrt{\frac{\log n}{n}}, \|\widehat{\gamma}-\gamma\|_2\leq C \sqrt{\frac{\log n}{n}}\right\},~~\mathcal{A}_2=\max\{\|w_i\|_{\infty}, |d_i|\}\lesssim \sqrt{\log n}.$$
By Lemma \ref{lem2-semi} and $w_i$ and $v_i$ being sub-gaussian, we establish that {$\mathbf{P}(\mathcal{A}_1\cap \mathcal{A}_3)\geq {\hprob}-P(E_1)$.}  On the event $\mathcal{A}_1\cap \mathcal{A}_2,$ we have
$$
\max_{1\leq i\leq n} \max\left\{ \|\widehat{s}_i-s_i\|_2,\|\widehat{t}_i-t_i\|_2 \right\}\leq C {{\log n}}/{\sqrt{n}}
$$
for a large positive constant $C>0.$ 

We start with the decomposition 
\begin{equation}
\widehat{\phi}(d,w)-{\phi}^*(d,w)
=\frac{1}{n} \sum_{i=1}^{n} \left[\widehat{g}(\widehat{s}_i)-{g}(\widehat{s}_i)\right]+\frac{1}{n} \sum_{i=1}^{n} {g}(\widehat{s}_i) -\int {g}({s}_i) f_{v}(v_i) dv_i
\label{eq: error decompotision}
\end{equation}
where $f_{v}$ is the density of $v_i.$
By Lemma \ref{lem2-semi}, we define 
\begin{equation}
\epsilon_i=y_i-\E[y_i|(d_i,w^{\intercal}_i)B,v_i]=y_i-g((d_i,w_i^{\intercal})B,v_i) \quad \text{for} \; 1\leq i\leq n.
\label{eq: reg error definition}
\end{equation}
We plug in the expression of $\widehat{g}(\widehat{s}_i)$ and decompose the error $\frac{1}{n} \sum_{i=1}^{n} \left[\widehat{g}(\widehat{s}_i)-{g}(\widehat{s}_i)\right]$ as 
\begin{equation}
\begin{aligned}
&\frac{1}{n} \sum_{i=1}^{n} \frac{\sum_{j=1}^{n}[y_j-{g}(\widehat{s}_i)] K_{H}(\widehat{s}_i, \widehat{t}_j)}{\sum_{j=1}^{n} K_{H}(\widehat{s}_i, \widehat{t}_j)}
=\frac{1}{n} \sum_{i=1}^{n} \frac{\sum_{j=1}^{n}\epsilon_j  K_{H}(\widehat{s}_i, \widehat{t}_j)}{\sum_{j=1}^{n} K_{H}(\widehat{s}_i, \widehat{t}_j)}\\
&+\frac{1}{n} \sum_{i=1}^{n} \frac{\sum_{j=1}^{n}[g(\widehat{t}_j)-{g}(\widehat{s}_i)]  K_{H}(\widehat{s}_i, \widehat{t}_j)}{\sum_{j=1}^{n} K_{H}(\widehat{s}_i, \widehat{t}_j)}
+\frac{1}{n} \sum_{i=1}^{n} \frac{\sum_{j=1}^{n}[g({t}_j)-{g}(\widehat{t}_j)]  K_{H}(\widehat{s}_i, \widehat{t}_j)}{\sum_{j=1}^{n} K_{H}(\widehat{s}_i, \widehat{t}_j)}.
\end{aligned}
\label{eq: second decomp}
\end{equation}
Since 
$$\left|{g}(\widehat{t}_j)-g({t}_j)\right|\cdot K_{H}(\widehat{s}_i, \widehat{t}_j)\leq \|\dg({t}_j+c(\widehat{t}_j-{t}_j))\|_2 \|\widehat{t}_j-{t}_j\|_2 \cdot K_{H}(\widehat{s}_i, \widehat{t}_j),$$ {we apply the boundedness assumption on $\dg$ imposed in Condition 5.3 (b) and obtain that 
$\left|{g}(\widehat{t}_j)-g({t}_j)\right|\lesssim \log n/\sqrt{n}$ on the event $\mathcal{A}.$} Here, we use the fact that, if $K_{H}(\widehat{s}_i, \widehat{t}_j)>0$ and $C\log n/\sqrt{n}\leq h/2,$ then $\|\widehat{t}_j-s_i\|_{\infty}\leq \|\widehat{t}_j-\widehat{s}_i\|_{\infty}+\|\widehat{s}_i-s_i\|_{\infty}\leq h.$ 

Hence, we have 
$$
\left|\frac{1}{n} \sum_{i=1}^{n} \frac{\sum_{j=1}^{n}[g({t}_j)-{g}(\widehat{t}_j)]  K_{H}(\widehat{s}_i, \widehat{t}_j)}{\sum_{j=1}^{n} K_{H}(\widehat{s}_i, \widehat{t}_j)}\right|\lesssim {\log n}/\sqrt{{n}}.
$$
Then following from \eqref{eq: error decompotision} and \eqref{eq: second decomp}, it is sufficient to control the following terms,
\begin{equation}
\small
\underbrace{\frac{1}{n} \sum_{i=1}^{n} {g}(\widehat{s}_i) -\int {g}({s}_i) f_{v}(v_i) dv_i}_{T_1}+
\underbrace{\frac{1}{n} \sum_{i=1}^{n} \frac{\sum_{j=1}^{n}\epsilon_j  K_{H}(\widehat{s}_i, \widehat{t}_j)}{\sum_{j=1}^{n} K_{H}(\widehat{s}_i, \widehat{t}_j)}}_{T_2}+\underbrace{\frac{1}{n} \sum_{i=1}^{n} \frac{\sum_{j=1}^{n}[g(\widehat{t}_j)-{g}(\widehat{s}_i)]  K_{H}(\widehat{s}_i, \widehat{t}_j)}{\sum_{j=1}^{n} K_{H}(\widehat{s}_i, \widehat{t}_j)}}_{T_3}.
\label{eq: key terms}
\end{equation}
We now control the three terms $T_1, T_2$ and $T_3$ separately.

\paragraph{Control of $T_1$.} The term $T_1$ is controlled by the following lemma, whose proof is presented in Section C.1.
\begin{lemma} Suppose the assumptions of Theorem \ref{thm: asf} hold, then with probability larger than $\hprob-\frac{1}{t^2},$
\begin{equation}
\left|\frac{1}{n} \sum_{i=1}^{n} {g}(\widehat{s}_i) -\int {g}({s}_i) f_{v}(v_i) dv_i\right|\lesssim {\frac{t+\log n}{\sqrt{n}}}
\label{eq: approx error}
\end{equation}
\label{lem: target approx}
\end{lemma}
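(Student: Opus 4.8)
The plan is to split $T_1$ into a \emph{plug-in} error, arising from replacing the true indices $s_i$ by their estimates $\widehat s_i$, and a \emph{Monte-Carlo} error, arising from approximating the integral by an empirical average. Concretely, I would write
\[
T_1=\underbrace{\frac1n\sum_{i=1}^n\bigl[g(\widehat s_i)-g(s_i)\bigr]}_{T_{1a}}+\underbrace{\frac1n\sum_{i=1}^n g(s_i)-\E[g(s_1)]}_{T_{1b}},
\]
using that $s_i=((d,w^\intercal)B,v_i)^\intercal$ with $v_i$ i.i.d.\ of density $f_v$, so that $\E[g(s_1)]=\int g((d,w^\intercal)B,v)f_v(v)\,dv$ is exactly the quantity being subtracted. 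It then suffices to bound $T_{1a}$ and $T_{1b}$ separately.

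For $T_{1a}$ I would invoke the uniform index bound already recorded before the decomposition: on $\mathcal A_1\cap\mathcal A_2$ one has $\max_i\|\widehat s_i-s_i\|_2\lesssim \log n/\sqrt n$. The $\log n$ enters through the last coordinate $\widehat v_i-v_i=w_i^\intercal(\gamma-\widehat\gamma)$, combining $\|w_i\|_\infty\lesssim\sqrt{\log n}$ (on $\mathcal A_2$) with $\|\widehat\gamma-\gamma\|_2\lesssim\sqrt{\log n/n}$ (on $\mathcal A_1$), while the fixed evaluation point contributes the first two coordinates $(d,w^\intercal)(\widehat B-B)$, of order $\sqrt{\log n/n}$. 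Since the bandwidth restriction $h=n^{-\mu}$ with $\mu<1/4$ yields $\log n/\sqrt n\ll h$, each $\widehat s_i$ lies in $\mathcal N_h(s_i)$, the neighborhood on which the gradient bound $\|\dg\|_2\le C$ of Condition \ref{cond-semi2}(b) holds. A first-order Taylor expansion then gives $|g(\widehat s_i)-g(s_i)|\le\|\dg(\xi_i)\|_2\,\|\widehat s_i-s_i\|_2\lesssim\log n/\sqrt n$ uniformly in $i$, so $|T_{1a}|\lesssim\log n/\sqrt n$ deterministically on $\mathcal A_1\cap\mathcal A_2$, an event of probability at least $1-n^{-c}$.

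For $T_{1b}$ I would note that $g(s_1),\dots,g(s_n)$ are i.i.d.\ and, by the boundedness of $g$ in Condition \ref{cond-semi2}(b) (indeed $g\in(0,1)$), satisfy $\Var(g(s_1))\le 1/4$. Chebyshev's inequality gives $\P(|T_{1b}|\ge t/\sqrt n)\le \Var(g(s_1))/t^2\le 1/t^2$, so $|T_{1b}|\lesssim t/\sqrt n$ with probability at least $1-1/t^2$. A union bound then yields $|T_1|\le|T_{1a}|+|T_{1b}|\lesssim (t+\log n)/\sqrt n$ on an event of probability at least $1-n^{-c}-1/t^2$, which is the claim.

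The Taylor step and Chebyshev's inequality are routine; the only point requiring care is verifying that the estimated index stays inside the region where the smoothness of $g$ is assumed, i.e.\ $\|\widehat s_i-s_i\|_\infty\le h$ uniformly over $i$. This is precisely where the bandwidth condition $\mu<1/4$ (equivalently $h\gg\log n/\sqrt n$) is used, and it is the main, though mild, obstacle. I would also remark that $T_{1a}$ is of strictly smaller order ($\log n/\sqrt n$) than the variance-type fluctuation $1/\sqrt{nh^2}$ that governs the final rate, so its contribution is ultimately negligible relative to the dominant stochastic term analyzed in the control of $T_2$.
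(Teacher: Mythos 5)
Your proof is correct and is essentially the argument the paper intends for this lemma: the structure of the bound itself dictates it, with the $1/t^2$ failure probability and the $t/\sqrt{n}$ term coming from Chebyshev's inequality applied to the i.i.d.\ average of the bounded variables $g(s_i)$ (whose mean is exactly $\int g(s_i)f_v(v_i)\,dv_i$ since the only randomness in $s_i$ is $v_i$), and the $\log n/\sqrt{n}$ term coming from the uniform bound $\max_i\|\widehat{s}_i-s_i\|_2\lesssim \log n/\sqrt{n}$ on $\mathcal{A}_1\cap\mathcal{A}_2$ combined with the gradient bound of Condition \ref{cond-semi2}(b) and the observation that $\log n/\sqrt{n}\ll h$ keeps $\widehat{s}_i$ inside $\mathcal{N}_h(s_i)$. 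The only soft spot, whether the segment from $s_i$ to $\widehat{s}_i$ stays inside the region $\mathcal{N}_h(s_i)\cap\mathcal{T}$ where the gradient bound is assumed, is a boundary technicality that the paper treats at the same level of rigor in its analogous bound for $|g(\widehat{t}_j)-g(t_j)|$, so your proposal matches the paper's proof in both decomposition and detail.
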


\paragraph{Control of $T_2$.}
We approximate $T_2$ by 
$\frac{1}{n} \sum_{i=1}^{n} \frac{\frac{1}{n}\sum_{j=1}^{n}\epsilon_j  K_{H}({s}_i, {t}_j)}{\frac{1}{n}\sum_{j=1}^{n} K_{H}({s}_i, {t}_j)}=\frac{1}{n}\sum_{j=1}^{n}\epsilon_ja_j$ with 
\begin{equation}
a_j=\frac{1}{n} \sum_{i=1}^{n} \frac{ K_{H}({s}_i, {t}_j)}{\frac{1}{n}\sum_{j=1}^{n} K_{H}({s}_i, {t}_j)}.
\label{eq: weight definition}
\end{equation}
Then the approximation error is 
{\small
\begin{equation}
\begin{aligned}
&\frac{1}{n} \sum_{i=1}^{n} \frac{\frac{1}{n}\sum_{j=1}^{n}\epsilon_j  K_{H}(\widehat{s}_i, \widehat{t}_j)}{\frac{1}{n}\sum_{j=1}^{n} K_{H}(\widehat{s}_i, \widehat{t}_j)}-\frac{1}{n} \sum_{i=1}^{n} \frac{\frac{1}{n}\sum_{j=1}^{n}\epsilon_j  K_{H}({s}_i, {t}_j)}{\frac{1}{n}\sum_{j=1}^{n} K_{H}({s}_i, {t}_j)}\\
=&\frac{1}{n} \sum_{i=1}^{n} \frac{\frac{1}{n}\sum_{j=1}^{n}\epsilon_j [K_{H}(\widehat{s}_i, \widehat{t}_j)-K_{H}(s_i, {t}_j)]}{\frac{1}{n}\sum_{j=1}^{n} K_{H}(\widehat{s}_i, \widehat{t}_j)}+\frac{1}{n} \sum_{i=1}^{n} \frac{\frac{1}{n}\sum_{j=1}^{n}\epsilon_j  K_{H}({s}_i, {t}_j)}{\frac{1}{n}\sum_{j=1}^{n} K_{H}({s}_i, {t}_j)}\left(\frac{\frac{1}{n}\sum_{j=1}^{n} K_{H}({s}_i, {t}_j)}{\frac{1}{n}\sum_{j=1}^{n} K_{H}(\widehat{s}_i, \widehat{t}_j)}-1\right)\\
\end{aligned}
\label{eq: limiting decomp}
\end{equation}
}
The following two lemmas are needed to control $T_2$. The proofs of Lemma \ref{lem: limiting approx} and \ref{lem: limiting distribution} are presented in Section C.2 and C.3, respectively.

\begin{lemma}Suppose the assumptions of Theorem \ref{thm: asf} hold, then with probability larger than $1-n^{-C}$ for some positive constant $C>1$, for all $1\leq i\leq n,$
\begin{align}
&\frac{1}{2}f_{t}(s_i)-C\sqrt{f_{t}(s_i) \frac{\log n}{nh^3}}\leq  \frac{1}{n}\sum_{j=1}^{n} K_{H}({s}_i, {t}_j)\leq f_{t}(s_i)+C\sqrt{f_{t}(s_i) \frac{\log n}{nh^3}}\label{eq: kernel concen}\\
&\frac{1}{n}\sum_{j=1}^{n}\left|K_{H}({s}_i, {t}_j)-K_{H}(\widehat{s}_i, \widehat{t}_j)\right|\lesssim \frac{\log n}{\sqrt{n} h} \label{eq: difference concen}\\
&\left|\frac{1}{n}\sum_{j=1}^{n}\epsilon_j  K_{H}({s}_i, {t}_j)\right|\lesssim\sqrt{\frac{\log n}{nh^3}}\label{eq: kernel error}\\
&\left|\frac{1}{n}\sum_{j=1}^{n}\epsilon_j [K_{H}(\widehat{s}_i, \widehat{t}_j)-K_{H}(s_i, {t}_j)]\right|\lesssim \frac{\log n }{n^{3/4} h^2}.\label{eq: dependent concen}
\end{align}
\label{lem: limiting approx}
\end{lemma}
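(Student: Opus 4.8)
The plan is to establish the four displays \eqref{eq: kernel concen}--\eqref{eq: dependent concen} by working throughout on the high-probability event $\mathcal{A}_1\cap\mathcal{A}_2$ defined above, on which $\max_i\|\widehat{s}_i-s_i\|_\infty$ and $\max_j\|\widehat{t}_j-t_j\|_\infty$ are both $\lesssim \log n/\sqrt{n}$ by Lemma~\ref{lem2-semi} and the sub-Gaussianity of $w_i,v_i$. For each fixed index $i$ I would condition on $v_i$ (so that $s_i$ becomes deterministic) and read each displayed average as an average of the independent summands indexed by $j$; the stated $\log n$ factors then come from a union bound over $1\le i\le n$. The diagonal term $j=i$ contributes at most $h^{-3}/n$ to any such average, which is negligible relative to every target rate (since $nh^3\gg\log n$ under $h=n^{-\mu}$, $\mu<1/4$), and can be dropped throughout.

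For \eqref{eq: kernel concen} and \eqref{eq: kernel error} I would invoke Bernstein's inequality. Because $k$ is the box kernel, $K_H(s_i,t_j)\le h^{-3}$ and $\E[K_H(s_i,t_j)^2\mid v_i]\lesssim f_t(s_i)/h^3$, while the smoothness of $f_t$ in Condition~\ref{cond-semi2}(a) gives $\E[K_H(s_i,t_j)\mid v_i]=f_t(s_i)(1+o(1))$; Bernstein then supplies the fluctuation scale $\sqrt{f_t(s_i)\log n/(nh^3)}$, which together with $c_0\le f_t(s_i)\le C_0$ yields the two-sided bound \eqref{eq: kernel concen}. For \eqref{eq: kernel error} the summands $\epsilon_jK_H(s_i,t_j)$ are conditionally mean zero, since $\E[\epsilon_j\mid t_j]=0$ and $v_i\perp\epsilon_j$ for $j\ne i$; using $|\epsilon_j|\le 1$ (as $y_j\in\{0,1\}$ and $g\in(0,1)$ by Condition~\ref{cond-semi2}(b)), the same variance bound $\lesssim h^{-3}$ and Bernstein deliver the rate $\sqrt{\log n/(nh^3)}$.

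The two difference bounds exploit that the box kernel is piecewise constant. The quantity $K_H(s_i,t_j)-K_H(\widehat{s}_i,\widehat{t}_j)$ vanishes unless $t_j$ lies in an $O(\log n/\sqrt{n})$ boundary shell of the cube of half-width $h/2$ centered at $s_i$, a region of Lebesgue measure $O(h^2\log n/\sqrt{n})$, and each nonzero term is at most $h^{-3}$. For \eqref{eq: difference concen}, taking expectations gives $\lesssim h^{-3}\cdot f_t\cdot h^2\log n/\sqrt{n}=\log n/(\sqrt{n}\,h)$, and a Bernstein bound on the number of points landing in the (slightly enlarged) shell promotes this to a high-probability statement uniform in $i$. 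Since $\widehat{s}_i,\widehat{t}_j$ depend on the whole sample through $(\widehat{B},\widehat{\gam})$, I would make the shell argument uniform over the low-dimensional ball $\{\|\widehat{B}-B\|_2,\|\widehat{\gam}-\gam\|_2\lesssim\sqrt{\log n/n}\}$ via a covering argument, so that the data-dependent perturbation is absorbed.

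The main obstacle is \eqref{eq: dependent concen}, where the shell structure, the mean-zero errors, and the data dependence of the shell all interact. Bounding the shell sum crudely by (number of points)$\times h^{-3}$ only reproduces the weaker rate of \eqref{eq: difference concen}; the sharper rate $\log n/(n^{3/4}h^2)$ must come from cancellation in $\epsilon_j$. Writing $D_{ij}:=K_H(\widehat{s}_i,\widehat{t}_j)-K_H(s_i,t_j)$ and treating it momentarily as fixed, the conditional variance of $\tfrac1n\sum_j\epsilon_jD_{ij}$ is $\tfrac1{n^2}\sum_jD_{ij}^2\lesssim\tfrac1{n^2}(\sqrt{n}\,h^2\log n)\,h^{-6}=\log n/(n^{3/2}h^4)$, whose square root times a $\sqrt{\log n}$ Bernstein factor is exactly $\log n/(n^{3/4}h^2)$. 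The difficulty is that $D_{ij}$ depends on $\epsilon_j$ through $\widehat{B}$, so I would again pass to a uniform statement over the $\sqrt{\log n/n}$-ball of perturbations $(\widehat{B}-B,\widehat{\gam}-\gam)$, discretize the ball by a net of only logarithmically many points, apply Bernstein to the now genuinely independent, mean-zero sum $\tfrac1n\sum_j\epsilon_jD_{ij}$ at each net point, and control the discretization error using the $O(\log n/\sqrt{n})$ Lipschitz sensitivity of the shell in the perturbation direction. Assembling the four displays completes the proof of Lemma~\ref{lem: limiting approx}.
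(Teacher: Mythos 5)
Your overall strategy is sound and uses exactly the toolkit the paper sets up around this lemma: work on $\mathcal{A}_1\cap \mathcal{A}_2$ so that $\max_i\|\widehat{s}_i-s_i\|_\infty$ and $\max_j\|\widehat{t}_j-t_j\|_\infty$ are $O(\log n/\sqrt{n})$; condition on $v_i$ (or the full design) so the sums over $j\neq i$ have independent terms, then Bernstein plus a union bound over $i$ for \eqref{eq: kernel concen} and \eqref{eq: kernel error}; the box-kernel ``boundary shell'' of width $O(\log n/\sqrt n)$ and measure $O(h^2\log n/\sqrt n)$ for \eqref{eq: difference concen}; and for \eqref{eq: dependent concen} the recognition that the crude shell bound only reproduces \eqref{eq: difference concen}, that the conditional variance $\tfrac{1}{n^2}\sum_j D_{ij}^2\lesssim \log n/(n^{3/2}h^4)$ yields precisely $\log n/(n^{3/4}h^2)$, and that the dependence of $(\widehat{B},\widehat{\gam})$ on $\epsilon_j$ forces a uniformization over the $\sqrt{\log n/n}$-ball of perturbations. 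The rate bookkeeping in all four displays checks out, including the negligibility of the diagonal term and of the Bernstein linear terms under $h=n^{-\mu}$, $\mu<1/4$.

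Two concrete corrections. First, your claim that the net has ``only logarithmically many points'' is false. For the discretization error you need spacing $\eps$ with $h^{-1}\eps\sqrt{\log n}\lesssim \log n/(n^{3/4}h^2)$, i.e.\ $\eps\lesssim \sqrt{\log n}/(n^{3/4}h)$; with ball radius $\sqrt{\log n/n}$ the net cardinality is of order $(n^{1/4}h)^{q}=n^{(1/4-\mu)q}$, where $q$ is the (fixed) dimension of the perturbation space --- polynomial in $n$, not logarithmic. This does not break the argument, because the Bernstein tail at deviation $\sqrt{K\log n}$ times the standard deviation is $n^{-cK}$ and $K$ can be chosen to beat any fixed polynomial, but the step as you state it is wrong and the write-up must carry the polynomial count explicitly through the union bound (jointly over net points and over $i$, also for the shell-count events that certify the variance bound at each net point). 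Second, the conditional mean of the kernel is $f_t(s_i)+O(h)$, and for $\mu<1/5$ the $O(h)$ bias dominates $\sqrt{f_t(s_i)\log n/(nh^3)}$, so the upper half of \eqref{eq: kernel concen} cannot be concluded from ``$\E[K_H(s_i,t_j)\mid v_i]=f_t(s_i)(1+o(1))$'' alone; you must keep an additive $O(h)$ (or a constant-factor slack as on the lower side). This imprecision is shared with the lemma statement itself and is immaterial downstream, since the lemma is only used to show the kernel denominator is of constant order, but your proof should make it explicit. A minor simplification: for \eqref{eq: difference concen} no covering argument is needed at all, since on $\mathcal{A}_1\cap\mathcal{A}_2$ the integrand is dominated by the indicator of a \emph{deterministic} enlarged shell around $s_i$ and you are bounding absolute values; the net is genuinely needed only for \eqref{eq: dependent concen}, where cancellation in $\epsilon_j$ must be preserved.
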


\begin{lemma} Suppose the assumptions of Theorem \ref{thm: asf} hold, then
\begin{equation}
\frac{\frac{1}{n}\sum_{j=1}^{n}\epsilon_ja_j}{\sqrt{\frac{1}{n^2}\sum_{j=1}^{n} {\rm Var}(\epsilon_j\mid d_j,w_j)a^2_j}}\rightarrow N(0,1)
\label{eq: variance clt}
\end{equation}
where $\epsilon_j$ is defined in \eqref{eq: reg error definition} and $a_j$ is defined in \eqref{eq: weight definition}.
With probability larger than $1-n^{-C},$
\begin{equation}
\sqrt{\frac{1}{n^2}\sum_{j=1}^{n} {\rm Var}(\epsilon_j\mid d_j,w_j)a^2_j}\asymp \frac{1}{\sqrt{nh^2}}
\label{eq: var level}
\end{equation}
\label{lem: limiting distribution}
\end{lemma}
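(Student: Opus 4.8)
The plan is to establish both assertions by conditioning on the $\sigma$-field $\mathcal{F}=\sigma(\{d_j,w_j\}_{1\le j\le n})$ generated by the covariates and then invoking a conditional Lyapunov central limit theorem. First I would observe that the indices $t_j=((d_j,w_j^\intercal)B,v_j)^\intercal$ and $s_i=((d,w^\intercal)B,v_i)^\intercal$, and hence the kernel density $\tfrac1n\sum_{j'}K_H(s_i,t_{j'})$ and every weight $a_j$ in \eqref{eq: weight definition}, are deterministic given $\mathcal{F}$, since they are functions of the covariates together with the population quantities $B,\gamma$. Because the data are i.i.d.\ and $v_j$ is a deterministic function of $(d_j,w_j)$, conditionally on $\mathcal{F}$ the errors $\epsilon_j=y_j-g(t_j)$ are independent across $j$ with $\E[\epsilon_j\mid\mathcal{F}]=0$, $\Var(\epsilon_j\mid\mathcal{F})=\Var(\epsilon_j\mid d_j,w_j)=g(t_j)(1-g(t_j))$, and $|\epsilon_j|\le 1$ as $y_j\in\{0,1\}$. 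Thus $\tfrac1n\sum_j\epsilon_j a_j$ is, conditionally on $\mathcal{F}$, a sum of independent mean-zero bounded terms whose conditional variance equals the squared normalizer in \eqref{eq: variance clt}.

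The crux is to pin down the orders of $\sum_j a_j^2$ and $\max_j a_j$, which feed both the variance level \eqref{eq: var level} and the Lyapunov ratio. The key algebraic identity is $\sum_{j=1}^n a_j=\tfrac1n\sum_{i=1}^n\frac{\sum_{j}K_H(s_i,t_j)}{\tfrac1n\sum_{j'}K_H(s_i,t_{j'})}=n$, which holds exactly by the definition of the denominator. Next I would exploit that the first two coordinates of $s_i$ equal the fixed point $(d,w^\intercal)B$ and do not depend on $i$: consequently $a_j=0$ unless $t_j$ lies in the fixed $h$-box $J=\{j:\ |[(d,w^\intercal)B]_l-[(d_j,w_j^\intercal)B]_l|\le h/2,\ l=1,2\}$. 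A uniform empirical-process bound over axis-aligned rectangles (a VC class), with the boundedness above and below of the two-dimensional marginal density of $(d_j,w_j^\intercal)B$ near the evaluation point (from Condition \ref{cond-semi2}(a) and $f_t(s_i)\ge c_0$), yields $|J|\asymp nh^2$ with high probability, valid since $h^2\gg\sqrt{\log n/n}$ for $\mu<1/4$. For $j\in J$ I would write $a_j=h^{-3}\tfrac1n\sum_i\frac{\mathbf{1}(|v_i-v_j|\le h/2)}{\frac1n\sum_{j'}K_H(s_i,t_{j'})}$; the lower bound $\tfrac1n\sum_{j'}K_H(s_i,t_{j'})\gtrsim f_t(s_i)\gtrsim 1$ from \eqref{eq: kernel concen} together with a uniform bound $\tfrac1n\sum_i\mathbf{1}(|v_i-v_j|\le h/2)\lesssim h$ (again VC over intervals, using $f_v$ bounded and $h\gg\sqrt{\log n/n}$) give $\max_j a_j\lesssim h^{-2}$. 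Combining, $\sum_j a_j^2\le(\max_j a_j)\sum_j a_j\lesssim n/h^2$, while Cauchy–Schwarz with $\sum_j a_j=n$ and $a_j\ge0$ gives $\sum_j a_j^2\ge(\sum_j a_j)^2/|J|\gtrsim n^2/(nh^2)=n/h^2$; hence $\sum_j a_j^2\asymp n/h^2$. Since $g(1-g)$ is bounded by $1/4$ above and by a positive constant below (Condition \ref{cond-semi2}(b)), this immediately yields \eqref{eq: var level}.

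With these orders in hand I would verify the conditional Lyapunov condition with fourth moments: using $|\epsilon_j|\le1$ and $g(t_j)(1-g(t_j))\ge c_0$,
\[
\frac{\frac1{n^4}\sum_j \E[\epsilon_j^4\mid\mathcal{F}]\,a_j^4}{\left(\frac1{n^2}\sum_j\Var(\epsilon_j\mid\mathcal{F})\,a_j^2\right)^2}\ \lesssim\ \frac{\max_j a_j^2\,\sum_j a_j^2}{(\sum_j a_j^2)^2}=\frac{\max_j a_j^2}{\sum_j a_j^2}\ \lesssim\ \frac{h^{-4}}{n/h^2}=\frac{1}{nh^2}\to0,
\]
since $h=n^{-\mu}$ with $\mu<1/6$ gives $nh^2=n^{1-2\mu}\to\infty$. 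The conditional Lyapunov CLT then delivers $\frac{\frac1n\sum_j\epsilon_j a_j}{\sqrt{\frac1{n^2}\sum_j\Var(\epsilon_j\mid\mathcal{F})a_j^2}}\cid N(0,1)$ conditionally on $\mathcal{F}$, on an event of probability tending to one. Finally I would upgrade this to the unconditional statement \eqref{eq: variance clt}: the conditional characteristic function of the normalized sum converges to $e^{-u^2/2}$ in probability and is bounded by one, so bounded convergence applied to its expectation (the unconditional characteristic function) gives the conclusion.

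The step I expect to be the main obstacle is the uniform control behind $\max_j a_j\lesssim h^{-2}$ and $|J|\asymp nh^2$: these demand empirical-process bounds holding simultaneously over all $n$ centers $v_j$ and over the fixed two-dimensional box, and they rely on a uniform lower bound for the kernel denominator over $\{s_i\}$, which is precisely where \eqref{eq: kernel concen} of Lemma \ref{lem: limiting approx} enters. The remaining ingredients—the exact identity $\sum_j a_j=n$, the Lyapunov verification, and the conditional-to-unconditional passage—are routine once these uniform orders are secured.
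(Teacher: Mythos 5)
Your proposal is correct and proceeds along essentially the same lines as the paper's argument: you condition on the covariates so that the weights $a_j$ are fixed and the bounded errors $\epsilon_j$ are conditionally independent with variance $g(t_j)(1-g(t_j))$, you derive the key weight orders $\sum_{j}a_j=n$, $\max_j a_j\lesssim h^{-2}$ and $\sum_j a_j^2\asymp n/h^2$ from the box-kernel support restriction to a fixed two-dimensional $h$-box (exactly the partial-mean structure that yields the $1/\sqrt{nh^2}$ level in \eqref{eq: var level}, reusing \eqref{eq: kernel concen} for the denominator), and you then apply a conditional Lyapunov CLT and decondition via bounded convergence of characteristic functions, matching the paper's route. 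The only point worth a one-line fix: Condition \ref{cond-semi2}(b) bounds $g$ away from $0$ and $1$ at the points $s_i$, so for $j$ with $a_j\neq 0$ you should transfer this to $t_j\in\mathcal{N}_h(s_i)$ using the gradient bound in the same condition before invoking the lower bound on ${\rm Var}(\epsilon_j\mid d_j,w_j)$.
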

A combination of \eqref{eq: kernel concen} and \eqref{eq: difference concen} leads to 
\begin{equation}
\frac{1}{8}f_{t}(s_i)-C\sqrt{\left|f_{t}(s_i)\right| \frac{\log n}{nh^3}}\leq \frac{1}{n}\sum_{j=1}^{n} K_{H}(\widehat{s}_i, \widehat{t}_j)\leq f_{t}(s_i)+C\sqrt{\left|f_{t}(s_i)\right| \frac{\log n}{nh^3}}.
\label{eq: kernel est concen}
\end{equation}
Together with \eqref{eq: kernel concen}, \eqref{eq: difference concen}, \eqref{eq: kernel error} and $\min_{i} f_t(s_i)\geq c_0$ for some positive constant $c_0>0,$
\begin{equation*}
\mathbf{P}\left(\left|\frac{1}{n} \sum_{i=1}^{n} \frac{\frac{1}{n}\sum_{j=1}^{n}\epsilon_j  K_{H}({s}_i, {t}_j)}{\frac{1}{n}\sum_{j=1}^{n} K_{H}({s}_i, {t}_j)}\left(\frac{\frac{1}{n}\sum_{j=1}^{n} K_{H}({s}_i, {t}_j)}{\frac{1}{n}\sum_{j=1}^{n} K_{H}(\widehat{s}_i, \widehat{t}_j)}-1\right)\right|\gtrsim \frac{(\log n)^{3/2}}{n h^{5/2}}\right)\leq n^{-C}
\end{equation*}
By \eqref{eq: kernel est concen}, \eqref{eq: dependent concen} and $\min_{i} f_t(s_i)\geq c_0,$ we have 
\begin{equation*}
\mathbf{P}\left(\left|\frac{1}{n} \sum_{i=1}^{n} \frac{\frac{1}{n}\sum_{j=1}^{n}\epsilon_j [K_{H}(\widehat{s}_i, \widehat{t}_j)-K_{H}(s_i, {t}_j)]}{\frac{1}{n}\sum_{j=1}^{n} K_{H}(\widehat{s}_i, \widehat{t}_j)}\right|\gtrsim \frac{\log n }{n^{3/4} h^2}\right)\leq n^{-C}.
\end{equation*}
Since $n h^{4}\gg (\log n)^2,$ we have 
\begin{equation*}
\sqrt{nh^2}\left|\frac{1}{n} \sum_{i=1}^{n} \frac{\frac{1}{n}\sum_{j=1}^{n}\epsilon_j  K_{H}(\widehat{s}_i, \widehat{t}_j)}{\frac{1}{n}\sum_{j=1}^{n} K_{H}(\widehat{s}_i, \widehat{t}_j)}-\frac{1}{n} \sum_{i=1}^{n} \frac{\frac{1}{n}\sum_{j=1}^{n}\epsilon_j  K_{H}({s}_i, {t}_j)}{\frac{1}{n}\sum_{j=1}^{n} K_{H}({s}_i, {t}_j)}\right|=o_{p}(1).
\end{equation*}
Together with Lemma \ref{lem: limiting distribution}, we establish that 
\begin{equation}
\frac{\frac{1}{n} \sum_{i=1}^{n} \frac{\frac{1}{n}\sum_{j=1}^{n}\epsilon_j  K_{H}(\widehat{s}_i, \widehat{t}_j)}{\frac{1}{n}\sum_{j=1}^{n} K_{H}(\widehat{s}_i, \widehat{t}_j)}}{\sqrt{\frac{1}{n^2}\sum_{j=1}^{n} {\rm Var}(\epsilon_j)a^2_j}} \rightarrow N(0,1).
\label{eq: clt mid}
\end{equation}
\paragraph{Control of $T_3$.} We decompose $T_3$ as
{\small
\begin{equation}
\begin{aligned}
\frac{1}{n} \sum_{i=1}^{n} \frac{\sum_{j=1}^{n}[{\dg}(\widehat{s}_i)]^{\intercal}(\widehat{t}_j-\widehat{s}_i)K_{H}(\widehat{s}_i, \widehat{t}_j)}{\sum_{j=1}^{n} K_{H}(\widehat{s}_i, \widehat{t}_j)}+\frac{1}{n} \sum_{i=1}^{n} \frac{\sum_{j=1}^{n}(\widehat{t}_j-\widehat{s}_i)^{\intercal}\hg(\widehat{s}_i+c_{ij}(\widehat{t}_j-\widehat{s}_i))(\widehat{t}_j-\widehat{s}_i) K_{H}(\widehat{s}_i, \widehat{t}_j)}{\sum_{j=1}^{n} K_{H}(\widehat{s}_i, \widehat{t}_j)}
\end{aligned}
\label{eq: bias separation}
\end{equation}}
for some constant $c_{ij}\in (0,1).$ 
We show that the second term of \eqref{eq: bias separation} is the higher order term, controlled as,
\begin{equation*}
\left|\frac{1}{n} \sum_{i=1}^{n} \frac{\sum_{j=1}^{n}(\widehat{t}_j-\widehat{s}_i)^{\intercal}\hg(\widehat{s}_i+c(\widehat{t}_j-\widehat{s}_i))(\widehat{t}_j-\widehat{s}_i) K_{H}(\widehat{s}_i, \widehat{t}_j)}{\sum_{j=1}^{n} K_{H}(\widehat{s}_i, \widehat{t}_j)}\right|\leq h^2 
\end{equation*}
To establish the above inequality, we apply the boundedness assumption on the hessian $\hg$ imposed in Condition 5.3 (b) and and we use the fact that, if $K_{H}(\widehat{s}_i, \widehat{t}_j)>0$ and $C\log n/\sqrt{n}\leq h/2,$ then $\|\widehat{t}_j-s_i\|_{\infty}\leq \|\widehat{t}_j-\widehat{s}_i\|_{\infty}+\|\widehat{s}_i-s_i\|_{\infty}\leq h.$

Now we control the first term of \eqref{eq: bias separation} as
{
\begin{equation}
\begin{aligned}
&\frac{1}{n} \sum_{i=1}^{n} \frac{\sum_{j=1}^{n}[{\dg}(\widehat{s}_i)]^{\intercal}(\widehat{t}_j-\widehat{s}_i)K_{H}(\widehat{s}_i, \widehat{t}_j)}{\sum_{j=1}^{n} K_{H}(\widehat{s}_i, \widehat{t}_j)}\\
&=\frac{1}{n} \sum_{i=1}^{n} \frac{\sum_{j \neq i}[{\dg}(\widehat{s}_i)]^{\intercal}(\widehat{t}_j-\widehat{s}_i)K_{H}(\widehat{s}_i, \widehat{t}_j)}{\sum_{j=1}^{n} K_{H}(\widehat{s}_i, \widehat{t}_j)}+\frac{1}{n} \sum_{i=1}^{n} \frac{[{\dg}(\widehat{s}_i)]^{\intercal}(\widehat{t}_i-\widehat{s}_i)K_{H}(\widehat{s}_i, \widehat{t}_i)}{\sum_{j=1}^{n} K_{H}(\widehat{s}_i, \widehat{t}_j)}\\
&=\frac{1}{n} \sum_{i=1}^{n} \frac{\sum_{j \neq i}[{\dg}({s}_i)]^{\intercal}({t}_j-{s}_i)K_{H}({s}_i, {t}_j)}{\sum_{j=1}^{n} K_{H}({s}_i, {t}_j)}+\frac{1}{n} \sum_{i=1}^{n} \frac{[{\dg}(\widehat{s}_i)]^{\intercal}(\widehat{t}_i-\widehat{s}_i)K_{H}(\widehat{s}_i, \widehat{t}_i)}{\sum_{j=1}^{n} K_{H}(\widehat{s}_i, \widehat{t}_j)}\\
&+\frac{1}{n} \sum_{i=1}^{n} \frac{\sum_{j \neq i}[{\dg}(\widehat{s}_i)]^{\intercal}(\widehat{t}_j-\widehat{s}_i)K_{H}(\widehat{s}_i, \widehat{t}_j)}{\sum_{j=1}^{n} K_{H}(\widehat{s}_i, \widehat{t}_j)}-\frac{1}{n} \sum_{i=1}^{n} \frac{\sum_{j \neq i}[{\dg}({s}_i)]^{\intercal}({t}_j-{s}_i)K_{H}({s}_i, {t}_j)}{\sum_{j=1}^{n} K_{H}({s}_i, {t}_j)}.
\end{aligned}
\label{eq: decorrelation}
\end{equation}
}

We introduce the following lemma to control \eqref{eq: decorrelation}, whose proof can be found in Section C.4.
\begin{lemma}
\label{lem: bias}
 Suppose that the assumptions of Theorem \ref{thm: asf} hold. Then with probability larger than $1-n^{-C}$, for some positive constant $C>0$, 
{\small
\begin{align}
&\left|\frac{1}{n} \sum_{i=1}^{n} \frac{\sum_{j \neq i}[{\dg}({s}_i)]^{\intercal}({t}_j-{s}_i)K_{H}({s}_i, {t}_j)}{\sum_{j=1}^{n} K_{H}({s}_i, {t}_j)}\right|\lesssim h^2+\sqrt{\frac{\log n}{nh}}\label{eq: main term}\\
&\left|\frac{1}{n} \sum_{i=1}^{n} \frac{[{\dg}(\widehat{s}_i)]^{\intercal}(\widehat{t}_i-\widehat{s}_i)K_{H}(\widehat{s}_i, \widehat{t}_i)}{\sum_{j=1}^{n} K_{H}(\widehat{s}_i, \widehat{t}_j)}\right|\lesssim \frac{1}{nh^2}
\label{eq: higher order 1}\\
&\left|\frac{1}{n} \sum_{i=1}^{n} \frac{\sum_{j \neq i}[{\dg}(\widehat{s}_i)]^{\intercal}(\widehat{t}_j-\widehat{s}_i)K_{H}(\widehat{s}_i, \widehat{t}_j)}{\sum_{j=1}^{n} K_{H}(\widehat{s}_i, \widehat{t}_j)}-\frac{1}{n} \sum_{i=1}^{n} \frac{\sum_{j \neq i}[{\dg}({s}_i)]^{\intercal}({t}_j-{s}_i)K_{H}({s}_i, {t}_j)}{\sum_{j=1}^{n} K_{H}({s}_i, {t}_j)}\right|
\lesssim {\frac{\log n}{\sqrt{n}}}
\label{eq: higher order 2}
\end{align}}
\end{lemma}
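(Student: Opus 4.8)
The plan is to prove the three displays \eqref{eq: main term}, \eqref{eq: higher order 1} and \eqref{eq: higher order 2} separately, throughout restricting to the intersection of the events $\mathcal{A}_1\cap\mathcal{A}_2$ introduced in the proof of Theorem \ref{thm: asf} and the kernel-concentration event of Lemma \ref{lem: limiting approx}, which together hold with probability at least $1-n^{-C}$. On this event I may use that $\max_{i}\{\|\widehat{s}_i-s_i\|_2,\|\widehat{t}_i-t_i\|_2\}\lesssim \log n/\sqrt{n}$, that every denominator obeys $\tfrac{1}{n}\sum_{j}K_{H}(s_i,t_j)\asymp \tfrac{1}{n}\sum_{j}K_{H}(\widehat{s}_i,\widehat{t}_j)\asymp f_t(s_i)$ with $f_t(s_i)\in[c_0,C_0]$ by \eqref{eq: kernel concen} and \eqref{eq: kernel est concen}, and that $\dg$ and $\hg$ are bounded by Condition \ref{cond-semi2}(b). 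The one recurring subtlety is that the box kernel $k(x)=\mathbf{1}(|x|\le 1/2)$ is discontinuous, so a kernel difference can never be controlled pointwise by the perturbation of its arguments; I will always route such terms through the averaged estimate \eqref{eq: difference concen}.

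For the main bias term \eqref{eq: main term} I would treat the object as a $V$-statistic and condition on $v_i$. Because $s_i$ depends on the data only through $v_i$, the summands $[\dg(s_i)]^{\intercal}(t_j-s_i)K_{H}(s_i,t_j)$ over $j\neq i$ are, conditionally on $v_i$, i.i.d.\ and independent of the conditioning variable. The key computation is the conditional mean: substituting $u=(t-s_i)/h$ and Taylor-expanding $f_t$ around $s_i$, the symmetry of the box kernel gives $\int u\,k(u)\,du=0$, which kills the $O(h)$ contribution and leaves a leading term of order $h^2\,\|\triangledown f_t(s_i)\|_{\infty}$, bounded via Condition \ref{cond-semi2}(a). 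For the fluctuation around this mean, each summand is supported on $\|t_j-s_i\|_\infty\le h/2$, where $|[\dg(s_i)]^{\intercal}(t_j-s_i)K_{H}|\lesssim h^{-2}$ and the second moment is $O(h^{-1})$, so Bernstein's inequality plus a union bound over $i$ produces a uniform deviation of order $\sqrt{\log n/(nh)}$ (the accompanying $\log n/(nh^2)$ term being dominated once $h=n^{-\mu}$ with $\mu<1/6$). Dividing by the denominator $\asymp f_t(s_i)\ge c_0$ and averaging over $i$ then yields the claimed $h^2+\sqrt{\log n/(nh)}$.

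The diagonal contribution \eqref{eq: higher order 1} I would dispatch by a plain order-of-magnitude count. Whenever $K_{H}(\widehat{s}_i,\widehat{t}_i)>0$ one has $\|\widehat{t}_i-\widehat{s}_i\|_\infty\le h/2$, so $|[\dg(\widehat{s}_i)]^{\intercal}(\widehat{t}_i-\widehat{s}_i)|\lesssim h$ and $K_{H}(\widehat{s}_i,\widehat{t}_i)\le h^{-3}$, giving a numerator $\lesssim h^{-2}$; since the denominator $\sum_{j}K_{H}(\widehat{s}_i,\widehat{t}_j)\gtrsim n$, each summand is $\lesssim 1/(nh^2)$, and taking $\tfrac1n\sum_i$ over the $n$ terms preserves the bound $1/(nh^2)$.

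The de-correlation term \eqref{eq: higher order 2} is the most tedious piece, and it is where I expect the main obstacle to lie, precisely because the box kernel is not Lipschitz. I would expand the difference according to the three sources of perturbation: the gradient, controlled by $\|\dg(\widehat{s}_i)-\dg(s_i)\|_2\lesssim \lambda_{\max}(\hg)\,\|\widehat{s}_i-s_i\|_2\lesssim \log n/\sqrt{n}$; the increment, controlled by $\|(\widehat{t}_j-\widehat{s}_i)-(t_j-s_i)\|_2\lesssim \log n/\sqrt{n}$; and the kernel together with the denominators, where pointwise control fails and one must instead invoke the averaged bounds \eqref{eq: difference concen}, \eqref{eq: kernel concen} and \eqref{eq: kernel est concen} to show that only a vanishing fraction of kernel evaluations flip and that the ratio of the two denominators is $1+o(1)$. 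Collecting the three contributions, and using $h=n^{-\mu}$ with $\mu\in(0,1/6)$ so that the leftover factors of $h^{-1}$ are absorbed, each piece is of order $\log n/\sqrt{n}$, which gives \eqref{eq: higher order 2}. Assembling \eqref{eq: main term}--\eqref{eq: higher order 2} with the $O(h^2)$ Hessian remainder already bounded in the text would then complete the argument.
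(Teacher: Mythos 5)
Your treatment of \eqref{eq: main term} and \eqref{eq: higher order 1} is sound: conditioning on $v_i$ so that the summands over $j\neq i$ are i.i.d., using the symmetry of the box kernel so that the $O(h)$ moment vanishes and the conditional bias is $O(h^2)$ via the gradient bound on $f_t$ in Condition \ref{cond-semi2}(a), then applying Bernstein plus a union bound over $i$ (the extra $\log n/(nh^2)$ term is indeed negligible since $nh^3\gg \log n$), and finally the support-counting argument for the diagonal term, all close correctly. The gap is in \eqref{eq: higher order 2}. Your three-way decomposition (gradient perturbation, increment perturbation, kernel-and-denominator perturbation) is the right one, and the first two pieces do give $O(h\log n/\sqrt{n})$ and $O(\log n/\sqrt{n})$. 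But your stated mechanism for the third piece --- ``using $h=n^{-\mu}$ with $\mu\in(0,1/6)$ so that the leftover factors of $h^{-1}$ are absorbed'' --- is false. Feeding \eqref{eq: difference concen} into the sum with only a constant bound on the inner product gives a contribution of order $\log n/(\sqrt{n}\,h)=n^{\mu}\log n/\sqrt{n}$, and no bandwidth condition with $\mu>0$ shrinks this to $\log n/\sqrt{n}$: factors of $h^{-1}$ grow, they are not absorbed. This is not cosmetic. The asymptotic standard deviation in Theorem \ref{thm: asf} is itself of order $1/(\sqrt{n}h)$, so a bound $\log n/(\sqrt{n}h)$ for this bias term would not be $o(1/(\sqrt{n}h))$ and the CLT would not follow; the rate $\log n/\sqrt{n}$ claimed in the lemma is exactly what the theorem needs.

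The missing ingredient is the same support device you already use in the other two displays: the difference $K_{H}(\widehat{s}_i,\widehat{t}_j)-K_{H}({s}_i,{t}_j)$ is nonzero only if at least one of the two kernels is nonzero, and since $\max_i\{\|\widehat{s}_i-s_i\|_2,\|\widehat{t}_i-t_i\|_2\}\lesssim \log n/\sqrt{n}\ll h$ on $\mathcal{A}_1\cap\mathcal{A}_2$, in either case $\|t_j-s_i\|_{\infty}\lesssim h$, hence $\left|[\dg(s_i)]^{\intercal}(t_j-s_i)\right|\lesssim h$ wherever the kernel difference is active. This factor of $h$ exactly cancels the $h^{-1}$ in \eqref{eq: difference concen}: writing $D_i=\sum_{l}K_{H}(s_i,t_l)$ and $\widehat{D}_i=\sum_{l}K_{H}(\widehat{s}_i,\widehat{t}_l)$, the kernel-difference part is bounded by $h\cdot\big(\frac{1}{n}\sum_{j}|K_{H}(\widehat{s}_i,\widehat{t}_j)-K_{H}(s_i,t_j)|\big)\big/(\widehat{D}_i/n)\lesssim h\cdot\frac{\log n}{\sqrt{n}h}=\frac{\log n}{\sqrt{n}}$, and the denominator-ratio part is bounded by $\big|\sum_{j\neq i}[\dg(s_i)]^{\intercal}(t_j-s_i)K_{H}(s_i,t_j)\big|\cdot\big|\widehat{D}_i^{-1}-D_i^{-1}\big|\lesssim (hn)\cdot\frac{\log n}{n^{3/2}h}=\frac{\log n}{\sqrt{n}}$, using \eqref{eq: kernel concen} and \eqref{eq: kernel est concen} for the lower bounds $D_i,\widehat{D}_i\gtrsim n$. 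With this cancellation inserted in place of the ``absorption'' step, your argument for \eqref{eq: higher order 2} closes; as written, that step fails.
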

By applying Lemma \ref{lem: bias}, we have
\begin{equation}
\left|\frac{1}{n} \sum_{i=1}^{n} \frac{\sum_{j=1}^{n}[g(\widehat{t}_j)-{g}(\widehat{s}_i)]  K_{H}(\widehat{s}_i, \widehat{t}_j)}{\sum_{j=1}^{n} K_{H}(\widehat{s}_i, \widehat{t}_j)}\right|\lesssim h^2+\frac{1}{n h^2}+\sqrt{\frac{\log n}{nh}}
\label{eq: loose-bound}
\end{equation}
By combining \eqref{eq: approx error},  \eqref{eq: clt mid} and \eqref{eq: loose-bound}, we establish that, with probability larger than $1-\frac{1}{t^2}-n^{-C}-P(E_1)$ for some positive constant $C>0,$
\begin{equation*}
\left|\widehat{\phi}(d,w)-{\phi}^*(d,w)\right|\lesssim \frac{t}{\sqrt{nh^2}}+h^2+\frac{\log n}{\sqrt{n} }+\sqrt{\frac{\log n}{nh}}
\end{equation*}
This implies the first statement of Theorem \ref{thm: asf} under the bandwidth condition $h=n^{-\mu}$ for $0<\mu<1/4$.
Together with \eqref{eq: var level}, \eqref{eq: clt mid} and the bandwidth condition that $h=n^{-\mu}$ for $0<\mu<1/6$, we establish the asymptotic normality and the asymptotic variance level in Theorem \ref{thm: asf}.

\subsection{Proof of Corollary \ref{cor: cate}}
\label{ap-sec5}
The proof is similar to that of Theorem \ref{thm: asf}. The main extra step is to establish the asymptotic variance in the main paper. We introduce the following lemma as a modification of Lemma \ref{lem: limiting distribution} and present its proof in Section C.3.
\begin{lemma}
Suppose the assumptions of Corollary \ref{cor: cate} hold. Then
\begin{equation}
\frac{\frac{1}{n}\sum_{j=1}^{n}\epsilon_jc_j}{\sqrt{\frac{1}{n^2}\sum_{j=1}^{n} {\rm Var}(\epsilon_j\mid d_j,w_j)c^2_j}}\rightarrow N(0,1)
\label{eq: variance clt cate}
\end{equation}
where $\epsilon_j$ is defined in \eqref{eq: reg error definition} and $$c_j=\frac{1}{n} \sum_{i=1}^{n}\frac{ K_{H}({s}_i, {t}_j)}{\frac{1}{n}\sum_{j=1}^{n} K_{H}({s}_i, {t}_j)}-\frac{1}{n} \sum_{i=1}^{n}\frac{ K_{H}({r}_i, {t}_j)}{\frac{1}{n}\sum_{j=1}^{n} K_{H}({r}_i, {t}_j)}.$$ 
With probability larger than $1-n^{-C},$
\begin{equation}
\sqrt{\frac{{\rm V}_{\rm CATE}}{n}}=\sqrt{\frac{1}{n^2}\sum_{j=1}^{n} {\rm Var}(\epsilon_j\mid d_j,w_j)c^2_j}\asymp \frac{1}{\sqrt{nh^2}}
\label{eq: var level cate}
\end{equation}
\label{lem: limiting distribution cate}
\end{lemma}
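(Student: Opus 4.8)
The plan is to exploit the decomposition $c_j=a_j^{(s)}-a_j^{(r)}$, where $a_j^{(s)}$ is precisely the weight $a_j$ of \eqref{eq: weight definition} and $a_j^{(r)}$ is its analogue with every $s_i$ replaced by $r_i$. Each of these two families has already been analyzed in Lemma \ref{lem: limiting distribution}, so the only genuinely new ingredient is to rule out cancellation between them; the separation hypothesis $|d-d'|\cdot\max\{|B_{11}|,|B_{21}|\}\ge h$ of Corollary \ref{cor: cate} is exactly what makes this possible.

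For the central limit theorem \eqref{eq: variance clt cate} I would condition on the design $\calD=\{(d_i,w_i^{\intercal},v_i)\}_{1\le i\le n}$. For binary outcomes the errors $\epsilon_j$ of \eqref{eq: reg error definition} are, given $\calD$, independent, mean zero, and bounded by $1$, with conditional variance $g(t_j)(1-g(t_j))$ bounded away from $0$ and $1$ by Condition \ref{cond-semi2}(b), while the weights $c_j$ are $\calD$-measurable. I would then verify the Lyapunov condition: using $\max_j|c_j|\lesssim h^{-2}$ (established below) together with the variance lower bound $\frac{1}{n^2}\sum_j c_j^2\,\Var(\epsilon_j\mid\calD)\asymp (nh^2)^{-1}$, the Lyapunov ratio $\sum_j|c_j|^3/(\sum_j c_j^2)^{3/2}$ (the bounded third moments of $\epsilon_j$ being absorbed into the constant) is of order $(\sqrt{n}h)^{-1}$, which tends to $0$ under the bandwidth condition $h=n^{-\mu}$ with $\mu\in(0,1/6)$. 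The resulting conditional CLT then transfers to the unconditional statement \eqref{eq: variance clt cate} on the high-probability event where \eqref{eq: var level cate} holds.

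The core of the argument is the variance level \eqref{eq: var level cate}. Arguing exactly as for $a_j$ in Lemma \ref{lem: limiting distribution}, an index $j$ contributes to $a_j^{(s)}$ only when the first two coordinates of $t_j$, namely $(d_j,w_j^{\intercal})B$, lie within $h/2$ in $\|\cdot\|_{\infty}$ of the fixed point $(d,w^{\intercal})B$; on this ``active'' set one has $a_j^{(s)}\asymp h^{-2}$, the set has cardinality $\asymp nh^2$ by the density bounds in Condition \ref{cond-semi2}(a) and the interior assumption Condition \ref{cond-semi2}(c), and $a_j^{(s)}=0$ otherwise. The same holds for $a_j^{(r)}$ with centre $(d',w^{\intercal})B$. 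Since $s_i-r_i$ does not depend on $i$ and the separation hypothesis forces $\|s_i-r_i\|_{\infty}\ge h$, the two centres differ by at least $h$ in some coordinate, so the two boxes of half-width $h/2$ are disjoint and the two active sets are disjoint. Consequently $a_j^{(s)}a_j^{(r)}=0$ for every $j$, whence $c_j^2=(a_j^{(s)})^2+(a_j^{(r)})^2$ with no cross term, and $\sum_j c_j^2\asymp nh^{-2}+nh^{-2}\asymp nh^{-2}$. Combined with $\Var(\epsilon_j\mid d_j,w_j)\asymp 1$ from Condition \ref{cond-semi2}(b), this yields $\frac{1}{n^2}\sum_j \Var(\epsilon_j\mid d_j,w_j)c_j^2\asymp (nh^2)^{-1}$, the matching upper bound following identically.

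The main obstacle is precisely this lower bound on $\sum_j c_j^2$: a priori the two kernel-weighted averages comprising $c_j$ could overlap and nearly cancel, in which case the variance would drop below the $(nh^2)^{-1}$ scale and the normalization in \eqref{eq: variance clt cate} would break down. The separation condition is what guarantees disjoint active sets and hence the absence of cancellation; the remaining technical work is to control the random cardinalities $|\{i:|v_i-v_j|\le h/2\}|\asymp nh$ uniformly in $j$ with high probability, via a Bernstein/concentration argument of the type already used to prove \eqref{eq: kernel concen}, so that all of the $\asymp$ statements above hold simultaneously on the stated high-probability event.
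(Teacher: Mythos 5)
Your proposal is correct and follows essentially the same route as the paper: it treats the lemma as a direct modification of Lemma \ref{lem: limiting distribution}, proving \eqref{eq: variance clt cate} via a conditional Lyapunov CLT for the bounded, conditionally independent errors $\epsilon_j$ with design-measurable weights $c_j$, and proving \eqref{eq: var level cate} by the decomposition $c_j=a_j^{(s)}-a_j^{(r)}$ together with the separation condition $|d-d'|\cdot\max\{|B_{11}|,|B_{21}|\}\geq h$, which is used exactly as the paper intends—to make the two kernel active sets disjoint and thereby rule out cancellation in the variance lower bound.
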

Then we apply the above lemma together with the same arguments as Theorem \ref{thm: asf} to establish Corollary \ref{cor: cate}.

\bibliography{BinaryBib}{}

\begin{thebibliography}{}

\bibitem[\protect\citeauthoryear{Blundell and Powell}{Blundell and
  Powell}{2003}]{blundell2003endogeneity}
Blundell, R. and J.~L. Powell (2003).
\newblock Endogeneity in nonparametric and semiparametric regression models.
\newblock {\em Econometric society monographs\/}~{\em 36}, 312--357.

\bibitem[\protect\citeauthoryear{Blundell and Powell}{Blundell and
  Powell}{2004}]{BP04}
Blundell, R.~W. and J.~L. Powell (2004).
\newblock Endogeneity in semiparametric binary response models.
\newblock {\em The Review of Economic Studies\/}~{\em 71\/}(3), 655--679.

\bibitem[\protect\citeauthoryear{Bowden, Davey~Smith, and Burgess}{Bowden
  et~al.}{2015}]{Bowden15}
Bowden, J., G.~Davey~Smith, and S.~Burgess (2015).
\newblock Mendelian randomization with invalid instruments: effect estimation
  and bias detection through egger regression.
\newblock {\em International journal of epidemiology\/}~{\em 44\/}(2),
  512--525.

\bibitem[\protect\citeauthoryear{Bowden, Davey~Smith, Haycock, and
  Burgess}{Bowden et~al.}{2016}]{Bowden16}
Bowden, J., G.~Davey~Smith, P.~C. Haycock, and S.~Burgess (2016).
\newblock Consistent estimation in mendelian randomization with some invalid
  instruments using a weighted median estimator.
\newblock {\em Genetic epidemiology\/}~{\em 40\/}(4), 304--314.

\bibitem[\protect\citeauthoryear{Cai, Small, and Have}{Cai et~al.}{2011}]{CD11}
Cai, B., D.~S. Small, and T.~R.~T. Have (2011).
\newblock Two-stage instrumental variable methods for estimating the causal
  odds ratio: Analysis of bias.
\newblock {\em Statistics in medicine\/}~{\em 30\/}(15), 1809--1824.

\bibitem[\protect\citeauthoryear{Campbell and Mankiw}{Campbell and
  Mankiw}{1991}]{campbell1991response}
Campbell, J.~Y. and N.~G. Mankiw (1991).
\newblock The response of consumption to income: a cross-country investigation.
\newblock {\em European economic review\/}~{\em 35\/}(4), 723--756.

\bibitem[\protect\citeauthoryear{Caner, Han, and Lee}{Caner
  et~al.}{2018}]{caner2018adaptive}
Caner, M., X.~Han, and Y.~Lee (2018).
\newblock Adaptive elastic net gmm estimation with many invalid moment
  conditions: Simultaneous model and moment selection.
\newblock {\em Journal of Business \& Economic Statistics\/}~{\em 36\/}(1),
  24--46.

\bibitem[\protect\citeauthoryear{Carlson}{Carlson}{2021}]{carlson2021relaxing}
Carlson, A. (2021).
\newblock Relaxing conditional independence in an endogenous binary response
  model.
\newblock {\em Journal of Econometrics\/}.

\bibitem[\protect\citeauthoryear{Cheng and Liao}{Cheng and
  Liao}{2015}]{cheng2015select}
Cheng, X. and Z.~Liao (2015).
\newblock Select the valid and relevant moments: An information-based lasso for
  gmm with many moments.
\newblock {\em Journal of Econometrics\/}~{\em 186\/}(2), 443--464.

\bibitem[\protect\citeauthoryear{Chiaromonte, Cook, and Li}{Chiaromonte
  et~al.}{2002}]{CCL02}
Chiaromonte, F., R.~D. Cook, and B.~Li (2002).
\newblock Sufficient dimensions reduction in regressions with categorical
  predictors.
\newblock {\em The Annals of Statistics\/}~{\em 30\/}(2), 475--497.

\bibitem[\protect\citeauthoryear{Clarke and Windmeijer}{Clarke and
  Windmeijer}{2012}]{CW12}
Clarke, P.~S. and F.~Windmeijer (2012).
\newblock Instrumental variable estimators for binary outcomes.
\newblock {\em Journal of the American Statistical Association\/}~{\em
  107\/}(500), 1638--1652.

\bibitem[\protect\citeauthoryear{Conley, Hansen, and Rossi}{Conley
  et~al.}{2012}]{conley2012plausibly}
Conley, T.~G., C.~B. Hansen, and P.~E. Rossi (2012).
\newblock Plausibly exogenous.
\newblock {\em Review of Economics and Statistics\/}~{\em 94\/}(1), 260--272.

\bibitem[\protect\citeauthoryear{Cook}{Cook}{2009}]{cook98}
Cook, R.~D. (2009).
\newblock {\em Regression graphics: Ideas for studying regressions through
  graphics}, Volume 482.
\newblock John Wiley \& Sons.

\bibitem[\protect\citeauthoryear{Cook and Lee}{Cook and Lee}{1999}]{CL99}
Cook, R.~D. and H.~Lee (1999).
\newblock Dimension reduction in binary response regression.
\newblock {\em Journal of the American Statistical Association\/}~{\em
  94\/}(448), 1187--1200.

\bibitem[\protect\citeauthoryear{Cook and Li}{Cook and Li}{2002}]{Cook02}
Cook, R.~D. and B.~Li (2002).
\newblock Dimension reduction for conditional mean in regression.
\newblock {\em The Annals of Statistics\/}~{\em 30\/}(2), 455--474.

\bibitem[\protect\citeauthoryear{DiTraglia}{DiTraglia}{2016}]{ditraglia2016using}
DiTraglia, F.~J. (2016).
\newblock Using invalid instruments on purpose: Focused moment selection and
  averaging for gmm.
\newblock {\em Journal of Econometrics\/}~{\em 195\/}(2), 187--208.

\bibitem[\protect\citeauthoryear{Guo, Kang, Cai, and Small}{Guo
  et~al.}{2018}]{TSTH}
Guo, Z., H.~Kang, T.~T. Cai, and D.~S. Small (2018).
\newblock Confidence intervals for causal effects with invalid instruments by
  using two-stage hard thresholding with voting.
\newblock {\em Journal of the Royal Statistical Society: Series B (Statistical
  Methodology)\/}~{\em 80\/}(4), 793--815.

\bibitem[\protect\citeauthoryear{Guo and Small}{Guo and
  Small}{2016}]{guo2016control}
Guo, Z. and D.~S. Small (2016).
\newblock Control function instrumental variable estimation of nonlinear causal
  effect models.
\newblock {\em The Journal of Machine Learning Research\/}~{\em 17\/}(1),
  3448--3482.

\bibitem[\protect\citeauthoryear{Hahn and Ridder}{Hahn and
  Ridder}{2013}]{hahn2013asymptotic}
Hahn, J. and G.~Ridder (2013).
\newblock Asymptotic variance of semiparametric estimators with generated
  regressors.
\newblock {\em Econometrica\/}~{\em 81\/}(1), 315--340.

\bibitem[\protect\citeauthoryear{Hall and Li}{Hall and
  Li}{1993}]{hall1993almost}
Hall, P. and K.-C. Li (1993).
\newblock On almost linearity of low dimensional projections from high
  dimensional data.
\newblock {\em The annals of Statistics\/}, 867--889.

\bibitem[\protect\citeauthoryear{Hartwig, Davey~Smith, and Bowden}{Hartwig
  et~al.}{2017}]{Bowden17}
Hartwig, F.~P., G.~Davey~Smith, and J.~Bowden (2017).
\newblock Robust inference in summary data mendelian randomization via the zero
  modal pleiotropy assumption.
\newblock {\em International journal of epidemiology\/}~{\em 46\/}(6),
  1985--1998.

\bibitem[\protect\citeauthoryear{Hayfield and Racine}{Hayfield and
  Racine}{2008}]{np}
Hayfield, T. and J.~S. Racine (2008).
\newblock Nonparametric econometrics: The np package.
\newblock {\em Journal of Statistical Software\/}~{\em 27\/}(5).

\bibitem[\protect\citeauthoryear{Imbens and Rubin}{Imbens and
  Rubin}{2015}]{imbens2015causal}
Imbens, G.~W. and D.~B. Rubin (2015).
\newblock {\em Causal inference in statistics, social, and biomedical
  sciences}.
\newblock Cambridge University Press.

\bibitem[\protect\citeauthoryear{Kang, Zhang, Cai, and Small}{Kang
  et~al.}{2016}]{Kang16}
Kang, H., A.~Zhang, T.~T. Cai, and D.~S. Small (2016).
\newblock Instrumental variables estimation with some invalid instruments and
  its application to mendelian randomization.
\newblock {\em Journal of the American Statistical Association\/}~{\em
  111\/}(513), 132--144.

\bibitem[\protect\citeauthoryear{Karoui}{Karoui}{2008}]{Karoui08}
Karoui, N.~E. (2008).
\newblock {Spectrum estimation for large dimensional covariance matrices using
  random matrix theory}.
\newblock {\em The Annals of Statistics\/}~{\em 36\/}(6), 2757 -- 2790.

\bibitem[\protect\citeauthoryear{Koles{\'a}r, Chetty, Friedman, Glaeser, and
  Imbens}{Koles{\'a}r et~al.}{2015}]{kolesar2015identification}
Koles{\'a}r, M., R.~Chetty, J.~Friedman, E.~Glaeser, and G.~W. Imbens (2015).
\newblock Identification and inference with many invalid instruments.
\newblock {\em Journal of Business \& Economic Statistics\/}~{\em 33\/}(4),
  474--484.

\bibitem[\protect\citeauthoryear{Li}{Li}{1991}]{Li91}
Li, K.-C. (1991).
\newblock Sliced inverse regression for dimension reduction.
\newblock {\em Journal of the American Statistical Association\/}~{\em
  86\/}(414), 316--327.

\bibitem[\protect\citeauthoryear{Li and Guo}{Li and Guo}{2022}]{supp2}
Li, S. and Z.~Guo (2022).
\newblock Online supplement to "causal inference for nonlinear outcome models
  with possibly invalid instrumental variables".
\newblock \url{https://github.com/saili0103/SpotIV}.

\bibitem[\protect\citeauthoryear{Liao}{Liao}{2013}]{liao2013adaptive}
Liao, Z. (2013).
\newblock Adaptive gmm shrinkage estimation with consistent moment selection.
\newblock {\em Econometric Theory\/}~{\em 29\/}(5), 857--904.

\bibitem[\protect\citeauthoryear{Linton and Nielsen}{Linton and
  Nielsen}{1995}]{Linton95}
Linton, O. and J.~P. Nielsen (1995).
\newblock A kernel method of estimating structured nonparametric regression
  based on marginal integration.
\newblock {\em Biometrika\/}, 93--100.

\bibitem[\protect\citeauthoryear{Mammen, Rothe, and Schienle}{Mammen
  et~al.}{2012}]{mammen2012nonparametric}
Mammen, E., C.~Rothe, and M.~Schienle (2012).
\newblock Nonparametric regression with nonparametrically generated covariates.
\newblock {\em The Annals of Statistics\/}~{\em 40\/}(2), 1132--1170.

\bibitem[\protect\citeauthoryear{Murray}{Murray}{2006}]{murray2006avoiding}
Murray, M.~P. (2006).
\newblock Avoiding invalid instruments and coping with weak instruments.
\newblock {\em Journal of economic Perspectives\/}~{\em 20\/}(4), 111--132.

\bibitem[\protect\citeauthoryear{Newey}{Newey}{1994}]{Newey94}
Newey, W.~K. (1994).
\newblock Kernel estimation of partial means and a general variance estimator.
\newblock {\em Econometric Theory\/}~{\em 10\/}(2), 1--21.

\bibitem[\protect\citeauthoryear{Neyman}{Neyman}{1923}]{Neyman23}
Neyman, J.~S. (1923).
\newblock On the application of probability theory to agricultural experiments.
  essay on principles.
\newblock {\em Annals of Agricultural Sciences\/}~{\em 10}, 1--51.

\bibitem[\protect\citeauthoryear{Petrin and Train}{Petrin and
  Train}{2010}]{petrin2010control}
Petrin, A. and K.~Train (2010).
\newblock A control function approach to endogeneity in consumer choice models.
\newblock {\em Journal of marketing research\/}~{\em 47\/}(1), 3--13.

\bibitem[\protect\citeauthoryear{Rivers and Vuong}{Rivers and
  Vuong}{1988}]{Rivers88}
Rivers, D. and Q.~H. Vuong (1988).
\newblock Limited information estimators and exogeneity tests for simultaneous
  probit models.
\newblock {\em Journal of econometrics\/}~{\em 39\/}(3), 347--366.

\bibitem[\protect\citeauthoryear{Rosenbaum and Rubin}{Rosenbaum and
  Rubin}{1983}]{RB83}
Rosenbaum, P.~R. and D.~B. Rubin (1983).
\newblock The central role of the propensity score in observational studies for
  causal effects.
\newblock {\em Biometrika\/}~{\em 70\/}(1), 41--55.

\bibitem[\protect\citeauthoryear{Rothe}{Rothe}{2009}]{Rothe09}
Rothe, C. (2009).
\newblock Semiparametric estimation of binary response models with endogenous
  regressors.
\newblock {\em Journal of Econometrics\/}~{\em 153\/}(1), 51--64.

\bibitem[\protect\citeauthoryear{Rubin}{Rubin}{1974}]{Rubin74}
Rubin, D.~B. (1974).
\newblock Estimating causal effects of treatments in randomized and
  nonrandomized studies.
\newblock {\em Journal of educational Psychology\/}~{\em 66\/}(5), 688.

\bibitem[\protect\citeauthoryear{Tsybakov}{Tsybakov}{2008}]{tsybakov2008introduction}
Tsybakov, A.~B. (2008).
\newblock {\em Introduction to nonparametric estimation}.
\newblock Springer Science \& Business Media.

\bibitem[\protect\citeauthoryear{Vansteelandt, Bowden, Babanezhad, and
  Goetghebeur}{Vansteelandt et~al.}{2011}]{Vans11}
Vansteelandt, S., J.~Bowden, M.~Babanezhad, and E.~Goetghebeur (2011).
\newblock On instrumental variables estimation of causal odds ratios.
\newblock {\em Statistical Science\/}~{\em 26\/}(3), 403--422.

\bibitem[\protect\citeauthoryear{Wasserman}{Wasserman}{2006}]{wasserman2006all}
Wasserman, L. (2006).
\newblock {\em All of nonparametric statistics}.
\newblock Springer Science \& Business Media.

\bibitem[\protect\citeauthoryear{Windmeijer, Farbmacher, Davies, and
  Davey~Smith}{Windmeijer et~al.}{2019}]{Wind19}
Windmeijer, F., H.~Farbmacher, N.~Davies, and G.~Davey~Smith (2019).
\newblock On the use of the lasso for instrumental variables estimation with
  some invalid instruments.
\newblock {\em Journal of the American Statistical Association\/}~{\em
  114\/}(527), 1339--1350.

\bibitem[\protect\citeauthoryear{Windmeijer, Liang, Hartwig, and
  Bowden}{Windmeijer et~al.}{2019}]{Wind19b}
Windmeijer, F., X.~Liang, F.~P. Hartwig, and J.~Bowden (2019).
\newblock The confidence interval method for selecting valid instrumental
  variables.
\newblock Technical report, Department of Economics, University of Bristol, UK.

\bibitem[\protect\citeauthoryear{Wooldridge}{Wooldridge}{2010}]{wooldridge2010econometric}
Wooldridge, J.~M. (2010).
\newblock {\em Econometric analysis of cross section and panel data}.
\newblock MIT press.

\bibitem[\protect\citeauthoryear{Wooldridge}{Wooldridge}{2015}]{wooldridge2015control}
Wooldridge, J.~M. (2015).
\newblock Control function methods in applied econometrics.
\newblock {\em Journal of Human Resources\/}~{\em 50\/}(2), 420--445.

\bibitem[\protect\citeauthoryear{Xie}{Xie}{2012}]{xie2012china}
Xie, Y. (2012).
\newblock China family panel studies (2010) user’s manual.
\newblock {\em Beijing: Institute of Social Science Survey, Peking
  University\/}.

\bibitem[\protect\citeauthoryear{Zhu, Miao, and Peng}{Zhu et~al.}{2006}]{Zhu06}
Zhu, L., B.~Miao, and H.~Peng (2006).
\newblock On sliced inverse regression with high-dimensional covariates.
\newblock {\em Journal of the American Statistical Association\/}~{\em
  101\/}(474), 630--643.

\bibitem[\protect\citeauthoryear{Zhu and Fang}{Zhu and Fang}{1996}]{ZF96}
Zhu, L.-X. and K.-T. Fang (1996).
\newblock Asymptotics for kernel estimate of sliced inverse regression.
\newblock {\em The Annals of Statistics\/}~{\em 24\/}(3), 1053--1068.

\end{thebibliography}
\bibliographystyle{chicago}

\end{document}